\documentclass[11pt]{article}
\usepackage{fullpage}
\usepackage{epsfig}
\usepackage{graphics}
\usepackage{latexsym}
\usepackage{amsmath}
\usepackage{amsfonts}
\usepackage{amssymb}
\usepackage{mathrsfs}
\usepackage{amsthm}
\usepackage{xspace}
\usepackage{epstopdf}
\usepackage{float}
\usepackage{hyperref}
\usepackage{pgfplots}
\usepackage{thmtools}
\usepackage{thm-restate}

\numberwithin{equation}{section}
\usepackage[ruled,vlined]{algorithm2e}
\SetArgSty{textrm}

\usepackage[english]{babel}
\usepackage[nottoc]{tocbibind}

\usepackage{caption}
\usepackage{subcaption}

\usepackage[usenames,dvipsnames]{xcolor}
\usepackage{pifont}
\usepackage{booktabs}
\usepackage{multirow}

\newcommand{\bluecomment}[1]{\textcolor{blue}{\textrm{#1}}}

\newcommand{\ma}[1]{\textcolor{magenta}{\textrm{#1}}}
\newcommand{\redcomment}[1]{\textcolor{red}{\textrm{#1}}}

\def\Top{\mathrm{Top}}
\def\CM{\mathrm{CM}}
\def\SC{\mathrm{SC}}
\def\OPT{\mathrm{OPT}}

\usepackage{amsthm}     % for using \theoremstyle{plain}, {definition}, {remark}
\theoremstyle{plain}     % default. it sets the text in italic and adds extra space above and below the
\newtheorem{theorem}{Theorem}

\newtheorem{conjecture}{Conjecture}

\newtheorem{lemma}{Lemma}

\theoremstyle{definition} % adds extra space above and below, but sets the text in roman.
\theoremstyle{remark} % Font is set in roman, with no additional space above or below.
\makeatletter
\@addtoreset{equation}{section}
\def\section{\@startsection {section}{1}{\z@}{-3.5ex plus -1ex minus
 -.2ex}{2.3ex plus .2ex}{\large\bf}}
\makeatother

\def\bfm#1{\mbox{\boldmath$#1$}}

\def\0{\bfm 0}

\newcommand{\med}{\text{med}}

\DeclareMathAlphabet{\mathpzc}{OT1}{pzc}{m}{it}

\newcounter{my}

\newcounter{my2}

\newcounter{my3}

\newcounter{my4}

\newcounter{my5}

\newcounter{my6}

\allowdisplaybreaks 

\begin{document}

\title{Strategyproof Mechanisms for Euclidean Facility Location Problems under $L_p$-norm Social Cost}

\date{}
\maketitle

\vspace{-3em}
\begin{center}

\author{Hau Chan$^{1}$\quad Jianan Lin$^{2}$\quad Chenhao Wang $^{3,4}$\\
${}$\\
$1$ University of Nebraska-Lincoln\\
$2$ Rensselaer Polytechnic Institute\\
$3$ Beijing Normal University-Zhuhai\\
$4$ Beijing Normal-Hong Kong Baptist University\\
\medskip
}
\end{center}

\begin{abstract}
We study strategyproof mechanisms for eliciting agents' location preferences truthfully in the Euclidean plane $\mathbb R^2$ and locating a facility so as to minimize the $L_p$-norm social cost, defined as the $L_p$-norm of the vector of distances from the facility to the agents’ preferred locations,
 for any $p \ge 1$. 
While the cases  $p=1$ and $p=\infty$ have been well-studied, open questions remain about the optimal approximation ratios achievable by strategyproof mechanisms for general 
$p$.

Our first result resolves an open question of Goel and Hann-Caruthers [Soc. Choice Welf. 2023]. 
They showed that the coordinate-wise median (CM) mechanism achieves an approximation ratio lying  between \(2^{1-\frac{1}{p}}\) and \(2^{\frac{3}{2}-\frac{2}{p}}\) for $p\ge 2$, and they conjectured that it is exactly \(2^{1-\frac{1}{p}}\). 
We confirm this conjecture, and we further show that CM has a tight $\sqrt 2$-approximation for $1\le p\le 2$. Since it is previously known that the CM mechanism has the optimal approximation ratio among all deterministic anonymous strategyproof mechanisms for all $p\ge 1$, we complete the picture of deterministic mechanisms.

Our second and third results demonstrate that two randomized mechanisms can yield better approximation ratios.
In particular, we first consider the uniformly rotated coordinate-wise median (URCM) mechanism, 
and prove that, for \(1\le p<2\), its approximation ratio strictly improves over the deterministic bound \(\sqrt{2}\), while no such improvement is possible for $p\ge 2$. 
We then study the centroid random dictatorship  mechanism that returns the average location (i.e., centroid) and the random dictatorship each with half probability, 
and show that its approximation ratio strictly improves over CM and URCM for every finite \(p\gtrsim 1.6\). In fact, this approximation ratio holds for any dimensional Euclidean space. 
Moreover, our analysis independently recovers the classical deterministic and randomized results  for $p=1$ [Meir, SAGT 2019] [Barak, EC 2026] and $p=\infty$ [Goel and Hann-Caruthers, SCW 2023] [Tang et al., EC 2020] using significantly different techniques. 

We further show that the robustness of these mechanisms is not limited to \(L_p\)-norm social costs. Under every symmetric monotone norm objective, CM and URCM both obtain a \(2\)-approximation guarantee in the Euclidean plane. Moreover, the CRD analysis extends to arbitrary real normed vector spaces, where it yields a tight \(2-\frac1n\) approximation.
\end{abstract}

\section{Introduction}

\emph{Approximate mechanism design without money} aims to address optimization problems \cite{procaccia2013approximate} in which \emph{strategic} agents provide parts of the input to the optimization problems and objectives.  
Examples of these optimization problems include facility location \cite{procaccia2013approximate}, allocation of items \cite{guo2010strategy}, generalized assignment \cite{fadaei2017generalized}, voting \cite{filos2014truthful}, matching \cite{chen2011mechanism}, and scheduling \cite{koutsoupias2014scheduling}. 
Because strategic agents can often misreport input information to manipulate the solutions to optimization problems to their own benefit, the algorithmic or mechanism designer seeks to design \emph{strategyproof} mechanisms that elicit truthful input information from the agents and compute optimal solutions. 
Unfortunately, because the use of money is not allowed, strategyproof mechanisms for certain optimization problems must trade solution optimality for strategyproofness in order to incentivize agents to report information truthfully. 
That is, any strategyproof mechanism can only guarantee a certain approximation ratio (i.e., the ratio between the mechanism's solution objective value and the optimal value for any problem instance is bounded by that ratio). 
Therefore, a major research goal in approximate mechanism design without money is to design strategyproof mechanisms with approximation ratios as low as possible. 

%its output solution is only \emph{approximately optimal}). 
%attains an approximation ratio guarantee, and its solution can only be \emph{approximately optimal}. 
%the solutions from any strategyproof mechanisms can only be \emph{approximately optimal}, and the mechanisms provide solutions with approximation guarantees. 

\paragraph{The Canonical Case Study: Facility Location in $\mathbb R$.}
Procaccia and Tennenholtz \cite{procaccia2013approximate}, who first initiated the study of approximate mechanism design without money, examined simplified versions of facility location problems to illustrate the trade-off between solution optimality and strategyproofness in the design of strategyproof mechanisms. 
In the simplest facility location problem, a social planner seeks to determine an optimal location for a facility (e.g., a park, school, or library) on the real line $\mathbb R$ to serve a set of $n$ agents, where optimality is measured by a cost objective defined as either the (total) sum or the maximum of the distances between the facility location and the agents' preferred facility locations. 
Under the total cost, \cite{procaccia2013approximate} showed that the median mechanism, which locates the facility at the median of the agents' preferred facility locations, is both optimal and strategyproof. 
Unfortunately, under the maximum cost, \cite{procaccia2013approximate} showed that no strategyproof mechanism can achieve an approximation ratio better than 2. % (i.e., an impossibility lower bound result showing that the ratio cannot be lower than 2 with respect to the maximum cost). 
To complement the impossibility result, \cite{procaccia2013approximate} showed that the leftmost mechanism, which locates the facility at the leftmost of the agents' preferred locations, is strategyproof and achieves an approximation ratio of 2. 
To further improve the approximation ratio, \cite{procaccia2013approximate} proposed a randomized mechanism, which locates the facility at the leftmost, rightmost, and their average locations with probabilities $\frac14$, $\frac14$, and $\frac12$, respectively. The mechanism achieves an approximation of 1.5, which is the best any randomized strategyproof mechanism can achieve. 
Beyond the initial work, other studies examine the design of strategyproof mechanisms for settings that consider various multiple-facility locations \cite{fotakis2013strategyproof,fotakis2013winner}, location spaces (e.g., trees, cycles, or high-dimensional spaces) \cite{alon2010strategyproof,DBLP:conf/sagt/Meir19,GoelH23,gravin2025approximation}, alternative cost objectives \cite{feigenbaum2017approximately,GoelH23,walsh2025equitable}, and agent preferences \cite{kanellopoulos2026constrained,cheng2013strategy,fong2018facility,fang2025heterogeneous}. 
We refer readers to a survey \cite{chan2021mechanismsurvey} for more thorough coverage. 

\paragraph{Our Study: Euclidean Facility Location in $\mathbb R^2$.} 
One of the most natural and realistic extensions of the classical settings is the Euclidean facility location problem, in which facility locations and agents' preferences lie in the Euclidean plane $\mathbb R^2$.  
Under the $L_1$-norm social cost (i.e., total cost) and $L_{\infty}$-norm social cost (i.e., maximum cost), the coordinate-wise median (CM) mechanism, which places the facility at the median of each coordinate, is known to have approximation ratios of \(\sqrt{2}\) \cite{DBLP:conf/sagt/Meir19} and 2 \cite{GoelH23}, respectively.
%Moreover, Gravin and Jia \cite{gravin2025approximation} showed that \ma{pending} the CM mechanism has a larger approximation ratio of $\sqrt{6\sqrt{3}-8}$ under the $L_1$-norm social cost, but their general (larger) approximation ratio results extend to arbitrary dimensions.
Leveraging randomization to improve, Barak \cite{barak2026facility} and Tang et al. \cite{tang2020characterization} provided randomized mechanisms with approximation ratios of \(\frac{4}{\pi}\) and \(2-\frac{1}{n}\) under the  $L_1$-norm and $L_{\infty}$-norm social costs, respectively.

As for the general $L_p$-norm social cost with $p\ge 1$, Goel and Hann-Caruthers \cite{GoelH23} showed that the CM mechanism has the lowest approximation ratio among all deterministic, anonymous and strategyproof mechanisms, lying between \(2^{1-\frac{1}{p}}\) and \(2^{\frac{3}{2}-\frac{2}{p}}\) when \(p\ge 2\). They further conjecture that it is equal to \(2^{1-\frac{1}{p}}\).
Beyond these existing mechanism design results, no stronger results are known for the general $L_p$-norm social cost in two-dimensional Euclidean facility location problems. 
Therefore, in this paper, we strive to address the following main question:

\begin{quote}
%\emph{Do there exist strategyproof mechanisms that match or obtain better approximation ratios for Euclidean facility location problems under the $L_p$-norm social cost?}
\emph{What are the optimal approximation ratios of deterministic or randomized strategyproof 
mechanisms for Euclidean facility location under the $L_p$-norm social cost?}
\end{quote}

We address the above question affirmatively by showing that the CM mechanism has exactly an approximation ratio  \(2^{1-\frac{1}{p}}\) (thus closing the gap left in \cite{GoelH23}) and there exist two randomized mechanisms that achieve better approximation ratios than \(2^{1-\frac{1}{p}}\) across various values of $p$. 
%As a result, we effectively address the conjecture stated in \cite{GoelH23} regarding \(2^{1-\frac{1}{p}}\) approximation ratio and independently recover the randomized mechanism results for $p=1$ and $p=\infty$ using significantly different techniques \cite{barak2026facility}. 

\subsection{Our Contributions}\label{subsec:contribution}

We study the design of deterministic and randomized strategyproof mechanisms for 2-dimensional Euclidean facility location problems under the \(L_p\)-norm social cost with $p\ge 1$.  The distance between two points $x,y\in\mathbb R^2$ is the Euclidean distance \(d(x,y)=\|x-y\|_2\) and we use $\|x-y\|$ for convenience.
Moreover, given a location profile $\mathbf{x}=(x_1, \ldots, x_n)\in(\mathbb R^2)^n$  of $n$ agents and a facility location \(y\), the \(L_p\)-norm social cost is defined as 
\[
    \mathrm{SC}_p(\mathbf{x},y)
    =
    \left(
        \sum_i \|x_i-y\|^p
    \right)^{\frac{1}{p}},
\]
with the usual interpretation
\(\mathrm{SC}_{+\infty}(\mathbf{x},y)=\max_i\|x_i-y\|\). 
Thus, \(p=1\) corresponds to the total cost, and \(p=+\infty\) corresponds to the maximum cost. 

We discuss three strategyproof mechanisms and their approximation ratios. 
Denote by $\alpha_p(f)$ the worst-case approximation ratio of mechanism $f$ under the $L_p$-norm social cost. % that address the question mentioned above. 
%We present the following results using formal statements from the latter part of the paper. %but without formal proofs. % statements are given using restatable environments. \hau{what does restatable environment mean?} 
%In the technical sections of the paper (Section \ref{sec:cm} - \ref{sec:crd}), we restate the mechanisms and theorems using the same numbers and labels and formally prove the claimed results. 

\paragraph{Coordinate-wise Median Mechanism.} We first consider a classic and most well-studied deterministic mechanism in social choice and facility location problems.

%More formally, the coordinate-wise median mechanism is defined as follows. 

\begin{restatable}[Coordinate-wise Median]{mechanism}{cmmechanism}
\label{mec:cm}
Given a profile \(\mathbf{x}=(x_1,\ldots,x_n)\in(\mathbb R^2)^n\), write
\(x_i=(x_{i,1},x_{i,2})\).
For any real numbers \(a_1,\ldots,a_n\), let \(\med(a_1,\ldots,a_n)\) be the \(\lceil \frac{n}{2}\rceil\)-th smallest number.
The coordinate-wise median mechanism, denoted by
\(\mathrm{CM}\), outputs
\[
    \mathrm{CM}(\mathbf{x})
    =
    \left(
        \med\left(x_{1,1},\ldots,x_{n,1}\right),
        \med\left(x_{1,2},\ldots,x_{n,2}\right)
    \right).
\]
\end{restatable}

%Our first result demonstrates that a deterministic mechanism achieves the factor \(2^{1-\frac{1}{p}}\), effectively addressing the open question in \cite{GoelH23}. Specifically, 
Goel and Hann-Caruthers \cite{GoelH23} showed that, for \(p\ge 2\), the approximation ratio of the CM mechanism lies between \(2^{1-\frac{1}{p}}\) and \(2^{\frac{3}{2}-\frac{2}{p}}\), and conjectured that it is exactly \(2^{1-\frac{1}{p}}\). 
%Meir \cite{DBLP:conf/sagt/Meir19} proved that it is a \(\sqrt{2}\)-approximation for \(p=1\). 
We close this gap and show that CM achieves the approximation ratio of \(2^{1-\frac{1}{p}}\) for all \(p\ge 2\). We further prove a tight $\sqrt{2}$-approximation for all $1\le p\le 2$ which extends the corresponding result of Meir \cite{DBLP:conf/sagt/Meir19} when $p=1$.
Together with the optimality of CM shown in \cite{GoelH23}, our results indicate that the best possible approximation ratio of any deterministic anonymous strategyproof mechanism is \(2^{1-\frac{1}{p}}\) for all \(p\ge 2\) and $\sqrt 2$ for $1\le p\le 2$.
The following theorem formally states the approximation ratios. 

\begin{restatable}{theorem}{cmtheorem}
\label{thm:cm-tight}
The coordinate-wise median mechanism achieves a tight approximation ratio under the \(L_p\)-norm social cost in $\mathbb R^2$:
\[
    \alpha_p(\mathrm{CM})
    =
    \begin{cases}
        \sqrt{2}, & 1\le p\le 2,\\[3pt]
        2^{1-\frac{1}{p}}, & 2\le p\le +\infty.\\[3pt]
    \end{cases}
\]
\end{restatable}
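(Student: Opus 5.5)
We prove matching upper and lower bounds. The lower bounds come from explicit instances. For $p\ge 2$, use the instance of \cite{GoelH23} giving $2^{1-\frac1p}$: put roughly half the agents at $(1,0)$ and half at $(0,1)$, with one agent at the origin to break ties. Then $\mathrm{CM}$ outputs the origin, while the midpoint $(\tfrac12,\tfrac12)$ is at distance $\tfrac{\sqrt2}{2}$ from both clusters. A Meir-type construction that is tuned to $p$ pushes the ratio to $2^{1-\frac1p}$, and the same construction becomes the $\sqrt2$ instance as $p\to2$.

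For $1\le p\le 2$, use Meir's family for $p=1$. Place $k$ agents at each of $(1,0)$ and $(0,1)$ and one agent at the origin, and then move mass so that most agents sit at a single far point while $\mathrm{CM}$ is still forced to a corner of the bounding box. The aim is $\|\mathrm{CM}-x_i\|\approx\sqrt2\,\|y^*-x_i\|$ uniformly over the agents that carry the cost. A clean choice is agents on the two rays $\{(t,0)\}$ and $\{(0,t)\}$ with a configuration symmetric about the diagonal. If each agent's distance ratio tends to $\sqrt2$ uniformly, the $L_p$ ratio tends to $\sqrt2$ for every $p$. I will verify the limit directly.

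The upper bound reduces to a two-dimensional worst case. Translate so that $\mathrm{CM}(\mathbf x)=0$. Then each closed coordinate half-plane $\{z_1\ge0\}$, $\{z_1\le0\}$, $\{z_2\ge0\}$, $\{z_2\le0\}$ contains at least half the agents. Let $y$ be any candidate, say $y=(a,b)$ with $a,b\ge0$ after a reflection. I will show
\[
\sum_i\|x_i\|^p\le C_p^p\sum_i\|x_i-y\|^p,
\]
where $C_p$ is the claimed ratio.

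The plan is a pointwise charging scheme. Using the median property, build a bipartite matching (a fractional one, for $n$ odd) between the agents with $x_{i,1}\le0$ and the agents with $x_{i,2}\le0$, on one hand, and the remaining agents on the other. The bound then reduces to a local inequality for pairs. For agents $u,v$ with $u$ in the "far" quadrant region relative to $y$, it suffices to show
\[
\|u\|^p+\|v\|^p\le C_p^p\left(\|u-y\|^p+\|v-y\|^p\right).
\]
A cleaner route, which I would try first, is to choose a reference point. Every $x$ with $x_1\le0$ satisfies $\|x-y\|\ge\|x\|\cos\theta$-type bounds, and more usefully $\|x-y\|^2\ge\|x\|^2+a^2-2bx_2$. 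Handle the half-planes one at a time and average the two coordinate directions, using $\|x-y\|^2\ge$ a combination of $(x_1-a)^2$ and $(x_2-b)^2$. The $\sqrt2$ arises from $\|x\|^2\le 2\max(x_1^2,x_2^2)$. For $p\le2$ the loss from power-mean convexity is $\sqrt2$. For $p\ge2$, Jensen applied to $t\mapsto t^{p/2}$ gives the sharper $2^{1-\frac1p}$, coming from $(\alpha+\beta)^{p/2}$ versus $2^{p/2-1}(\alpha^{p/2}+\beta^{p/2})$ combined with pairing.

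The main obstacle is the pairing inequality. Reducing the $n$-agent statement to a worst case with few agent locations should go through a convexity argument: fix $\mathrm{CM}$ and $y$, note that the ratio is maximized at extreme profiles, and move agents along rays from $y$, which keeps the median constraints intact while increasing the ratio. After that reduction, the remaining optimization is over a one- or two-parameter family, which I would settle by calculus in the threshold case $p=2$ and then by monotonicity in $p$. If the pairing fails, my fallback is a Lagrangian/weighted argument. I would seek weights $\lambda_i\in[0,1]$ summing appropriately per half-plane such that $\|x\|^p\le C_p^p\|x-y\|^p+\lambda(x)\,g(y)$, with $\sum_i\lambda(x_i)g(y)\le0$ forced by the median conditions.
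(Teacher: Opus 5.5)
\textbf{The upper bound for $p\ge 2$ is missing, and the route you lean on cannot give it.} This is the open part of the statement.

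If you split $\|x_i-\mathrm{CM}(\mathbf x)\|$ into coordinates via Jensen for $t\mapsto t^{p/2}$, you lose $2^{p/2-1}$. The one-dimensional median inequality loses another $2^{p-1}$. Returning to Euclidean norms through $|u|^p+|v|^p\le (u^2+v^2)^{p/2}$ is free. The product is $2^{3p/2-2}$, i.e.\ the ratio $2^{3/2-2/p}$. That is exactly the old bound of \cite{GoelH23}.

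Your pairing does not rescue this. A matching that only separates the open positive quadrant from agents having some nonpositive coordinate does not make your pair inequality true. For example, take $p=2$, $u=(1,1)$, $v=(1,0)$, $y=(1,\tfrac12)$: then $\|u\|^2+\|v\|^2=3>1=2(\|u-y\|^2+\|v-y\|^2)$.

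The missing idea is to pair \emph{opposite} quadrants. Duplicate the profile if needed so $n$ is even. Give zero coordinates artificial signs so that each coordinate has exactly $n/2$ signs of each kind. Then $|(++)|=|(--)|$ and $|(+-)|=|(-+)|$, and pairing these classes gives $a\cdot b\le 0$ for every pair. Hence $\|a\|^2+\|b\|^2\le \|a-b\|^2$, and
\[
(\|a\|^p+\|b\|^p)^{1/p}\le(\|a\|^2+\|b\|^2)^{1/2}\le\|a-b\|\le\|a-z\|+\|b-z\|\le 2^{1-\frac1p}(\|a-z\|^p+\|b-z\|^p)^{1/p}
\]
for $p\ge 2$, using the triangle inequality and H\"older in the last two steps. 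Summing over pairs gives $2^{1-1/p}$, and the same chain with maxima handles $p=+\infty$. It works with whole vectors and never splits coordinates, which is why it beats $2^{3/2-2/p}$.

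Your fallback reduction to extreme profiles is not a substitute. Moving an agent along a ray from $y$ can carry it across an axis and change $\mathrm{CM}$. You also give no argument that the ratio increases along such moves; far along the ray, the agent's own distance ratio tends to $1$.

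\textbf{For $1\le p\le 2$, the sketch is too vague and misplaces the source of $\sqrt2$.} For $p>1$, the power-mean step alone loses only $2^{1/p-1/2}$. You need the one-dimensional inequality $\sum_i|s_i-m|^p\le 2^{p-1}\sum_i|s_i-t|^p$, applied separately in each coordinate. It is proved by pairing points on opposite sides of the median, then using the triangle inequality and H\"older. This step is sandwiched between $\|w\|^p\le |w_1|^p+|w_2|^p$ and $|w_1|^p+|w_2|^p\le 2^{1-p/2}\|w\|^p$. The losses then multiply to $2^{p/2}$.

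\textbf{Lower bounds.} Your first instance ($k$ agents at each of $(1,0)$ and $(0,1)$, one at the origin) already gives ratio tending to $\sqrt2$ for every $p$, so no mass-moving is needed for $p\le 2$. For $p>2$, however, that instance gives only $\sqrt2<2^{1-1/p}$, and ``a Meir-type construction tuned to $p$'' is not an instance. Take $k$ agents at the origin and $k$ at $(1,1)$. The lower-median rule puts $\mathrm{CM}$ at the origin, so $\mathrm{ALG}_p^p=k\,2^{p/2}$. The point $(\tfrac12,\tfrac12)$ gives $\mathrm{OPT}_p^p\le 2k\,2^{-p/2}$. The ratio is therefore at least $2^{1-1/p}$, and at least $2$ at $p=+\infty$.
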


Very recently, concurrent and independent work by Hastings~\cite{hastings2026strategic}
also studied the coordinate-wise median mechanism under \(L_p\)-norm social cost.
Hastings considers the more general setting of arbitrary-dimensional
\(\ell_q\) spaces and, in particular, obtains tight bounds for all \(p,q\ge1\)
when \(d=2\). Our work was developed independently and, beyond resolving the
Euclidean-plane case, focuses on randomized mechanisms that improve over CM
for several ranges of \(p\).

\iffalse
\ma{pending}The proof separates the two regimes \(1\le p\le 2\) and \(p\ge 2\). For the former, we relate the \(\ell_2\) and \(\ell_p\) norms in the plane, which together with a standard one-dimensional median inequality yields a \bluecomment{bound $\sqrt 2$.} %\(2^{\frac{p}{2}}\) bound on the \(p\)-th power, giving \(\sqrt2\). 
For the latter, we center the instance at the median and pair agents with opposite coordinate signs; using the geometric fact that for \(a\cdot b\le0\) the \(\ell_2\) norm is superadditive in squares, we combine the triangle and H\"older inequalities to obtain the tight factor \(2^{1-\frac1p}\).
\fi

\paragraph{Uniformly Rotated Coordinate-wise Median Mechanism.}
We formalize and analyze a random-rotation idea suggested by Goel and Hann-Caruthers \cite{GoelH23}, which is also independently explored recently by Barak \cite{barak2026facility} for $p=1$. 
The mechanism samples a uniformly random coordinate system and then applies the coordinate-wise median mechanism within the new coordinate system. 
More formally, for each angle \(\theta\in[0,2\pi)\), let $e_\theta=(\cos\theta,\sin\theta)$ and $e_\theta^\perp=(-\sin\theta,\cos\theta)$.
These two vectors form an orthonormal basis of \(\mathbb{R}^2\): \(e_\theta\) is the unit vector
pointing in direction \(\theta\), and \(e_\theta^\perp\) is the unit vector obtained by rotating
\(e_\theta\) counterclockwise by \(\frac{\pi}{2}\). Equivalently, \(\{e_\theta,e_\theta^\perp\}\) is the
standard coordinate system rotated by angle \(\theta\).
% The  mechanism %(also called RRCWM in \cite{barak2026facility}) 
%  first samples a rotation angle \(\Theta\sim\mathrm{Unif}[0,2\pi)\); then express every reported location \(x_i\in\mathbb R^2\) in the rotated coordinate system \(\{e_\Theta,e_\Theta^\perp\}\), take the median of the first coordinates and the median of the second coordinates, and finally map the resulting pair of medians back to a point in the plane.

\begin{restatable}[Uniformly Rotated Coordinate-Wise Median (\(\mathrm{URCM}\))]{mechanism}{urcmmechanism}
\label{mec:urcm}
Given a profile \(\mathbf{x}=(x_1,\ldots,x_n)\in(\mathbb R^2)^n\), sample
\(\Theta\sim\mathrm{Unif}[0,2\pi)\). For any \(x_i\), its coordinates in the rotated basis
\(\{e_\Theta,e_\Theta^\perp\}\) are
\(\bigl(\langle x_i,e_\Theta\rangle,\langle x_i,e_\Theta^\perp\rangle\bigr)\), where $\langle\cdot,\cdot\rangle$ is the inner product of vectors.
Let
\[
    m_{\Theta,1}
    =
    \med\left(
        \left\langle x_1,e_\Theta\right\rangle,\ldots,
        \left\langle x_n,e_\Theta\right\rangle
    \right),
    \qquad
    m_{\Theta,2}
    =
    \med\left(
        \langle x_1,e_\Theta^\perp\rangle,\ldots,
        \langle x_n,e_\Theta^\perp\rangle
    \right)
\]
be the median of the first and the second coordinates, respectively.
%where \(\med(\cdot)\) still denotes the median of a list of real numbers, with a fixed tie-breaking rule when \(n\) is even. 
The mechanism finally maps the resulting pair of medians back to the point in $\mathbb R^2$: %outputs the random point
\[
    \mathrm{URCM}(\mathbf{x})
    =
    m_{\Theta,1}e_\Theta + m_{\Theta,2}e_\Theta^\perp .
\]
\end{restatable}

%Our second result shows that there is a randomized mechanism that achieves better approximation ratios than \(2^{1-\frac{1}{p}}\) for \(1\le p<2\).

We show that this randomization strictly improves the approximation ratio $\sqrt 2$ of the CM mechanism for \(1\le p<2\), %including matching the approximation ratio of the standard social cost case \(p=1\) of \cite{barak2026facility}, 
while no improvement is obtained for \(p\ge 2\). The following theorem formally states the approximation ratios as Gamma-function expressions. % of the URCM mechanism for all values of $p\ge 1$. 

\begin{restatable}{theorem}{urcmtheorem}
\label{thm:urcm}
The uniformly rotated coordinate-wise median mechanism is randomized and strategyproof. Its approximation ratio
under the $L_p$-norm social cost in $\mathbb R^2$ is bounded by:
\[
    \alpha_p(\mathrm{URCM})
    \le
    \begin{cases}
        \displaystyle
        2\left(
            \dfrac{
                \Gamma\left(\frac{p+1}{2}\right)
            }{
                \sqrt{\pi}\,\Gamma\left(1+\frac{p}{2}\right)
            }
        \right)^{\frac{1}{p}},
        & 1\le p<2,\\[14pt]
        2^{1-\frac{1}{p}},
        & 2\le p\le +\infty.\\[5pt]
    \end{cases}
\]
In addition, for every \(1\le p\le 2\),
\[
    \alpha_p(\mathrm{URCM})
    \ge
    % 2^{1-\frac{1}{p}}
    \frac{2^{3-\frac{1}{p}}}{\pi}
    \int_0^{\frac{\pi}{4}}
    \left(\cos^p t+\sin^p t\right)^{\frac{1}{p}}\,dt.
\]
Consequently, the bounds are tight at \(p\in \{1, 2,+\infty\}\). In particular,
\[
    \alpha_1(\mathrm{URCM})=\frac{4}{\pi},
    \qquad
    \alpha_2(\mathrm{URCM})=\sqrt{2}
    \qquad
    \alpha_{+\infty}(\mathrm{URCM}) = 2.
\]
For \(2<p<+\infty\), the upper bound is asymptotically tight.
% The bounds for $p=1$ and \(2\le p<+\infty\) are asymptotically tight; and for $p=+\infty$, it is tight.
\end{restatable}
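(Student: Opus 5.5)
The plan is to treat the two regimes separately, and in both to reduce to a fixed rotation angle $\Theta=\theta$. Strategyproofness is immediate: for every fixed $\theta$, URCM is CM in the orthonormal frame $\{e_\theta,e_\theta^\perp\}$. Each coordinate median is strategyproof along that axis, and the Euclidean distance splits into the two orthogonal components. A mixture of strategyproof mechanisms is strategyproof in expectation.

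For $2\le p\le+\infty$ the upper bound follows directly from Theorem~\ref{thm:cm-tight}. The social cost $\mathrm{SC}_p$ is invariant under rotations. Hence for every fixed $\theta$ the rotated CM satisfies $\mathrm{SC}_p(\mathbf x,\mathrm{URCM}_\theta(\mathbf x))\le 2^{1-\frac1p}\,\mathrm{OPT}_p(\mathbf x)$, and taking expectations over $\Theta$ preserves this bound.

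For $1\le p<2$ I would first use Jensen, $\mathbb E[\mathrm{SC}_p]\le(\mathbb E[\mathrm{SC}_p^p])^{1/p}$, and aim for
\[
\mathbb E_\Theta\sum_i\|x_i-m_\Theta\|^p\le 2^p\,C_p\sum_i\|x_i-y^*\|^p,\qquad C_p=\mathbb E|\cos\Theta|^p=\frac{\Gamma(\frac{p+1}{2})}{\sqrt\pi\,\Gamma(1+\frac p2)},
\]
where $y^*$ is an optimal location, placed at the origin. Since $p\le2$, I would write $\|x_i-m_\theta\|^p=(a_i^2+b_i^2)^{p/2}\le|a_i|^p+|b_i|^p$, with $a_i,b_i$ the rotated coordinate gaps to the two medians. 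The problem then reduces to one-dimensional median inequalities along $e_\theta$ and $e_\theta^\perp$. Here $|\langle x_i-y^*,e_\theta\rangle|=\|x_i-y^*\|\,|\cos(\varphi_i-\theta)|$, and averaging over $\theta$ produces exactly the factor $C_p$ for each coordinate. The target then forces the one-dimensional inequality to carry a factor $2^{p-1}$ per coordinate. A naive triangle inequality plus a ``half the points lie beyond the median'' argument gives a worse constant.

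\textbf{Main obstacle.} The step I expect to be hardest is this sharp one-dimensional estimate. I would try pairing agents lying on opposite sides of the median (as in Goel--Hann-Caruthers) so that the shift $|c-\med|$ is charged to pairs rather than to single agents. If the bound cannot be obtained coordinate by coordinate, I would keep the two coordinates together and integrate over $\theta$ before applying the median argument. The point of doing so is that the worst-case frames for the two coordinates are different angles.

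For the lower bound with $1\le p\le2$, I would take two agents, at $0$ and at a unit vector $z$; then $\mathrm{OPT}_p=2^{\frac1p-1}$, attained at the midpoint. In the frame $\theta$ the median of two values is the smaller one. So $m_\theta$ has coordinates $\min(0,\langle z,e_\theta\rangle)$ and $\min(0,\langle z,e_\theta^\perp\rangle)$, and its distances to the two agents are $|\cos|$- and $|\sin|$-type quantities of the relative angle. Computing $\mathrm{SC}_p$ per $\theta$ and folding the circle by symmetry onto $[0,\frac\pi4]$ should give $\frac{2^{3-\frac1p}}{\pi}\int_0^{\pi/4}(\cos^pt+\sin^pt)^{1/p}dt$; if needed, I would replicate the agents to keep the medians well defined.

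The stated special values then follow by evaluation. At $p=1$ both bounds equal $4/\pi$, since $2C_1=4/\pi$ and the integral equals $\frac{\pi}{8}\cdot\frac4\pi\cdot$ the matching constant. At $p=2$ both equal $\sqrt2$. For $p=+\infty$ I would take many agents spread uniformly on a circle: with high probability the rotated medians give ratio tending to $2$, which also yields asymptotic tightness for $2<p<\infty$ with many agents.
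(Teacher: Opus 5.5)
\textbf{The upper bounds are fine.} Your upper-bound route is the paper's. For a fixed angle you use $(a^2+b^2)^{p/2}\le|a|^p+|b|^p$, then a one-dimensional median inequality with constant $2^{p-1}$ on each rotated axis, then average $|\cos\Theta|^p$ and apply Jensen; for $p\ge2$ you apply the rotated CM bound. The step you flag as the main obstacle is closed directly by the pairing you suggest. Center at the median and duplicate to make $n$ even. Each opposite-signed pair $u,v$ then satisfies $|u|^p+|v|^p\le(|u|+|v|)^p=|u-v|^p\le(|u-t|+|v-t|)^p\le 2^{p-1}\bigl(|u-t|^p+|v-t|^p\bigr)$, which is Lemma~\ref{lem:one-dimensional-median}. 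No fallback is needed.

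\textbf{Gap 1: the lower bound for $1\le p<2$.} Your two-agent instance does not give the claimed value. Whenever $\langle z,e_\theta\rangle$ and $\langle z,e_\theta^\perp\rangle$ have the same sign (half of all angles), the coordinatewise minimum is one of the two agents. The realized cost there is exactly $1$, not $(|\cos|^p+|\sin|^p)^{1/p}$. Write $I_p=\int_0^{\pi/4}(\cos^pt+\sin^pt)^{1/p}\,dt$. Your instance gives ratio $2^{1-\frac1p}\bigl(\frac12+\frac{2}{\pi}I_p\bigr)$, whereas the statement needs $2^{1-\frac1p}\cdot\frac{4}{\pi}I_p$. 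Since $I_p>\frac{\pi}{4}$ for $p<2$, you fall strictly short for every $p<2$. At $p=1$ you get only $\frac12+\frac2\pi<\frac4\pi$, so $\alpha_1(\mathrm{URCM})=\frac4\pi$ is not established. Replicating agents changes nothing, since the lower median is still the minimum, and one extra agent creates a majority.

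The missing idea is a tie-breaker that forces the rotated median onto a ``cross corner'' (one rotated coordinate taken from each cluster) for almost every angle, at negligible cost to $\mathrm{OPT}$. The paper puts $k=M^3$ agents at each of $(1,0)$ and $(0,1)$ and one agent at $(-M,-M)$. Outside a set of angles of measure $O(1/M)$, the two cluster distances are then $\cos\theta+\sin\theta$ and $|\cos\theta-\sin\theta|$. The outlier adds only $O(M^p)=o(k)$ to $\mathrm{OPT}_p^p$.

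\textbf{Gap 2: tightness for $p\ge2$.} The circle instance fails outright. For agents spread evenly (or at random) on a circle, every rotated coordinate median is within $o(1)$ of the center's coordinate. So URCM essentially outputs the center, which is optimal, and the ratio tends to $1$, not to $2$ or $2^{1-\frac1p}$. You need an instance that rotation cannot affect, e.g.\ $k+1$ agents at the origin and $k$ at $u=(1,1)$. The origin is a strict majority on every rotated axis, so URCM always outputs it. Comparing with $u/2$ gives ratio at least $2^{1-\frac1p}\bigl(1-\frac1n\bigr)^{\frac1p}$. This equals $2$ at $p=+\infty$ and tends to $2^{1-\frac1p}$ for finite $p$.

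Of the tightness claims, only $p=2$ survives from your proposal: there $I_2=\frac{\pi}{4}$, and your two-agent instance does give $\sqrt2$.
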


For \(p=1\), the bound is equal to \(2\frac{\Gamma(1)}{\sqrt{\pi}\Gamma(3/2)}=\frac{4}{\pi}\). As \(p\) approaches 2 from below, the bound increases  to \(\sqrt{2}\). Hence,
for every \(1\le p<2\), URCM strictly improves over the deterministic bound \(\sqrt{2}\). %the deterministic coordinate-wise median.
The key reason is that, for \(p < 2\), the \(p\)-th power of the two
coordinate projections of a vector depends on the choice of coordinate
axes. A bad fixed coordinate system may overestimate the contribution of
some directions, but URCM avoids committing to one such system by averaging
uniformly over all rotations.
% For a fixed vector \(v\), after a random rotation, the two projected lengths
% are proportional to \(|\cos \Theta|\) and \(|\sin \Theta|\). Therefore, the
% expected contribution of \(v\) is governed by the average value of
% \(|\cos \Theta|^p\) over a uniformly random angle \(\Theta\). This average is
% exactly
% \[
% \frac{\Gamma\left(\frac{p+1}{2}\right)}
% {\sqrt{\pi}\,\Gamma\left(1+\frac{p}{2}\right)},
% \]
% obtained by evaluating the integral of \(|\cos \theta|^p\) over the circle
% using the standard beta-gamma identity. This is where the Gamma-function
% expression in the approximation ratio comes from.
At \(p = 2\), this averaging advantage disappears: for every rotation, the
sum of the squared projected lengths is exactly the squared Euclidean length
of the vector. Thus randomizing the coordinate system no longer reduces the
relevant quantity. For \(p \ge 2\), every fixed rotated coordinate-wise median
already satisfies the deterministic bound \(2^{1-1/p}\), so averaging over
rotations cannot improve.
%This improvement comes from randomly rotating the coordinate axes before applying the coordinate-wise median. We expand the Euclidean distance in the rotated basis, apply the same one-dimensional median inequality per coordinate, and then take expectations over the uniform rotation, which reduces to computing \(\mathbb{E}[|\cos\Theta|^p]\) and yields the Gamma-function expression. For \(p\ge 2\),  for every fixed rotation the deterministic bound \(2^{1-\frac{1}{p}}\) holds, so the random rotation cannot improve it. %a lower-bound instance with a majority at the origin confirms asymptotic tightness.

Very recently, an independent work of Barak \cite{barak2026facility} also studied the URCM mechanism and %with a focus on the $L_1$-norm social cost in high-dimensional Euclidean spaces. 
 proved an upper bound \(\frac{4}{\pi}\) for the $L_1$-norm social cost in the plane $\mathbb R^2$. 
%We briefly clarify the relationship between our work and \cite{barak2026facility}.  The two papers independently analyze the same randomized mechanism in Euclidean spaces,  referred to as RRCWM in \cite{barak2026facility} and as URCM in this paper. Both papers  prove an upper bound of \(\frac{4}{\pi}\) for the \(L_1\)-norm social cost in the Euclidean plane. 
Aside from this common result, however, the main emphases of the two papers are quite different. 
Our work studies three mechanisms under the general \(L_p\)-norm social cost in $\mathbb R^2$, while 
\cite{barak2026facility} is primarily concerned with $L_1$-norm social cost in high-dimensional Euclidean spaces,  covering both the classical and learning-augmented settings.

\paragraph{Centroid Random Dictatorship Mechanism.}

Our third result shows that another randomized mechanism achieves better approximation ratios than the URCM mechanism for \(p\) larger than around $1.60$. 
In particular, we analyze the randomized mechanism of Feldman and Wilf \cite{feldman2013strategyproof} (on the line) and Tang et al. \cite{tang2020characterization} (in multi-dimensional normed vector spaces), which incorporates the notion of a centroid into random dictatorships as follows. 

 %for maximum cost, i.e., $p=+\infty$.

 %,  denoted by \(\mathrm{CRD}\)
%They did not give the mechanism a specific name; we call it centroid random dictatorship, denoted by \(\mathrm{CRD}\). 
%We derive approximation ratios for all \(p\ge 1\).

\begin{restatable}[Centroid Random Dictatorship (CRD)]{mechanism}{crdmechanism}
\label{mec:crd}
Given a profile \(\mathbf{x}=(x_1,\ldots,x_n)\in(\mathbb R^2)^n\), let
\(\bar x=\frac{1}{n}\sum_i x_i\) be the centroid. Return \(\bar x\) with probability
\(\frac{1}{2}\), and return each agent location \(x_j\) with probability
\(\frac{1}{2n}\).
\end{restatable}

Tang et al. \cite{tang2020characterization} showed that the CRD mechanism is strategyproof and achieves a tight \(\left(2-\frac{1}{n}\right)\)-approximation when $p=+\infty$ (i.e., maximum cost). 
 %refer to such a mechanism as the centroid random dictatorship \(\mathrm{(CRD)}\) mechanism and 
In the following we generalize it for all values of \(p\ge 1\). %including recovering the result of \cite{tang2020characterization}.  

\begin{restatable}{theorem}{crdtheorem}
\label{thm:crd}
 The centroid random dictatorship mechanism is randomized and
strategyproof. Its approximation ratio under the $L_p$-norm social cost in $\mathbb R^d$ ($d\ge 1$) is bounded by
%satisfies
\[
    \alpha_p(\mathrm{CRD})
    \le
    \begin{cases}
        \displaystyle
        \frac{1}{2}q_n^{\frac{2}{p}-1}
        \left(
            2^{\frac{2}{p}-1}+2^{\frac{1}{p}}
        \right),
        & 1\le p\le 2,\\[12pt]
        \displaystyle
        \frac{1}{2}
        \left(
            2^{1-\frac{2}{p}}q_n^{1-\frac{2}{p}}
            +
            2^{1-\frac{1}{p}}
        \right),
        & 2\le p\le +\infty,\\[12pt]
        % \displaystyle
        % 2-\frac{1}{n},
        % & p=+\infty.
    \end{cases}
\]
where  \(q_n=1-\frac{1}{n}\).
The endpoint bounds at \(p=1,+\infty\) are tight; the bound at $p=2$ is tight for every even $n$; and when \(n=2\) the bound is tight for every \(1\le p\le+\infty\). 
% \hau{we only have tight sentence here; what about for the other theorems? do you need to mention about setting qn?}
\end{restatable}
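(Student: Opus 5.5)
The plan is to write the expected cost of $\mathrm{CRD}$ as
\[
\mathbb E[\mathrm{SC}_p(\mathbf x,Y)]=\tfrac12\,\mathrm{SC}_p(\mathbf x,\bar x)+\tfrac1{2n}\sum_j \mathrm{SC}_p(\mathbf x,x_j),
\]
and bound each of the two terms separately by a constant times $\OPT=\mathrm{SC}_p(\mathbf x,y^*)$, where $y^*$ is an optimal location. Strategyproofness is already known from Tang et al.\ \cite{tang2020characterization}, and their argument works in any normed space, so I would only cite it.

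Translate so that $y^*=0$ and set $u_i=x_i$, so $\OPT=\|(u_i)\|_{\ell_p(\mathbb R^d)}$ with $\|(u_i)\|_{\ell_p}=(\sum_i\|u_i\|^p)^{1/p}$. Both terms are then norms of \emph{linear} images of $u=(u_1,\dots,u_n)$.
\begin{itemize}
\item The centroid term is $\|Au\|_{\ell_p}$ with $A=I-\frac1nJ$ acting coordinate-free on $\mathbb R^d$-valued sequences.
\item For the dictatorship term, I first apply the power-mean (Jensen) inequality:
\[
\frac1n\sum_j \mathrm{SC}_p(\mathbf x,x_j)\le\Bigl(\frac1n\sum_j\sum_i\|u_i-u_j\|^p\Bigr)^{1/p}=\|Bu\|_{L_p(\mu)}.
\]
Here $B:u\mapsto(u_i-u_j)_{i,j}$, and $\mu$ is counting measure in $i$ times the uniform probability measure in $j$.
\end{itemize}
The key tool is then the Riesz--Thorin interpolation theorem for operators on Hilbert-space-valued $L_p$ spaces. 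It holds for $\mathbb R^d$-valued functions with the Euclidean norm, since $\mathbb R^d$ can be complexified, or one can argue via the scalar case. This reduces everything to computing operator norms at the endpoints $p\in\{1,2,\infty\}$.

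The endpoint estimates I will verify are as follows.
\begin{itemize}
\item \emph{Case $p=2$.} Since the centroid minimizes the sum of squared distances, $\|Au\|_2^2=\sum\|u_i-\bar u\|^2\le\sum\|u_i\|^2$, so $\|A\|\le1$. Also
\[
\frac1n\sum_{i,j}\|u_i-u_j\|^2=2\sum_i\|u_i-\bar u\|^2\le 2\|u\|_2^2,
\]
so $\|B\|\le\sqrt2$.
\item \emph{Case $p=\infty$.} We have $\|u_i-\bar u\|\le\frac1n\sum_{j\ne i}(\|u_i\|+\|u_j\|)\le 2q_n\max\|u_k\|$, so $\|A\|\le 2q_n$. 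Trivially $\|B\|\le 2$.
\item \emph{Case $p=1$.} The column sums of $A$ give $\sum_i\|u_i-\bar u\|\le\sum_i\bigl(q_n\|u_i\|+\frac1n\sum_{j\ne i}\|u_j\|\bigr)=2q_n\|u\|_1$. Similarly,
\[
\frac1n\sum_{i\neq j}(\|u_i\|+\|u_j\|)=2q_n\|u\|_1 .
\]
So both $\|A\|$ and $\|B\|$ are at most $2q_n$.
\end{itemize}

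Now I interpolate, separately on each range of $p$.
\begin{itemize}
\item \emph{Range $1\le p\le2$.} Write $\frac1p=(1-\theta)+\frac\theta2$, so $\theta=2-\frac2p$. Then $\|A\|\le(2q_n)^{2/p-1}$ and $\|B\|\le(2q_n)^{2/p-1}2^{1-1/p}=q_n^{2/p-1}2^{1/p}$. Averaging with weights $\frac12,\frac12$ gives exactly $\frac12q_n^{2/p-1}(2^{2/p-1}+2^{1/p})$.
\item \emph{Range $2\le p\le\infty$.} Write $\frac1p=\frac\theta2$. Then $\|A\|\le(2q_n)^{1-2/p}$ and $\|B\|\le 2^{\theta/2}2^{1-\theta}=2^{1-1/p}$. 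This yields the second formula.
\end{itemize}
At $p=\infty$ the bound is $2-\frac1n$, which recovers Tang et al. I expect the main obstacle to be justifying the vector-valued Riesz--Thorin step cleanly. If it causes trouble, the fallback is a direct proof: apply H\"older's inequality to $\|u_i-\bar u\|^p$, writing it as the $(2/p)$-power of a square times the remaining power of the max (for $p\ge2$), and analogously for $p\le2$.

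For tightness I would use explicit instances.
\begin{itemize}
\item For $n=2$, take $x_1=-x_2$ with $y^*=0$. Then $\bar x=y^*$ and the dictator cost is $2^{1-1/p}\cdot$(scale). I would check that this matches the formula for every $p$, since $q_2=\frac12$ and, for example, $(2q_2)^{2/p-1}=1$.
\item At $p=2$ with even $n$, split the agents into halves at $\pm e$.
\item At $p=1$, put $n-1$ agents at one point and one agent far away.
\item At $p=\infty$, take the instance of Tang et al.
\end{itemize}
In each case I would compute both terms directly and confirm that they match the upper bound.
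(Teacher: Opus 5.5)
Your proposal is correct and follows essentially the same route as the paper: the same split into the centroid term $\|TX\|_p$ and a Jensen-bounded pairwise-difference term $\|SX\|_p$, the same endpoint operator bounds at $p\in\{1,2,\infty\}$ combined by Riesz--Thorin, and the same kinds of tight instances. Your unnormalized measures yield the same operator norms as the paper's normalized ones. Your note on justifying the vector-valued Riesz--Thorin step goes slightly beyond the paper, which simply invokes it: one complexifies, using that the endpoint bounds are dimension-free and so also hold on $\mathbb R^{2d}\cong\mathbb C^d$.
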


These bounds are derived through a norm-operator analysis. After translating the instance so that the optimal facility is at the origin,  we represent the profile by the agents’ translated locations. Then the expected cost of CRD naturally decomposes into two terms: the cost of the centroid part and the cost of the random-dictatorship part. We introduce two norm operators $T$ and $S$: the centering operator $T$ controls the first term, and the pairwise-difference operator $S$ controls the second term.

% The analysis rewrites the expected cost as \(\frac12\|TX\|_p + \frac12\|SX\|_p\), where \(T\) and \(S\) are the centering and pairwise-difference operators acting on the translated agent vectors. We obtain exact norm bounds for \(p=1,2,\infty\) and then interpolate between them via the Riesz–Thorin theorem, leading to the two-part formulas for \(1\le p\le2\) and \(p\ge2\). %The tightness examples use extreme one-dimensional configurations and, for \(n=2\), match the bound for all \(p\). 
% \redcomment{[move it later]}

\paragraph{Symmetric Monotone Norm Objectives.}
Although our main results focus on the \(L_p\)-norm social cost, we also
consider a broader class of objectives in Section~\ref{sec:general-norm}.
For a symmetric monotone norm \(g:\mathbb R^n\to\mathbb R_+\), we define
\(\SC_g(\mathbf{x},y)=g(D(\mathbf{x},y))\), where
\(D(\mathbf{x},y)\) is the vector of distances from the facility location
\(y\) to the agents. We show that, in the Euclidean plane, both CM and URCM
are \(2\)-approximations for every such objective. We further show that the
CRD guarantee is even more robust: for every symmetric monotone norm
objective, CRD achieves a tight \(\left(2-\frac1n\right)\)-approximation in
arbitrary real normed vector spaces. Here, a real normed vector space means a
real vector space \(V\) equipped with a norm \(\|\cdot\|\), where the distance
between two points is induced by \(d(x,y)=\|x-y\|\). Since every \(L_p\)-norm
on the distance vector is symmetric and monotone, these results provide
uniform fallback guarantees beyond the \(L_p\)-specific analyses, while the
sharper bounds for CRD in Theorem~\ref{thm:crd} rely on the Euclidean
structure.

\paragraph{Comparison.}
Table~\ref{tab:upper-bounds} summarizes the upper bounds of the three mechanisms, where \(q_n=1-\frac{1}{n}\).
Figure~\ref{fig:upper-bounds-large-n} visualizes these upper bounds when \(n\to+\infty\), where \(q_n\) converges to \(1\).

\begin{table}[!htb]
\centering
\small
\renewcommand{\arraystretch}{1.35}
\caption{Upper bounds on the approximation ratios \(\alpha_p\). }
\label{tab:upper-bounds}
\begin{tabular}{c|c|c|c|c}
\toprule
Mechanism
&
\(1\le p<2\)
&
\(p=2\)
&
\(2<p<+\infty\)
&
\(p=+\infty\)
\\
\midrule
\(\mathrm{CM}\)
&
\(\sqrt{2}\)
&
\(\sqrt{2}\)
&
\(\displaystyle 2^{1-\frac{1}{p}}\)
&
\(2\)
\\
\addlinespace
\(\mathrm{URCM}\)
&
\(\displaystyle
2\left(
    \frac{
        \Gamma\left(\frac{p+1}{2}\right)
    }{
        \sqrt{\pi}\,\Gamma\left(1+\frac{p}{2}\right)
    }
\right)^{\frac{1}{p}}\)
&
\(\sqrt{2}\)
&
\(\displaystyle 2^{1-\frac{1}{p}}\)
&
\(2\)
\\
\addlinespace
\(\mathrm{CRD}\)
&
\(\displaystyle
\frac{1}{2}q_n^{\frac{2}{p}-1}
\left(
    2^{\frac{2}{p}-1}+2^{\frac{1}{p}}
\right)\)
&
\(\displaystyle \frac{1+\sqrt{2}}{2}\)
&
\(\displaystyle
\frac{1}{2}
\left(
    2^{1-\frac{2}{p}}q_n^{1-\frac{2}{p}}
    +
    2^{1-\frac{1}{p}}
\right)\)
&
\(\displaystyle 2-\frac{1}{n}\)
\\
\bottomrule
\end{tabular}
\end{table}

\begin{figure}[!htb]
    \centering
    \includegraphics[width=0.9\textwidth]{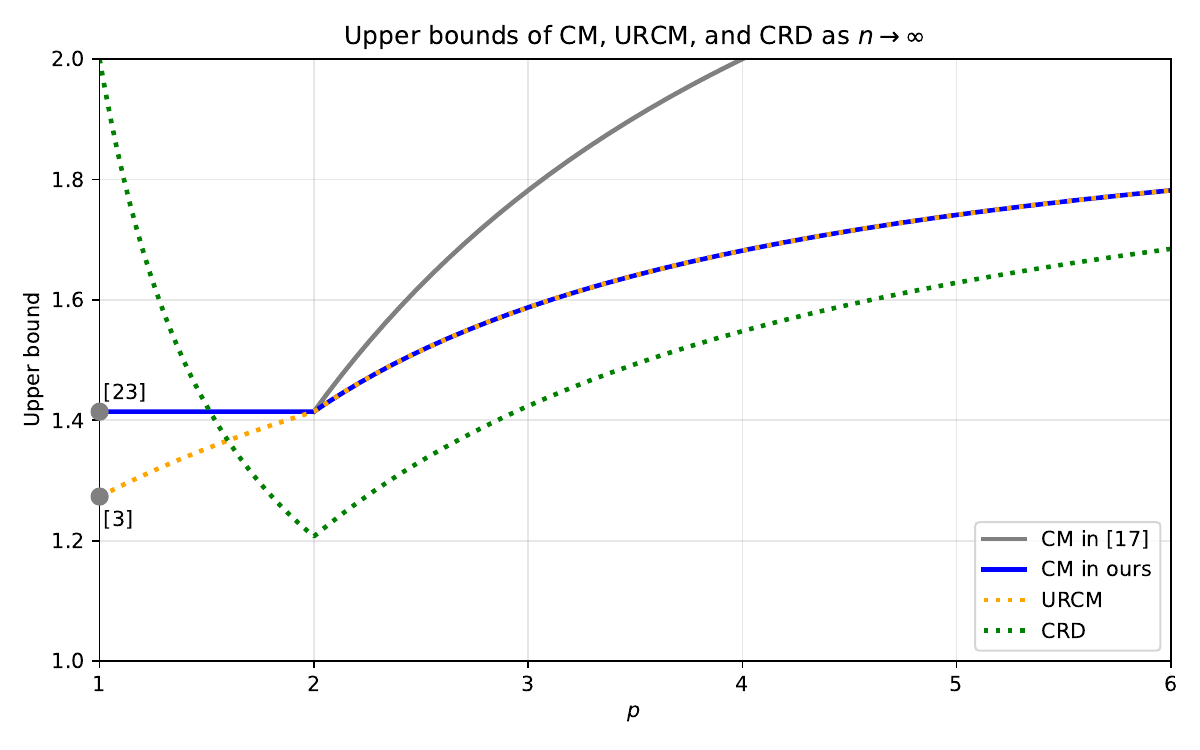}
    \caption{Upper bounds of \(\mathrm{CM}\), \(\mathrm{URCM}\), and
    \(\mathrm{CRD}\) with \(n\to+\infty\). %The horizontal axis is \(p\), and the vertical axis is the correspondin approximation upper bound. 
    The gray curve for $p\ge 2$ is by Goel and Hann-Caruthers \cite{GoelH23}, and the two gray points of CM and URCM when $p=1$ are by Meir \cite{DBLP:conf/sagt/Meir19} and Barak \cite{barak2026facility}, respectively. Our three curves converge to 2 as $p\to+\infty$.
    % The lower bound of deterministic anonymous strategyproof mechanisms is the same as the upper bound of CM.
    }
    \label{fig:upper-bounds-large-n}
\end{figure}

These bounds exhibit three distinct regimes. For small values of \(p\), namely \(1\le p<2\), URCM improves over the \(\sqrt 2\) guarantee of the coordinate-wise median by exploiting random rotations. As \(p\) approaches \(2\), CRD becomes more effective and eventually outperforms URCM, with the crossover occurring around \(p\approx 1.6\). For \(p>2\), CM and URCM have the same guarantee, while CRD gives a strictly better upper bound and approaches the known randomized guarantee for the maximum-cost objective. Thus, URCM is most useful near \(p=1\), whereas CRD is preferable near and above \(p=2\).

\medskip
\emph{Organizations.} Section~\ref{sec:model} presents preliminaries. Section~\ref{sec:cm} studies the CM mechanism. Section~\ref{sec:urcm} studies the URCM  mechanism. Section~\ref{sec:crd} studies the CRD  mechanism. We discuss the symmetric monotone norm in Section~\ref{sec:general-norm}.

\subsection{Additional Related Work}

% To the best of our knowledge, \cite{GoelH23} is the only work that considers Euclidean facility location problems under the $L_p$-norm social cost for general $p$, focusing solely on analyzing the approximation ratios of the coordinate-wise median mechanism. They also showed that this mechanism has the lowest worst-case approximation ratio among all deterministic, anonymous, and strategyproof mechanisms. 
% In addition to the results mentioned earlier, 
We briefly review multi-dimensional facility location problems and facility location problems under $L_p$-norm social cost within the context of approximate mechanism design without money.

\paragraph{Multi-dimensional Facility Location Problems.}
For Euclidean facility location problems in $\mathbb R^d$, Meir \cite{DBLP:conf/sagt/Meir19} showed that the coordinate-wise median (CM) mechanism achieves a \(\sqrt{d}\)-approximation under the total cost. 
Gravin and Jia \cite{gravin2025approximation} further improve the upper bound of the CM mechanism to a constant \(\sqrt{6\sqrt{3}-8}\approx 1.547\) in $\mathbb R^d$ for any dimension $d$. They also studied $d$-dimensional spaces with \(\ell_q\) distances (rather than Euclidean), establishing  nearly tight bounds for all $q\ge 1$, which are at most \(3\). 
By adding a uniformly random rotation before applying the CM mechanism, Barak \cite{barak2026facility}
showed that  the expected approximation ratio under the total cost in $\mathbb R^d$ can be improved to $[1.41-O(1/\sqrt{d}),1.547]$.
%This bound was later improved by Barak \cite{barak2026facility}, who proposed a randomized mechanism called RRCWM (which we refer to as URCM in this paper) and proved an approximation ratio of \(\frac{4}{\pi}\). 
In learning-augmented settings, Agrawal et al.~\cite{agrawal2022learning} studied a generalized CM mechanism with predictions for both the total cost and the maximum cost. Chan et al.~\cite{chan2025prediction} considered learning-augmented mechanisms for the maximum cost in two-dimensional spaces with \(\ell_q\) distances.
On the characterization side, Tang et al.~\cite{tang2020characterization} characterized group-strategyproof mechanisms in strictly convex spaces, while \cite{lin2020nearly} characterized strategyproof mechanisms for two agents in multi-dimensional spaces with \(\ell_q\) distances.

\paragraph{Facility Location Problems under \(L_p\)-norm Social Cost.}
Feigenbaum et al.~\cite{feigenbaum2017approximately} were the first to study facility location under the \(L_p\)-norm social cost systematically. They proved a tight approximation ratio \(2^{1-\frac{1}{p}}\) of deterministic mechanisms on the line for all $p\ge 1$. %and further showed that for two agents, the Left-Right-Middle (LRM) mechanism is optimal within a broad class of randomized strategyproof mechanisms. For special settings, 
Feldman and Wilf \cite{feldman2013strategyproof} studied the objective of minimizing the sum of squared costs on both lines and trees. In particular, they showed that the median mechanism achieves a \(2\)-approximation and that the CRD mechanism achieves a \(1.5\)-approximation on the line. Note that the sum of squared costs is the square of the \(L_2\)-norm social cost. Therefore, the square roots of their approximation ratios correspond to the ratios under the \(L_2\)-norm social cost.
%In their formulation, the objective is the sum of squared distances, under which the corresponding ratio is \(\frac{\sqrt{5}}{2}\); here we translate their result into our \(L_2\)-norm social cost formulation. 
%Fotakis and Tzamos \cite{fotakis2013strategyproof} considered concave cost functions, which can be viewed as complementary to the study of \(L_p\)-norm social costs.
In the variant of obnoxious facility location, Ye et al.~\cite{ye2015strategy} studied the \(L_2\) social utility, and Chan et al.~\cite{chan2025obnoxious} investigated both \(L_p\) social utility and \(L_p\) social cost.

\section{Preliminaries}\label{sec:model}

We study the facility location problem in the two-dimensional Euclidean
plane. Let \(N=\{1,2,\ldots,n\}\) be the set of agents. A location profile is
denoted by \(\mathbf{x}=(x_1,x_2,\ldots,x_n)\in\left(\mathbb R^2\right)^n\), where
\(x_i\in\mathbb R^2\) is the preferred location of agent \(i\). The distance
between two points is the Euclidean distance \(d(x,y)=\|x-y\|_2\). When there
is no ambiguity, we write \(\|\cdot\|\) for the Euclidean norm. For an agent
\(i\) and an alternative report \(x_i'\in\mathbb R^2\), we write
\((x_i',\mathbf{x}_{-i})\) for the profile obtained from \(\mathbf{x}\) by
replacing only agent \(i\)'s report with \(x_i'\). 

A deterministic mechanism is a function
\(f:(\mathbb R^2)^n\to\mathbb R^2\) that maps each reported profile to a facility location. A randomized mechanism is a function
\(f:(\mathbb R^2)^n\to\Delta(\mathbb R^2)\), where \(\Delta(\mathbb R^2)\) denotes the set of probability distributions over \(\mathbb R^2\). For a randomized mechanism, we write \(Y\sim f(\mathbf{x})\) for the random facility location sampled from the distribution returned on profile \(\mathbf{x}\). A deterministic mechanism can be viewed as a degenerate randomized mechanism concentrated on a single point.

Given a facility location $y\in\mathbb R^2$, the cost of agent $i$ is the distance between the facility location and the preferred location $x_i$, namely, $\|y-x_i\|$. For a random facility location
\(Y\), the expected cost of agent \(i\) is
\(\mathbb E_{Y}[\|x_i-Y\|]\).
A mechanism \(f\) is \emph{strategyproof} if no agent can reduce its expected cost by misreporting while the reports of all other agents are fixed. Formally, for every profile \(\mathbf{x}\in(\mathbb R^2)^n\), every agent \(i\in N\), and every alternative report \(x_i'\in\mathbb R^2\),
\[
\mathbb E_{Y\sim f(\mathbf{x})}\big[\|x_i-Y\|\big]
\le
\mathbb E_{Y'\sim f(x_i',\mathbf{x}_{-i})}\big[\|x_i-Y'\|\big].
\]
A mechanism \(f\) is \emph{anonymous} if its output is invariant under permutations of the agents.

We evaluate the quality of a facility location by the $L_p$-norm of the agents' distance
vector. For \(p\in[1,+\infty]\), the \emph{$L_p$-norm social cost} of a facility
location \(y\in\mathbb R^2\) on profile \(\mathbf{x}\) is
\[
    \mathrm{SC}_p(\mathbf{x},y)
    =
    \left(
        \sum_{i\in N}\|x_i-y\|^p
    \right)^{\frac{1}{p}},
\]
where for \(p=+\infty\), it is
\(\mathrm{SC}_{+\infty}(\mathbf{x},y)=\max_{i\in N}\|x_i-y\|\). 
The optimal $L_p$-norm social cost on profile
\(\mathbf{x}\) is
\(
    \mathrm{OPT}_p(\mathbf{x})
    =
    \min_{y\in\mathbb R^2}\mathrm{SC}_p(\mathbf{x},y).
\)
The $L_p$-norm social cost achieved by a mechanism $f$ on profile $\mathbf x$ is $\mathrm{SC}_p(\mathbf{x},f(\mathbf{x}))= \mathbb E_{Y\sim f(\mathbf{x})}
    \left[
        \mathrm{SC}_p(\mathbf{x},Y)
    \right].$
    When the context is clear, we write  \(\mathrm{ALG}_p(\mathbf{x})=\mathrm{SC}_p(\mathbf{x},f(\mathbf{x}))\) for simplicity. 
% For a deterministic mechanism \(f\), the $L_p$-norm social cost achieved on
% profile \(\mathbf{x}\) is
% \(\mathrm{ALG}_p(\mathbf{x})=\mathrm{SC}_p(\mathbf{x},f(\mathbf{x}))\). For
% a randomized mechanism $f$, the
% $L_p$-norm social cost on
% profile \(\mathbf{x}\) is the expectation 
% \(
%     \mathrm{ALG}_p(\mathbf{x})
%     =
%     \mathbb E_{Y\sim f(\mathbf{x})}
%     \left[
%         \mathrm{SC}_p(\mathbf{x},Y)
%     \right].
% \)
%In particular, for \(p=+\infty\), the maximum distance is first computed foreach realized facility location \(Y\), and then the expectation is taken over the mechanism's randomness.

The \emph{approximation ratio} of a mechanism $f$ is defined as 
$$\alpha_p(f)=\sup_{\mathbf x}\frac{\mathrm{SC}_p(\mathbf{x},f(\mathbf x))}{\mathrm{OPT}_p(\mathbf{x})}.$$
That is, $\alpha_p(f)$ is the worst-case ratio between the $L_p$-norm social cost achieved by $f$ and the optimal $L_p$-norm social cost, over all instances. 

% For \(p\in[1,+\infty]\), a mechanism is an \(\alpha\)-approximation under the
% $L_p$-norm social cost if, for every profile \(\mathbf{x}\in\left(\mathbb R^2\right)^n\),
% \[
%     \mathrm{ALG}_p(\mathbf{x})
%     \le
%     \alpha\cdot \mathrm{OPT}_p(\mathbf{x}).
% \]
%For profiles with \(\mathrm{OPT}_p(\mathbf{x})=0\), this condition requires \(\mathrm{ALG}_p(\mathbf{x})=0\). 
%\redcomment{[need carefully define $\alpha_p(f)$, as the sup over all instances]}

\section{Coordinate-wise Median Mechanism}\label{sec:cm}

%In the Euclidean plane $\mathbb R^2$, Meir \cite{DBLP:conf/sagt/Meir19} proved that the coordinate-wise median is a \(\sqrt{2}\)-approximation for the standard social cost, i.e., \(p=1\).
Goel and Hann-Caruthers \cite{GoelH23} showed
that, for any $p\ge 1$ the coordinate-wise median
mechanism has the lowest approximation ratio under the $L_p$-norm social cost
among all deterministic, anonymous, and strategyproof mechanisms, and 
 its approximation ratio $\alpha_p(\mathrm{CM})$ is between
\(2^{1-\frac{1}{p}}\) and \(2^{\frac{3}{2}-\frac{2}{p}}\) for any \(p\ge 2\). They further conjecture that this lower bound is tight.

\begin{conjecture}[Goel and Hann-Caruthers \cite{GoelH23}]
     For $\mathbb R^2$ and the $L_p$-norm social cost where $p\ge 2$, the approximation ratio of the coordinate-wise median mechanism is \(2^{1-\frac{1}{p}}\).
\end{conjecture}

In this section, we prove their conjecture firmly. Moreover, we complete the picture by showing that when $1\le p\le 2$ it achieves a tight bound of $\sqrt 2$.
%In this section, we close the gap for \(p\ge 2\) by proving that the lower bound \(2^{1-\frac{1}{p}}\) is tight, and we extend the tight analysis to all \(p\ge 1\).
%For a finite sequence \(s_1,\ldots,s_n\in\mathbb R\), let \(\med\left(s_1,\ldots,s_n\right)\) be the \(\lceil \frac{n}{2}\rceil\)-th smallest element. Equivalently, it is the lower median when \(n\) is even and the usual median when \(n\) is odd.

% \begin{mechanism}[Coordinate-Wise Median]\label{mec:cm}
% Given a profile \(\mathbf{x}=(x_1,\ldots,x_n)\in(\mathbb R^2)^n\), write
% \(x_i=(x_{i,1},x_{i,2})\). The coordinate-wise median mechanism, denoted by
% \(\mathrm{CM}\), outputs
% \[
%     \mathrm{CM}(\mathbf{x})
%     =
%     \left(
%         \med\left(x_{1,1},\ldots,x_{n,1}\right),
%         \med\left(x_{1,2},\ldots,x_{n,2}\right)
%     \right).
% \]
% \end{mechanism}

%\cmmechanism*

\cmtheorem*

% \begin{theorem}\label{thm:cm-tight}
% The coordinate-wise median mechanism is strategyproof. Moreover, its tight
% approximation ratio under the $L_p$-norm social cost is
% \[
%     \alpha_p(\mathrm{CM})
%     =
%     \begin{cases}
%         \sqrt{2}, & 1\le p\le 2,\\[3pt]
%         2^{1-\frac{1}{p}}, & 2\le p<+\infty,\\[3pt]
%         2, & p=+\infty.
%     \end{cases}
% \]
% \end{theorem}
The proof separates the regimes \(1\le p\le2\) and \(p\ge2\). For \(1\le p\le2\), we combine a standard one-dimensional median inequality (Lemma~\ref{lem:one-dimensional-median}) with norm comparisons between \(L_p\) and Euclidean norms in the plane, yielding the tight factor \(\sqrt 2\) in Lemma~\ref{lem:cm-small-p}. For \(p\ge2\), we pair agents with opposite coordinate signs and use a geometric inequality for vectors with nonpositive inner product together with H\"older’s inequality to obtain the tight factor \(2^{1-\frac1p}\) in Lemma~\ref{lem:cm-large-p}.

%We first record a standard one-dimensional inequality for the median.

\begin{lemma}\label{lem:one-dimensional-median}
Let \(s_1,\ldots,s_n\in\mathbb R\), and let
\(m=\med\left(s_1,\ldots,s_n\right)\). Then, for every \(t\in\mathbb R\) and
every \(p\ge 1\),
\begin{align}
    \sum_i |s_i-m|^p
    \le
    2^{p-1}\sum_i |s_i-t|^p .
    \label{eq:one-dimensional-median}
\end{align}
\end{lemma}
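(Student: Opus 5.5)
The plan is to pair each agent on one side of the median with an agent on the other side, and then combine the triangle inequality with convexity of $x\mapsto|x|^p$.

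\textbf{Setup.} Sort the points so that $s_1\le\cdots\le s_n$. With $k=\lceil n/2\rceil$, the median is $m=s_k$. Put $A=\{i:s_i<m\}$ and $B=\{i:s_i>m\}$. Agents with $s_i=m$ contribute zero to the left-hand side. Since $m$ is the $\lceil n/2\rceil$-th smallest value, $|A|\le k-1$, and $|B|\le n-k\le\lfloor n/2\rfloor$. Moreover, at least $k$ indices satisfy $s_i\le m$, so $|\{i:s_i\le m\}|\ge|B|$; symmetrically, $|\{i:s_i\ge m\}|=n-k+1\ge |A|$.

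\textbf{Matching.} Using these counts, I will build an injective matching $\pi$ with the following properties:
\begin{itemize}
\item each $i\in A$ is matched to a partner $\pi(i)$ with $s_{\pi(i)}\ge m$;
\item each $j\in B$ is matched to a partner with $s\le m$;
\item the matched pairs are disjoint.
\end{itemize}
Concretely, pair index $i$ with index $n+1-i$ for $i\le\lfloor n/2\rfloor$. For $i\le \lfloor n/2\rfloor\le k$ we have $s_i\le m\le s_{n+1-i}$, because $n+1-i\ge k$. When $n$ is odd, the only unpaired index is $k$ itself, which contributes zero. When $n$ is even, every index is paired. So the indices split into disjoint pairs $(i,n+1-i)$ with $s_i\le m\le s_{n+1-i}$, plus possibly the median itself.

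\textbf{Per-pair inequality.} Fix a pair $a\le m\le b$ and any $t\in\mathbb R$. Then
\[
|a-m|^p+|b-m|^p\le (|a-m|+|b-m|)^p=(b-a)^p .
\]
This holds by superadditivity of $x^p$ for $x\ge 0$ and $p\ge1$. Next,
\[
(b-a)^p\le(|a-t|+|b-t|)^p\le 2^{p-1}\bigl(|a-t|^p+|b-t|^p\bigr),
\]
where the first step is the triangle inequality and the second is convexity (power mean).

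\textbf{Conclusion.} Summing over the pairs, and bounding the unpaired median term by $0\le 2^{p-1}|s_k-t|^p$, gives the inequality claimed in Lemma~\ref{lem:one-dimensional-median}.

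The only delicate step is the index bookkeeping: checking that $n+1-i\ge k$ whenever $i\le\lfloor n/2\rfloor$. This holds because $n+1-\lfloor n/2\rfloor=\lceil n/2\rceil+1>k$. The rest is routine.
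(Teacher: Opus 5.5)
Your proof is correct and is essentially the paper's argument. Both pair points lying on opposite sides of the median, then apply three steps to each pair: $\ell_p\le\ell_1$ (superadditivity), the triangle inequality through $t$, and the power-mean/H\"older bound $(A+B)^p\le 2^{p-1}(A^p+B^p)$. Your explicit sorted pairing $i\leftrightarrow n+1-i$, which leaves only the median itself unpaired when $n$ is odd, simply replaces the paper's duplication-plus-artificial-sign bookkeeping. The counting in your Setup paragraph is never used; note also that $|\{i:s_i\ge m\}|$ should read $\ge n-k+1$ rather than $=$ when there are ties.
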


\begin{proof}
By translating the line, assume without loss of generality that \(m=0\).  If the number $n$ of points is odd, we
duplicate the multiset. The median of the duplicated multiset is still
\(0\), and proving \eqref{eq:one-dimensional-median} for the duplicated
instance implies the original inequality, since both sides are doubled.
Thus, assume that the total number $n$ of points is even. 

We
first explain how to pair these points. Since
\(0\) is a median, at most \(\frac{n}{2}\) points are strictly positive and at most
\(\frac{n}{2}\) points are strictly negative. We assign artificial signs to the zero
points so that exactly \(\frac{n}{2}\) points have sign \(+\) and exactly \(\frac{n}{2}\)
points have sign \(-\): all strictly positive points are assigned \(+\), all
strictly negative points are assigned \(-\), and the zero points are assigned
arbitrarily to make the two sign classes have equal size. 
% This is possible
% because the numbers of strictly positive and strictly negative points are
% both at most \(\frac{n}{2}\). 
Now pair each \(+\)-signed point with one \(-\)-signed
point. 
% For every resulting pair \((u,v)\), we have \(uv\le 0\), since a
% nonzero point has its actual sign and a point whose artificial sign may be
% chosen freely is zero.

Fix one such pair \((u,v)\). Since \(uv\le 0\), the two points lie on
opposite sides of the origin, possibly with one of them equal to zero. Hence
\(|u|+|v|=|u-v|\). Therefore,
\[
    \left(|u|^p+|v|^p\right)^{\frac{1}{p}}
    \le
    |u|+|v|
    =
    |u-v| = |u-t-(v-t)|
    \le
    |u-t|+|v-t|.
\]

Let \(A=|u-t|\) and \(B=|v-t|\). If \(p=1\), the following inequality is an
equality. For \(p>1\), let \(q=\frac{p}{p-1}\). By H\"older's inequality,
\[
    A+B
    =
    A\cdot 1+B\cdot 1
    \le
    \left(A^p+B^p\right)^{\frac{1}{p}}
    \left(1^q+1^q\right)^{\frac{1}{q}}
    =
    2^{1-\frac{1}{p}}
    \left(A^p+B^p\right)^{\frac{1}{p}}.
\]

Combining the two inequalities and taking \(p\)-th powers, we obtain
\[
    |u|^p+|v|^p
    \le
    2^{p-1}\left(|u-t|^p+|v-t|^p\right).
\]
Summing this inequality over all pairs gives
\eqref{eq:one-dimensional-median}.
\end{proof}

\begin{lemma}\label{lem:cm-small-p}
For every \(1\le p\le 2\), the coordinate-wise median mechanism is a tight
\(\sqrt{2}\)-approximation under the $L_p$-norm social cost. 
\end{lemma}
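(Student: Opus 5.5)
The plan has two parts: an upper bound of \(\sqrt2\) built from Lemma~\ref{lem:one-dimensional-median} and planar norm comparisons, and a matching lower-bound instance.

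\textbf{Upper bound.} Fix an optimal facility \(y=(t_1,t_2)\) and let \(m=\mathrm{CM}(\mathbf{x})=(m_1,m_2)\). For each agent write \(a_i=x_{i,1}-m_1\), \(b_i=x_{i,2}-m_2\), and \(a_i'=x_{i,1}-t_1\), \(b_i'=x_{i,2}-t_2\). For \(1\le p\le 2\), the \(\ell_2\) norm on \(\mathbb R^2\) is dominated by the \(\ell_p\) norm, so
\[
\|x_i-m\|^p=(a_i^2+b_i^2)^{p/2}\le |a_i|^p+|b_i|^p .
\]
Summing over agents and applying Lemma~\ref{lem:one-dimensional-median} to each coordinate gives
\[
\sum_i\|x_i-m\|^p\le 2^{p-1}\sum_i\left(|a_i'|^p+|b_i'|^p\right).
\]
In the other direction, \(|a'|^p+|b'|^p\le 2^{1-p/2}(a'^2+b'^2)^{p/2}\). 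This follows from the power-mean inequality, or from concavity of \(s\mapsto s^{p/2}\) applied to \(a'^2,b'^2\). Chaining the two estimates,
\[
\mathrm{SC}_p(\mathbf{x},m)^p\le 2^{p-1}\cdot 2^{1-p/2}\,\mathrm{OPT}_p(\mathbf{x})^p=2^{p/2}\,\mathrm{OPT}_p(\mathbf{x})^p ,
\]
and taking \(p\)-th roots yields the ratio \(\sqrt2\).

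\textbf{Lower bound.} We need instances whose ratio approaches \(\sqrt2\) for every \(p\in[1,2]\). Both inequalities must be nearly tight at once. The first step of the chain is tight when CM's errors are diagonal, i.e.\ \(|a_i|=|b_i|\), which is actually lossy; so the configuration must be chosen carefully.

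The first candidate is Meir's \(p=1\) style construction. Place roughly half the agents (\(k\) of them) near \((1,0)\), \(k\) near \((0,1)\), and one agent at the origin, so that \(\mathrm{CM}=(0,0)\). Then the CM cost is \((2k)^{1/p}\). For the optimum, try a facility at \((\tfrac12,\tfrac12)\), which gives distance \(1/\sqrt2\) to each cluster and cost about \((2k)^{1/p}/\sqrt2\). The ratio therefore tends to \(\sqrt2\) for every \(p\), so the optimum is at most this and the ratio is at least \(\sqrt2\) asymptotically. One must check that CM really outputs the origin: with \(2k+1\) agents, each coordinate has \(k\) ones and \(k+1\) zeros, so both medians are \(0\).

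\textbf{Main obstacle.} The point needing care is confirming that the lower bound is not beaten by the upper bound's slack; the explicit instance above handles this. With \(n=2k+1\to\infty\), the ratio is \(\sqrt2\cdot(2k/(2k+1/2^{p/2}))^{1/p}\to\sqrt2\), since the origin agent contributes only distance \(1/\sqrt2\) at the facility \((\tfrac12,\tfrac12)\). That completes tightness.
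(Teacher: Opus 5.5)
Your proposal is correct and matches the paper's proof. The upper bound is the same chain: $\ell_2\le\ell_p$ in the plane, then Lemma~\ref{lem:one-dimensional-median} on each coordinate, then concavity of $s\mapsto s^{p/2}$ for the factor $2^{1-p/2}$, for a total of $2^{p/2}$; tightness uses the same clusters at $(1,0)$ and $(0,1)$ compared against $(\tfrac12,\tfrac12)$. The paper omits your extra agent at the origin: with $n=2k$, the $\lceil n/2\rceil$-th-smallest median is already $(0,0)$ and the ratio is exactly $\sqrt2$. With that agent included, the ratio is $\sqrt2\,\bigl(2k/(2k+1)\bigr)^{1/p}$ rather than your expression, since that agent is also at distance $1/\sqrt2$ from $(\tfrac12,\tfrac12)$. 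Your heuristic remark is also reversed: the first inequality is tight for axis-aligned CM errors, and the last one for diagonal errors at the optimum. Neither slip affects the argument.
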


\begin{proof}
Let \(m=\mathrm{CM}(\mathbf{x})=(m_1,m_2)\in\mathbb R^2\), and let
\(z=(z_1,z_2)\in\mathbb R^2\) be arbitrary. We first recall two elementary
norm inequalities in \(\mathbb R^2\). For every \((u,v)\in\mathbb R^2\) and
\(1\le p\le 2\),
\begin{align}
    \left(u^2+v^2\right)^{\frac{p}{2}}
    &\le
    |u|^p+|v|^p,
    \label{eq:l2-to-lp-small-p}
    \\
    |u|^p+|v|^p
    &\le
    2^{1-\frac{p}{2}}
    \left(u^2+v^2\right)^{\frac{p}{2}}.
    \label{eq:lp-to-l2-small-p}
\end{align}
Indeed, \eqref{eq:l2-to-lp-small-p} is the standard monotonicity of
\(L_p\)-norms in dimension two: when \(p\le 2\),
\[
    \left(u^2+v^2\right)^{\frac{1}{2}}
    \le
    \left(|u|^p+|v|^p\right)^{\frac{1}{p}},
\]
and raising both sides to the \(p\)-th power gives
\eqref{eq:l2-to-lp-small-p}. For \eqref{eq:lp-to-l2-small-p}, let
\(r=\frac{p}{2}\le 1\), \(A=u^2\), and \(B=v^2\). Since the function
\(x^r\) is concave on \(\mathbb R_{\ge 0}\), Jensen's inequality gives
\[
    \left(\frac{A+B}{2}\right)^r
    \ge
    \frac{A^r+B^r}{2}.
\]
Multiplying both sides by \(2\), we get
\[
    A^r+B^r
    \le
    2\left(\frac{A+B}{2}\right)^r
    =
    2^{1-r}(A+B)^r.
\]
Substituting back \(r=\frac{p}{2}\), \(A=u^2\), and \(B=v^2\), this becomes
\[
    |u|^p+|v|^p
    =
    A^r+B^r
    \le
    2^{1-\frac{p}{2}}(A+B)^{\frac{p}{2}}
    =
    2^{1-\frac{p}{2}}
    \left(u^2+v^2\right)^{\frac{p}{2}}.
\]

Now apply \eqref{eq:l2-to-lp-small-p} to each agent's Euclidean distance
from the coordinate-wise median \(m\). For each agent \(i\),
\[
    \|x_i-m\|^p
    =
    \left(
        (x_{i,1}-m_1)^2+(x_{i,2}-m_2)^2
    \right)^{\frac{p}{2}}.
\]
Therefore,
\[
\begin{aligned}
    \mathrm{SC}_p(\mathbf{x},m)^p
    &=
    \sum_i
    \left(
        (x_{i,1}-m_1)^2+(x_{i,2}-m_2)^2
    \right)^{\frac{p}{2}} \\
    &\le
    \sum_i |x_{i,1}-m_1|^p
    +
    \sum_i |x_{i,2}-m_2|^p .
\end{aligned}
\]

Since \(m_1\) is the median of the first coordinates
\(x_{1,1},\ldots,x_{n,1}\), Lemma~\ref{lem:one-dimensional-median} with
\(t=z_1\) gives
\[
    \sum_i |x_{i,1}-m_1|^p
    \le
    2^{p-1}\sum_i |x_{i,1}-z_1|^p.
\]
Similarly, since \(m_2\) is the median of the second coordinates,
Lemma~\ref{lem:one-dimensional-median} with \(t=z_2\) gives
\[
    \sum_i |x_{i,2}-m_2|^p
    \le
    2^{p-1}\sum_i |x_{i,2}-z_2|^p.
\]
Combining the last three inequalities,
\[
\begin{aligned}
    \mathrm{SC}_p(\mathbf{x},m)^p
    &\le
    2^{p-1}
    \sum_i
    \left(
        |x_{i,1}-z_1|^p+|x_{i,2}-z_2|^p
    \right).
\end{aligned}
\]
Finally, applying \eqref{eq:lp-to-l2-small-p} to the vector \(x_i-z\) for each
agent \(i\) gives
\[
\begin{aligned}
    \mathrm{SC}_p(\mathbf{x},m)^p
    &\le
    2^{p-1}
    \sum_i
    2^{1-\frac{p}{2}}
    \left(
        (x_{i,1}-z_1)^2+(x_{i,2}-z_2)^2
    \right)^{\frac{p}{2}} \\
    &=
    2^{\frac{p}{2}}
    \sum_i \|x_i-z\|^p
    =
    2^{\frac{p}{2}}\mathrm{SC}_p(\mathbf{x},z)^p.
\end{aligned}
\]
Taking \(z\) to be an optimal facility location gives
\[
    \mathrm{ALG}_p(\mathbf{x})
    =
    \mathrm{SC}_p(\mathbf{x},m)
    \le
    \sqrt{2}\,\mathrm{OPT}_p(\mathbf{x}).
\]

We now show tightness. Consider the instance $n=2k$ with \(k\) agents at
\((1,0)\) and \(k\) agents at \((0,1)\). The
coordinate-wise median is \((0,0)\), and thus
\(\mathrm{ALG}_p(\mathbf{x})^p=2k\). On the other hand, at
\(z=\left(\frac{1}{2},\frac{1}{2}\right)\), every agent has distance
\(\frac{1}{\sqrt{2}}\), so
\[
    \mathrm{OPT}_p(\mathbf{x})^p
    \le
    2k\cdot2^{-\frac{p}{2}}.
\]
Thus the \(p\)-th power of the approximation ratio is at least $2^{\frac{p}{2}}$. Hence no approximation factor below \(\sqrt{2}\) is possible.
\end{proof}

\begin{lemma}\label{lem:cm-large-p}
For every \(2\le p\le+\infty\), the coordinate-wise median mechanism is a
tight \(2^{1-\frac{1}{p}}\)-approximation under the $L_p$-norm social cost. 
\end{lemma}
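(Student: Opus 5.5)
We translate the plane so that the coordinate-wise median is the origin, $\mathrm{CM}(\mathbf{x})=(0,0)$, and fix an arbitrary $z\in\mathbb R^2$, to be taken as an optimal location at the end. The plan mirrors Lemma~\ref{lem:one-dimensional-median}, but pairs agents in the plane. As there, we duplicate the profile if $n$ is odd; this does not change the coordinate-wise median and doubles both sides. We then aim to partition the agents into pairs $(u,v)$ lying in \emph{opposite closed quadrants}, i.e.\ $u_1v_1\le 0$ and $u_2v_2\le 0$. In particular every pair satisfies $\langle u,v\rangle\le 0$.

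\textbf{Step 1 (the pairing; main obstacle).} The median conditions say that at most $\frac n2$ agents have strictly positive first coordinate, at most $\frac n2$ strictly negative first coordinate, and likewise for the second coordinate. Let $a,b,c,d$ be the numbers of agents strictly inside quadrants I, II, III, IV. If no agent lies on an axis, the four inequalities $a+d,\,b+c,\,a+b,\,c+d\le \frac n2$ together with $a+b+c+d=n$ force $a=c$ and $b=d$, and we simply pair I with III and II with IV. In general, points on the axes (and at the origin) are compatible with several opposite quadrants. For example, a point on the positive $x$-axis may be paired with anything in the closed left half-plane. So we need a perfect matching in the ``compatibility'' graph. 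I would prove its existence by Hall's theorem, or equivalently by assigning artificial signs to zero coordinates so that each agent gets a quadrant label with opposite labels equally frequent. The Hall deficiencies reduce to the four half-plane count inequalities above, which the median property supplies. This is the step I expect to require the most care, mainly in handling the axis points cleanly.

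\textbf{Step 2 (per-pair inequality).} For a pair with $\langle u,v\rangle\le 0$ we have $\|u\|^2+\|v\|^2\le \|u-v\|^2$. Since $p\ge 2$, the map $s\mapsto s^{p/2}$ is superadditive, so $\|u\|^p+\|v\|^p\le(\|u\|^2+\|v\|^2)^{p/2}\le\|u-v\|^p$. The triangle inequality gives $\|u-v\|\le\|u-z\|+\|v-z\|$, and H\"older (or power-mean), exactly as in Lemma~\ref{lem:one-dimensional-median}, gives
\[
\|u\|^p+\|v\|^p\le 2^{p-1}\left(\|u-z\|^p+\|v-z\|^p\right).
\]
Summing over all pairs yields $\mathrm{SC}_p(\mathbf{x},0)^p\le 2^{p-1}\mathrm{SC}_p(\mathbf{x},z)^p$, i.e.\ ratio $2^{1-\frac1p}$. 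For $p=+\infty$ we argue directly per pair: $\max(\|u\|,\|v\|)\le\|u-v\|\le 2\max(\|u-z\|,\|v-z\|)$, giving $2$.

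\textbf{Step 3 (tightness).} We place all agents on the $x$-axis, with $n=2k$, $k$ agents at $(0,0)$ and $k$ at $(1,0)$. The lower median gives $\mathrm{CM}=(0,0)$, so $\mathrm{ALG}_p^p=k$ (and $\mathrm{ALG}_\infty=1$). The point $(\frac12,0)$ gives $\mathrm{OPT}_p^p\le 2k\,2^{-p}$ (and $\mathrm{OPT}_\infty\le\frac12$). Hence the $p$-th power of the ratio is at least $2^{p-1}$, and the ratio is at least $2$ for $p=+\infty$. This matches the upper bound.
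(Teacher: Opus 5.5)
Your proposal is correct and follows essentially the same route as the paper: translate the median to the origin, duplicate to make $n$ even, pair agents in opposite closed quadrants, and then combine $\langle u,v\rangle\le 0$, superadditivity of $s\mapsto s^{p/2}$, the triangle inequality and H\"older pair by pair (and argue directly per pair for $p=+\infty$). For the pairing, the artificial-sign route you mention is the one the paper uses, and it makes your ``main obstacle'' immediate without any Hall-type argument: balancing signs separately in each coordinate to exactly $\frac n2$ each already forces $\#(++)=\#(--)$ and $\#(+-)=\#(-+)$. The only other difference is cosmetic: your tight instance puts the second cluster at $(1,0)$ rather than the paper's $(1,1)$, and it gives the same ratio $2^{1-\frac1p}$ (and $2$ for $p=+\infty$).
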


\begin{proof}
Let \(m=\mathrm{CM}(\mathbf{x})\). By translating the instance, assume
\(m=(0,0)\). Then, for each coordinate, at most half of the agents lie strictly
on either side of the origin. By duplicating the multiset if necessary, we
may assume that the number of agents is even.

Assign artificial signs \(+\) and \(-\) to zero coordinates so that exactly
half of the agents have \(+\) sign and exactly half have \(-\) sign in each
coordinate. This partitions the agents into four signed classes
\((++),(+-),(-+),(--)\). Since the number of \(+\) signs is exactly half in
both coordinates, the number of \((++)\) agents equals the number of \((--)\)
agents, and the number of \((+-)\) agents equals the number of \((-+)\)
agents. Therefore, we can pair every \((++)\) agent with a \((--)\) agent,
and every \((+-)\) agent with a \((-+)\) agent. For every resulting pair
\((a,b)\in(\mathbb R^2)^2\) of agent locations, we have \(a_1b_1\le 0\) and \(a_2b_2\le 0\), and hence
\(a\cdot b\le 0\).

Fix such a pair \((a,b)\). Since \(a\cdot b\le 0\),
\(\|a-b\|^2\ge \|a\|^2+\|b\|^2\). Moreover, since \(p\ge 2\),
\begin{equation}\label{eq:31}
    \left(\|a\|^p+\|b\|^p\right)^{\frac{1}{p}}
    \le
    \left(\|a\|^2+\|b\|^2\right)^{\frac{1}{2}}
    \le
    \|a-b\|.
\end{equation}
For any comparison point \(z\in\mathbb R^2\), the triangle inequality gives
\begin{equation}\label{eq:32}
    \|a-b\|
    =
    \|(a-z)-(b-z)\|
    \le
    \|a-z\|+\|b-z\|.
\end{equation}
It remains to upper bound the sum on the right by an \(L_p\)-type expression.

First consider finite $p$, namely $2\le p<+\infty$.
Let \(A=\|a-z\|\) and \(B=\|b-z\|\). %If \(p=1\), then \(A+B=\left(A^p+B^p\right)^{\frac{1}{p}}\).  If \(p>1\), let
Let \(q=\frac{p}{p-1}\) be the conjugate exponent so that \(\frac{1}{p}+\frac{1}{q}=1\). By H\"older's inequality applied to the two vectors \((A,B)\) and \((1,1)\),
\[
    A+B
    =
    A\cdot 1+B\cdot 1
    \le
    \left(A^p+B^p\right)^{\frac{1}{p}}
    \left(1^q+1^q\right)^{\frac{1}{q}}.
\]
Since \(1^q+1^q=2\) and \(\frac{1}{q}=1-\frac{1}{p}\), this gives
\[
    A+B
    \le
    2^{1-\frac{1}{p}}
    \left(A^p+B^p\right)^{\frac{1}{p}}.
\]
Substituting \(A=\|a-z\|\) and \(B=\|b-z\|\), we obtain
\[
    \|a-b\|
    \le
    \|a-z\|+\|b-z\|
    \le
    2^{1-\frac{1}{p}}
    \left(\|a-z\|^p+\|b-z\|^p\right)^{\frac{1}{p}}.
\]
Combining the last two inequalities and taking \(p\)-th powers, each pair
satisfies
\[
    \|a\|^p+\|b\|^p
    \le
    2^{p-1}\left(\|a-z\|^p+\|b-z\|^p\right).
\]
Summing over all pairs
\((a,b)\in(\mathbb R^2)^2\) of agent locations gives
\[
    \mathrm{SC}_p(\mathbf{x},m)^p
    \le
    2^{p-1}\mathrm{SC}_p(\mathbf{x},z)^p.
\]
Taking \(z\) to be an optimal facility location yields
\(\mathrm{ALG}_p(\mathbf{x})\le
2^{1-\frac{1}{p}}\mathrm{OPT}_p(\mathbf{x})\).

Second consider $p=+\infty$. Goel and Hann-Caruthers \cite{GoelH23} have shown a tight bound of 2. We present our simple alternative proof here. \eqref{eq:31} and \eqref{eq:32} give
\[
    \max\left\{\|a\|,\|b\|\right\}
    \le
    \|a-b\|
\]
and for any comparison point \(z\in\mathbb R^2\),
\[
    \|a-b\|
    \le
    \|a-z\|+\|b-z\|
    \le
    2\max\left\{\|a-z\|,\|b-z\|\right\},
\]
respectively.
Then taking the maximum over all pairs
\((a,b)\in(\mathbb R^2)^2\) of agent locations gives
\(\mathrm{ALG}_{+\infty}(\mathbf{x})\le
2\,\mathrm{OPT}_{+\infty}(\mathbf{x})\).

\smallskip
For tightness, let \(u=(1,1)\), and consider the instance $n=2k$ with \(k\) agents
at \(u\) and \(k\) agents at the origin. The coordinate-wise median is the
origin. When $2\le p<+\infty$, we have
\(\mathrm{ALG}_p(\mathbf{x})^p=k\|u\|^p=k2^{\frac{p}{2}}\). At the midpoint
\(\frac{u}{2}\) which is the optimum, every agent has distance
\(\frac{\|u\|}{2}=\frac{1}{\sqrt{2}}\), and hence
\[
    \mathrm{OPT}_p(\mathbf{x})^p
    =
    2k\cdot 2^{-\frac{p}{2}} = k\cdot 2^{1-\frac{p}{2}}.
\]
%Thus the \(p\)-th power of the approximation ratio is at least $2^{p-1}$.
Hence the approximation
ratio is at least \(2^{1-\frac{1}{p}}\), matching the upper bound.
When $p=+\infty$, we have
\(\mathrm{ALG}_{+\infty}(\mathbf{x})=\sqrt{2}\). At the midpoint
\(\frac{u}{2}\), the maximum distance is \(\frac{1}{\sqrt{2}}\), and thus
\(\mathrm{OPT}_{+\infty}(\mathbf{x})\le \frac{1}{\sqrt{2}}\). The ratio is
therefore at least \(2\), also matching the upper bound.
\end{proof}

\iffalse
It has been known in \cite{GoelH23} that the coordinate-wise median has a tight bound of 2 for  \(p=+\infty\). We present our proof merely for completeness here.

\begin{lemma}\label{lem:cm-infinity}
For \(p=+\infty\), the coordinate-wise median mechanism is a tight
\(2\)-approximation under the maximum cost.\ma{pending remove}
\end{lemma}

\begin{proof}
We use the same pairing construction as in the proof of
Lemma~\ref{lem:cm-large-p}. After translating so that
\(m=\mathrm{CM}(\mathbf{x})=0\), we can pair the agents so that every pair
\((a,b)\) satisfies \(a\cdot b\le 0\). Hence
\(\|a-b\|^2\ge \|a\|^2+\|b\|^2\), and in particular
\[
    \max\left\{\|a\|,\|b\|\right\}
    \le
    \|a-b\|.
\]
For any comparison point \(z\in\mathbb R^2\),
\[
    \|a-b\|
    \le
    \|a-z\|+\|b-z\|
    \le
    2\max\left\{\|a-z\|,\|b-z\|\right\}.
\]
Taking the maximum over all pairs gives
\(\mathrm{ALG}_{+\infty}(\mathbf{x})\le
2\,\mathrm{OPT}_{+\infty}(\mathbf{x})\).

The lower bound follows from the same instance as in
Lemma~\ref{lem:cm-large-p}: take \(k\) agents at \(u=(1,1)\) and \(k\)
agents at the origin. The coordinate-wise median is the origin, so
\(\mathrm{ALG}_{+\infty}(\mathbf{x})=\sqrt{2}\). At the midpoint
\(\frac{u}{2}\), the maximum distance is \(\frac{1}{\sqrt{2}}\), and thus
\(\mathrm{OPT}_{+\infty}(\mathbf{x})\le \frac{1}{\sqrt{2}}\). The ratio is
therefore at least \(2\), matching the upper bound.
\end{proof}
\fi

\begin{proof}[Proof of Theorem~\ref{thm:cm-tight}]
The coordinate-wise median
mechanism is known to be strategyproof in Euclidean spaces (see, e.g.,
\cite{GoelH23,DBLP:conf/sagt/Meir19}), because it applies a one-dimensional
median independently on each dimension. The approximation guarantees and their
tightness follow from Lemmas~\ref{lem:cm-small-p} and \ref{lem:cm-large-p}.
\end{proof}

%Theorem~\ref{thm:cm-tight} closes the gap left by Goel and Hann-Caruthers \cite{GoelH23}: for every \(p\ge 2\), the approximation ratio of coordinate-wise median is exactly \(2^{1-\frac{1}{p}}\), improving the previous upper bound \(2^{\frac{3}{2}-\frac{2}{p}}\). %this identifies the optimal approximation guarantee among deterministic strategyproof mechanisms for the $L_p$-norm social cost.

\section{Uniformly Rotated Coordinate-wise Median}\label{sec:urcm}

Goel and Hann-Caruthers \cite{GoelH23} suggested  that
randomly rotating the coordinate system before applying the coordinate-wise
median may lead to improved guarantees. In this section, we formalize this
idea and analyze the resulting randomized mechanism. We show that this
rotation indeed improves the approximation ratio for \(1\le p<2\), including
the standard case of total cost, while no improvement is possible for
\(p\ge 2\).

For an angle \(\theta\in[0,2\pi)\), let
\(e_\theta=\left(\cos\theta,\sin\theta\right)\) and
\(e_\theta^\perp=\left(-\sin\theta,\cos\theta\right)\) be two perpendicular unit vectors that form a rotated coordinate system, instead of using the standard axes
$(1,0)$ and $(0,1)$. %it uses the axes $e_\theta$ and $e_\theta^\perp$.
For each agent location
\(x_i\in\mathbb R^2\), the dot product \(\left\langle  x_i,e_\theta\right\rangle\) is the coordinate of $x_i$ along the first rotated axis, and \(\left\langle x_i,e_\theta^\perp\right\rangle\) is the coordinate along the second rotated axis.

The URCM mechanism  (Mechanism \ref{mec:urcm}) first samples a random angle \(\Theta\sim\mathrm{Unif}[0,2\pi)\). Let $m_{\Theta,1}$ be the median of all first coordinates of agent locations, and $m_{\Theta,2}$ the median of all second coordinates. The mechanism then returns a random point that takes $(m_{\Theta,1},m_{\Theta,2})$ as its coordinates in the rotated coordinate system, namely, 
\[
    \mathrm{URCM}(\mathbf{x})
    =
    m_{\Theta,1}e_\Theta + m_{\Theta,2}e_\Theta^\perp .
\]

%the two coordinates of \(x\) in the rotated coordinate system are \(\left\langle x,e_\theta\right\rangle\) and \(\left\langle x,e_\theta^\perp\right\rangle\).

\iffalse
\label{mec:urcm}
Given a profile \(\mathbf{x}=(x_1,\ldots,x_n)\in(\mathbb R^2)^n\), sample
\(\Theta\sim\mathrm{Unif}[0,2\pi)\). For any \(x_i\), its coordinates in the rotated basis
\(\{e_\Theta,e_\Theta^\perp\}\) are
\(\bigl(\langle x_i,e_\Theta\rangle,\langle x_i,e_\Theta^\perp\rangle\bigr)\), \bluecomment{where $\langle\cdot,\cdot\rangle$ is the inner product of vectors.}
Let
\[
    m_{\Theta,1}
    =
    \med\left(
        \left\langle x_1,e_\Theta\right\rangle,\ldots,
        \left\langle x_n,e_\Theta\right\rangle
    \right),
    \qquad
    m_{\Theta,2}
    =
    \med\left(
        \langle x_1,e_\Theta^\perp\rangle,\ldots,
        \langle x_n,e_\Theta^\perp\rangle
    \right)
\]
\bluecomment{be the median of the first and the second coordinates, respectively.}
%where \(\med(\cdot)\) still denotes the median of a list of real numbers, with a fixed tie-breaking rule when \(n\) is even. 
The mechanism finally maps the resulting pair of medians back to the point in $\mathbb R^2$: %outputs the random point
\[
    \mathrm{URCM}(\mathbf{x})
    =
    m_{\Theta,1}e_\Theta + m_{\Theta,2}e_\Theta^\perp .
\]
\fi

%\urcmmechanism*

\urcmtheorem*

% \begin{theorem}\label{thm:urcm}
% The uniformly rotated coordinate-wise median mechanism is strategyproof in
% expectation. Moreover, it achieves the following approximation guarantee
% under the $L_p$-norm social cost:
% \[
%     \alpha_p(\mathrm{URCM})
%     =
%     \begin{cases}
%         \displaystyle
%         2\left(
%             \dfrac{
%                 \Gamma\left(\frac{p+1}{2}\right)
%             }{
%                 \sqrt{\pi}\,\Gamma\left(1+\frac{p}{2}\right)
%             }
%         \right)^{\frac{1}{p}},
%         & 1\le p<2,\\[14pt]
%         2^{1-\frac{1}{p}},
%         & 2\le p<+\infty,\\[5pt]
%         2,
%         & p=+\infty.
%     \end{cases}
% \]
% For \(p=1\), the first expression equals \(\frac{4}{\pi}\); as
% \(p\uparrow 2\), it converges to \(\sqrt{2}\). For every \(1\le p<2\), this
% bound is strictly smaller than the deterministic coordinate-wise median ratio
% \(\sqrt{2}\). The bounds for \(2\le p\le+\infty\) are tight.
% \end{theorem}

% Even for the case $p=1$, this bound is not tight, either. Consider a profile with one agent at $(-1, 0)$ and the other at $(1, 0)$. This mechanism achieves a bound of $\frac{1}{2}+\frac{2}{\pi} < \frac{4}{\pi}$. Therefore there is an obvious gap for $1\le p < 2$.

For \(p=1\), the first expression equals \(\frac{4}{\pi}\). As
\(p\) approaches 2 from below, it increases and converges to \(\sqrt{2}\). For every \(1\le p<2\), this bound is strictly smaller than the bound \(\sqrt{2}\) of the deterministic coordinate-wise median, whereas when $p\ge 2$ it cannot beat the deterministic coordinate-wise median in the asymptotic sense. We prove the theorem by a sequence of lemmas. %As we will show later, the analysis for \(2\le p\le+\infty\) is asymptotically tight.
%We first prove the improved upper bound for \(1\le p<2\). In fact, the same argument also applies at \(p=2\), where it gives \(\sqrt{2}\).

\begin{lemma}\label{lem:urcm-small-p}
For every \(1\le p\le 2\), \(\mathrm{URCM}\) is a
\(
    2\left(
        \frac{
            \Gamma\left(\frac{p+1}{2}\right)
        }{
            \sqrt{\pi}\,\Gamma\left(1+\frac{p}{2}\right)
        }
    \right)^{\frac{1}{p}}
\)
-approximation under the $L_p$-norm social cost. 
\end{lemma}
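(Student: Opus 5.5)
We mimic the proof of Lemma~\ref{lem:cm-small-p}, but only use the lossy step \eqref{eq:lp-to-l2-small-p} in expectation over the rotation. Let \(z\) be an optimal facility location. Fix an angle \(\theta\), and let \(m^\theta=m_{\theta,1}e_\theta+m_{\theta,2}e_\theta^\perp\) be the output of URCM for this angle. Since \(\{e_\theta,e_\theta^\perp\}\) is an orthonormal basis, we have
\[
\|x_i-m^\theta\|^2=\langle x_i-m^\theta,e_\theta\rangle^2+\langle x_i-m^\theta,e_\theta^\perp\rangle^2 .
\]
Applying \eqref{eq:l2-to-lp-small-p} to each agent and then Lemma~\ref{lem:one-dimensional-median} in each rotated coordinate, with \(t=\langle z,e_\theta\rangle\) and \(t=\langle z,e_\theta^\perp\rangle\) respectively, gives
\[
\SC_p(\mathbf{x},m^\theta)^p\le 2^{p-1}\sum_i\Bigl(|\langle x_i-z,e_\theta\rangle|^p+|\langle x_i-z,e_\theta^\perp\rangle|^p\Bigr).
\]
This bound holds for every fixed \(\theta\).

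Next we take expectation over \(\Theta\sim\mathrm{Unif}[0,2\pi)\). Write \(v_i=x_i-z=\|v_i\|(\cos\phi_i,\sin\phi_i)\). Then \(|\langle v_i,e_\Theta\rangle|=\|v_i\|\,|\cos(\Theta-\phi_i)|\) and \(|\langle v_i,e_\Theta^\perp\rangle|=\|v_i\|\,|\sin(\Theta-\phi_i)|\). Since \(\Theta-\phi_i\) is again uniform modulo \(2\pi\), both expectations equal \(c_p\|v_i\|^p\), where
\[
c_p=\mathbb E\bigl[|\cos\Theta|^p\bigr]=\frac{\Gamma\left(\frac{p+1}{2}\right)}{\sqrt{\pi}\,\Gamma\left(1+\frac p2\right)}.
\]
This value comes from the Beta integral \(\frac{2}{\pi}\int_0^{\pi/2}\cos^p t\,dt=\frac1\pi B\left(\frac{p+1}{2},\frac12\right)\) together with \(\Gamma\left(\frac12\right)=\sqrt\pi\). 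Therefore
\[
\mathbb E\bigl[\SC_p(\mathbf{x},m^\Theta)^p\bigr]\le 2^{p-1}\cdot 2c_p\,\OPT_p(\mathbf{x})^p=2^p c_p\,\OPT_p(\mathbf{x})^p .
\]

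Finally, the mechanism's cost is \(\mathbb E[\SC_p(\mathbf{x},m^\Theta)]\), not the \(p\)-th moment. Since \(t\mapsto t^{1/p}\) is concave, Jensen's inequality gives
\[
\mathbb E\bigl[\SC_p\bigr]\le\bigl(\mathbb E\bigl[\SC_p^p\bigr]\bigr)^{1/p}\le 2c_p^{1/p}\,\OPT_p(\mathbf{x}),
\]
which is exactly the claimed bound.

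The main points to verify are the Gamma-function evaluation of \(c_p\) and that the median lemma may be applied in rotated coordinates. The latter holds because it is purely one-dimensional and the rotated medians are exactly the medians of the projections. As sanity checks, \(p=1\) gives \(c_1=\frac{2}{\pi}\) and ratio \(\frac4\pi\), and \(p=2\) gives \(c_2=\frac12\) and ratio \(\sqrt2\).
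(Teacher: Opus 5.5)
Your proposal is correct and follows essentially the same argument as the paper: for each fixed angle, apply \eqref{eq:l2-to-lp-small-p} and then the one-dimensional median lemma in the rotated coordinates; average over the uniform rotation to obtain the Gamma-function constant; and use Jensen to pass from the $p$-th moment to the expected cost. The only cosmetic difference is that your $c_p$ is half of the paper's (the paper sets $c_p=2\,\mathbb E[|\cos\Theta|^p]$), and the final constant comes out the same.
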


\begin{proof}
Fix a realization \(\theta\) of the random angle, and let \(y_\theta\) be the
corresponding output of \(\mathrm{URCM}\). Recall that
\(\left(e_\theta,e_\theta^\perp\right)\) is an orthonormal basis of
\(\mathbb R^2\). Thus, for every vector \(w\in\mathbb R^2\),
\[
    \|w\|^2
    =
    \left\langle w,e_\theta\right\rangle^2
    +
    \left\langle w,e_\theta^\perp\right\rangle^2.
\]
Since \(1\le p\le 2\), according to \eqref{eq:l2-to-lp-small-p}, the norm inequality
\[
    \left(A^2+B^2\right)^{\frac{p}{2}}
    \le
    |A|^p+|B|^p
\]
holds for all real numbers \(A,B\). Applying this inequality to
\(w=x_i-y_\theta\), we obtain
\[
\begin{aligned}
    \|x_i-y_\theta\|^p
    &=
    \left(
        \left\langle x_i-y_\theta,e_\theta\right\rangle^2
        +
        \left\langle x_i-y_\theta,e_\theta^\perp\right\rangle^2
    \right)^{\frac{p}{2}} \le
    \left|\left\langle x_i-y_\theta,e_\theta\right\rangle\right|^p
    +
    \left|\left\langle x_i-y_\theta,e_\theta^\perp\right\rangle\right|^p.
\end{aligned}
\]
Summing over all agents gives
\[
\begin{aligned}
    \mathrm{SC}_p(\mathbf{x},y_\theta)^p
    &=
    \sum_i \|x_i-y_\theta\|^p  \le
    \sum_i
    \left(
        \left|\left\langle x_i-y_\theta,e_\theta\right\rangle\right|^p
        +
        \left|\left\langle x_i-y_\theta,e_\theta^\perp\right\rangle\right|^p
    \right).
\end{aligned}
\]

Now we use the fact that \(y_\theta\) is the coordinate-wise median in the
rotated coordinate system. By the definition of \(\mathrm{URCM}\),
\(\left\langle y_\theta,e_\theta\right\rangle\) is the median of
\(\left\langle x_1,e_\theta\right\rangle,\ldots,
\left\langle x_n,e_\theta\right\rangle\), and
\(\left\langle y_\theta,e_\theta^\perp\right\rangle\) is the median of
\(\left\langle x_1,e_\theta^\perp\right\rangle,\ldots,
\left\langle x_n,e_\theta^\perp\right\rangle\). Therefore, for any
comparison point \(z\in\mathbb R^2\), Lemma~\ref{lem:one-dimensional-median}
applied to the first rotated coordinate gives
\[
    \sum_i
    \left|\left\langle x_i-y_\theta,e_\theta\right\rangle\right|^p
    \le
    2^{p-1}
    \sum_i
    \left|\left\langle x_i-z,e_\theta\right\rangle\right|^p,
\]
and the same lemma applied to the second rotated coordinate gives
\[
    \sum_i
    \left|\left\langle x_i-y_\theta,e_\theta^\perp\right\rangle\right|^p
    \le
    2^{p-1}
    \sum_i
    \left|\left\langle x_i-z,e_\theta^\perp\right\rangle\right|^p.
\]
Combining these inequalities, we have, for every fixed \(\theta\) and every
\(z\in\mathbb R^2\),
\[
\begin{aligned}
    \mathrm{SC}_p(\mathbf{x},y_\theta)^p
    \le
    2^{p-1}
    \sum_i
    \left(
        \left|\left\langle x_i-z,e_\theta\right\rangle\right|^p
        +
        \left|\left\langle x_i-z,e_\theta^\perp\right\rangle\right|^p
    \right).
\end{aligned}
\]

Let \(o\) be an optimal facility location for the $L_p$-norm social cost, and
set \(z=o\). We next take expectation over the random angle \(\Theta\). Fix
a non-zero vector \(v\in\mathbb R^2\) and write it as \(v=\|v\|\left(\cos\phi,\sin\phi\right)\) for some angle $\phi$. %If \(v=0\), the following identity is trivial.
Then we have
\[
    \left\langle v,e_\Theta\right\rangle
    =
    \|v\|\cos(\Theta-\phi),
\]
and
\[
    \left|\left\langle v,e_\Theta^\perp\right\rangle\right|
    =
    \|v\|\left|\sin(\Theta-\phi)\right|.
\]
Since \(\Theta\) is uniform on \([0,2\pi)\), so is \(\Theta-\phi\) modulo
\(2\pi\). Hence
\[
\begin{aligned}
    &\mathbb E_\Theta
    \left[
        \left|\left\langle v,e_\Theta\right\rangle\right|^p
        +
        \left|\left\langle v,e_\Theta^\perp\right\rangle\right|^p
    \right] =
    \|v\|^p\,
    \mathbb E_\Theta
    \left[
        |\cos\Theta|^p+|\sin\Theta|^p
    \right].
\end{aligned}
\]
By symmetry,
\(\mathbb E_\Theta\left[|\cos\Theta|^p\right]
=
\mathbb E_\Theta\left[|\sin\Theta|^p\right]\). Therefore,
\[
    \mathbb E_\Theta
    \left[
        \left|\left\langle v,e_\Theta\right\rangle\right|^p
        +
        \left|\left\langle v,e_\Theta^\perp\right\rangle\right|^p
    \right]
    =
    c_p\|v\|^p,
\]
where
\[
    c_p
    :=
    2\,\mathbb E_\Theta\left[|\cos\Theta|^p\right].
\]

It remains to compute this expectation. We have
\[
\begin{aligned}
    \mathbb E_\Theta\left[|\cos\Theta|^p\right]
    &=
    \frac{1}{2\pi}\int_0^{2\pi}|\cos\theta|^p\,d\theta  =
    \frac{2}{\pi}\int_0^{\frac{\pi}{2}}\cos^p\theta\,d\theta.
\end{aligned}
\]
Using the standard beta-gamma integral,
\[
B(x,y) = 2\int_0^{\pi/2} \sin^{2x-1}\theta \cos^{2y-1}\theta \,d\theta = \frac{\Gamma(x)\Gamma(y)}{\Gamma(x+y)}.
\]
Set \(2x-1 = 0\) (i.e., \(x = \frac12\)) and \(2y-1 = p\) (i.e., \(y = \frac{p+1}{2}\)). Then
\[
\int_0^{\pi/2} \cos^p\theta \,d\theta = \frac{1}{2}\cdot B\!\left(\frac12,\frac{p+1}{2}\right)
= \frac{1}{2}\cdot \frac{\Gamma\left(\frac12\right)\,\Gamma\!\left(\frac{p+1}{2}\right)}{\Gamma\!\left(\frac12+\frac{p+1}{2}\right)}.
\]
Since \(\Gamma\left(\frac12\right)=\sqrt{\pi}\) and \(\frac12+\frac{p+1}{2} = 1+\frac{p}{2}\), we obtain
\[
\int_0^{\pi/2} \cos^p\theta \,d\theta = \frac{\sqrt{\pi}\,\Gamma\!\left(\frac{p+1}{2}\right)}{2\,\Gamma\!\left(1+\frac{p}{2}\right)}.
\]
Therefore we obtain
\begin{align}
    c_p
    =
    2\cdot
    \frac{
        \Gamma\left(\frac{p+1}{2}\right)
    }{
        \sqrt{\pi}\,\Gamma\left(1+\frac{p}{2}\right)
    } .
    \label{eq:urcm-cp}
\end{align}

Applying the above identity with \(v=x_i-o\), and taking expectation over
\(\Theta\), gives
\[
\begin{aligned}
    \mathbb E_\Theta
    \left[
        \mathrm{SC}_p(\mathbf{x},y_\Theta)^p
    \right]
    &\le
    2^{p-1}
    \sum_i
    \mathbb E_\Theta
    \left[
        \left|\left\langle x_i-o,e_\Theta\right\rangle\right|^p
        +
        \left|\left\langle x_i-o,e_\Theta^\perp\right\rangle\right|^p
    \right] \\
    &=
    2^{p-1}c_p
    \sum_i \|x_i-o\|^p  =
    2^{p-1}c_p\,\mathrm{OPT}_p(\mathbf{x})^p.
\end{aligned}
\]

Finally, we convert the moment bound into the expected social cost bound.
Since \(p\ge 1\), Jensen's inequality, or equivalently the monotonicity of
\(L_p\)-norms on a probability space, gives
\[
    \mathbb E_\Theta
    \left[
        \mathrm{SC}_p(\mathbf{x},y_\Theta)
    \right]
    \le
    \left(
        \mathbb E_\Theta
        \left[
            \mathrm{SC}_p(\mathbf{x},y_\Theta)^p
        \right]
    \right)^{\frac{1}{p}}.
\]
Thus
\[
\begin{aligned}
    \mathrm{ALG}_p(\mathbf{x})
    &=
    \mathbb E_\Theta
    \left[
        \mathrm{SC}_p(\mathbf{x},y_\Theta)
    \right]  \le
    2^{1-\frac{1}{p}}c_p^{1/p}\,
    \mathrm{OPT}_p(\mathbf{x}).
\end{aligned}
\]
Substituting \eqref{eq:urcm-cp}, the approximation factor is
\[
\begin{aligned}
    2^{1-\frac{1}{p}}c_p^{1/p}
    &=
    2^{1-\frac{1}{p}}
    \left(
        2\cdot
        \frac{
            \Gamma\left(\frac{p+1}{2}\right)
        }{
            \sqrt{\pi}\,\Gamma\left(1+\frac{p}{2}\right)
        }
    \right)^{1/p}  = 
    2\left(
        \frac{
            \Gamma\left(\frac{p+1}{2}\right)
        }{
            \sqrt{\pi}\,\Gamma\left(1+\frac{p}{2}\right)
        }
    \right)^{1/p}.
\end{aligned}
\]
This proves the claimed approximation guarantee.
\end{proof}

\begin{lemma}\label{lem:urcm-improvement}
For every \(1\le p<2\),
\[
    2\left(
        \frac{
            \Gamma\left(\frac{p+1}{2}\right)
        }{
            \sqrt{\pi}\,\Gamma\left(1+\frac{p}{2}\right)
        }
    \right)^{\frac{1}{p}}
    <
    \sqrt{2}.
\]
At the endpoints, the expression equals \(\frac{4}{\pi}\) when \(p=1\) and
\(\sqrt{2}\) when \(p=2\).
\end{lemma}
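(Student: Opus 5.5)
We need to show \(f(p)=2\left(\frac{\Gamma(\frac{p+1}{2})}{\sqrt\pi\,\Gamma(1+\frac p2)}\right)^{1/p}<\sqrt2\) for \(1\le p<2\), and compute the endpoint values. Write \(M_p:=\mathbb E_\Theta[|\cos\Theta|^p]=\frac{\Gamma(\frac{p+1}{2})}{\sqrt\pi\,\Gamma(1+\frac p2)}\), as computed in the proof of Lemma~\ref{lem:urcm-small-p}. Then \(f(p)=2M_p^{1/p}\).

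\emph{Endpoints.} We have \(\Gamma(1)=1\) and \(\Gamma(3/2)=\sqrt\pi/2\), so \(M_1=2/\pi\) and \(f(1)=4/\pi\). Also \(\Gamma(3/2)=\sqrt\pi/2\) and \(\Gamma(2)=1\), so \(M_2=1/2\) and \(f(2)=2\cdot 2^{-1/2}=\sqrt2\). Both values can also be read off directly from \(M_p=\frac{2}{\pi}\int_0^{\pi/2}\cos^p\theta\,d\theta\).

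\emph{Strict inequality.} The key observation is that \(M_p^{1/p}=\left(\mathbb E|\cos\Theta|^p\right)^{1/p}\) is the \(L_p\)-norm of the random variable \(|\cos\Theta|\) on a probability space. By Lyapunov's (power-mean) inequality, \(p\mapsto M_p^{1/p}\) is nondecreasing. Equality holds only if \(|\cos\Theta|\) is almost surely constant, which it is not, so the map is strictly increasing. In particular, for \(1\le p<2\),
\[
M_p^{1/p}<M_2^{1/2}=\tfrac{1}{\sqrt2},
\]
and therefore \(f(p)<\sqrt2\). The only real work is justifying strictness in the power-mean inequality. For this, I would apply Jensen's inequality to the strictly convex function \(x\mapsto x^{2/p}\) (strictly convex since \(2/p>1\)) evaluated at the non-degenerate variable \(|\cos\Theta|^p\). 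This gives \((\mathbb E|\cos\Theta|^p)^{2/p}<\mathbb E|\cos\Theta|^2=\tfrac12\), which is exactly the desired inequality after taking square roots.

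The main obstacle is only bookkeeping: checking that the Gamma expression really equals \(\mathbb E|\cos\Theta|^p\). This was already established in the proof of Lemma~\ref{lem:urcm-small-p}, so no further computation is needed. For the monotone increase claimed in the surrounding text, the same Lyapunov argument applied to any pair \(p<p'\) gives strict monotonicity on all of \([1,2]\).
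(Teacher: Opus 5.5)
Your proposal is correct and is essentially the paper's proof: both write the expression as \(2\|Z\|_{L_p}\) with \(Z=|\cos\Theta|\), use strict monotonicity of \(L_p\)-norms on a probability space because \(Z\) is not almost surely constant, and compare with \(\mathbb E[\cos^2\Theta]=\frac12\). The differences are cosmetic. You justify strictness explicitly via Jensen for the strictly convex map \(x\mapsto x^{2/p}\), and you evaluate the endpoints through \(\Gamma(1)=\Gamma(2)=1\) and \(\Gamma(\frac32)=\frac{\sqrt\pi}{2}\), whereas the paper computes \(2\,\mathbb E|\cos\Theta|=\frac{4}{\pi}\) directly from the integral.
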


\begin{proof}
Let \(\Theta\sim\mathrm{Unif}[0,2\pi)\), and define the random variable
\(Z=|\cos\Theta|\). From the computation of \(c_p\) in
\eqref{eq:urcm-cp}, we have
\[
    \mathbb E_\Theta[Z^p]
    =
    \mathbb E_\Theta[|\cos\Theta|^p]
    =
    \frac{
        \Gamma\left(\frac{p+1}{2}\right)
    }{
        \sqrt{\pi}\,\Gamma\left(1+\frac{p}{2}\right)
    }.
\]
Therefore the approximation factor in Lemma~\ref{lem:urcm-small-p} can be
written as
\[
    2\left(
        \frac{
            \Gamma\left(\frac{p+1}{2}\right)
        }{
            \sqrt{\pi}\,\Gamma\left(1+\frac{p}{2}\right)
        }
    \right)^{\frac{1}{p}}
    =
    2\left(\mathbb E_\Theta[Z^p]\right)^{\frac{1}{p}}
    =
    2\|Z\|_{L_p}.
\]
Here \(\|Z\|_{L_p}=\left(\mathbb E_\Theta[|Z|^p]\right)^{\frac{1}{p}}\) is
the usual \(L_p\)-norm on the probability space
\(\left([0,2\pi),\mathrm{Unif}\right)\).

We next compare this quantity with its value at \(p=2\). On a probability
space, \(L_p\)-norms are monotone in \(p\): if \(1\le p<q\), then
\(\|Z\|_{L_p}\le \|Z\|_{L_q}\). Moreover, the inequality is strict when
\(Z\) is not almost surely constant. In our case, \(Z=|\cos\Theta|\) is not
almost surely constant, since it takes values near \(0\) and near \(1\) on
sets of positive measure. Hence, for every \(1\le p<2\),
\[
    2\|Z\|_{L_p}
    <
    2\|Z\|_{L_2}.
\]
Finally,
\[
    \|Z\|_{L_2}
    =
    \left(\mathbb E_\Theta\left[\cos^2\Theta\right]\right)^{\frac{1}{2}}
    =
    \left(\mathbb E_\Theta\left[\sin^2\Theta\right]\right)^{\frac{1}{2}}
    =
    \left(\mathbb E_\Theta\left[\frac{\cos^2\Theta+\sin^2\Theta}{2}\right]\right)^{\frac{1}{2}}
    =
    \left(\frac{1}{2}\right)^{\frac{1}{2}},
\]
and therefore \(2\|Z\|_{L_2}=\sqrt{2}\). This proves the strict improvement
over \(\sqrt{2}\) for every \(1\le p<2\).

It remains to compute the endpoint \(p=1\). We have
\[
\begin{aligned}
    2\|Z\|_{L_1}
    &=
    2\mathbb E_\Theta[|\cos\Theta|]  =
    2\cdot \frac{1}{2\pi}\int_0^{2\pi}|\cos\theta|\,d\theta  =
    2\cdot \frac{1}{2\pi}\cdot 4
    =
    \frac{4}{\pi}.
\end{aligned}
\]
At \(p=2\), as computed above, the expression is
\(2\|Z\|_{L_2}=\sqrt{2}\).
\end{proof}

We next show that for \(p\ge 2\), random rotation cannot improve the
bound \(2^{1-\frac{1}{p}}\) of the (deterministic) coordinate-wise median.

\begin{lemma}\label{lem:urcm-large-p}
For every \(2\le p\le +\infty\), the approximation ratio  \(\alpha_p(\mathrm{URCM})\) of the $\mathrm{URCM}$ mechanism  %tight approximation ratio \(2^{1-\frac{1}{p}}\).
lies between $\left[2^{1-\frac{1}{p}}(1-\frac1n)^{\frac1p}, 2^{1-\frac{1}{p}}\right]$, where the bound is asymptotically tight. %For \(p=+\infty\), \(\mathrm{URCM}\) has tight approximation ratio \(2\).
\end{lemma}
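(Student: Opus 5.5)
The upper bound follows from a rotation-invariance observation. For each fixed angle \(\theta\), the output \(y_\theta\) of \(\mathrm{URCM}\) is exactly \(\mathrm{CM}\) applied to the profile expressed in the orthonormal basis \(\{e_\theta,e_\theta^\perp\}\). Euclidean distances are invariant under this orthogonal change of coordinates, so applying Lemma~\ref{lem:cm-large-p} to the rotated profile gives
\[
\mathrm{SC}_p(\mathbf{x},y_\theta)\le 2^{1-\frac1p}\,\mathrm{OPT}_p(\mathbf{x})
\]
pointwise in \(\theta\), for every \(2\le p\le+\infty\). Taking the expectation over \(\Theta\) preserves this bound. Alternatively, one can repeat the pairing argument of Lemma~\ref{lem:cm-large-p} verbatim in the rotated coordinates.

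For the lower bound we need an instance on which every rotation yields a bad output. Placing a majority at one point does this. Take \(n\) agents with \(n-1\) of them at the origin and one agent at a unit vector \(u\). For \(n\ge 3\), in every rotated coordinate system at least \(\lceil n/2\rceil\) coordinates equal \(0\), so both medians are \(0\) and \(y_\theta=0\) for all \(\theta\). Hence
\[
\mathrm{ALG}_p(\mathbf{x})=\mathrm{SC}_p(\mathbf{x},0)=1 \qquad (\text{including } p=+\infty).
\]
This instance turns out to be too weak, because \(\mathrm{OPT}\) is then close to \(1\). The right construction is the one in Lemma~\ref{lem:cm-large-p}: \(k\) agents at \(u\) and \(k\) at \(0\). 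With \(n\) odd, put \(\frac{n+1}{2}\) agents at \(0\) and \(\frac{n-1}{2}\) at \(u\). In every rotation the lower median of each coordinate is \(0\), so \(y_\theta=0\) and \(\mathrm{ALG}_p^p=\frac{n-1}{2}\). For \(\mathrm{OPT}\), evaluate the cost at the midpoint \(u/2\):
\[
\mathrm{OPT}_p^p\le n\,2^{-p}.
\]
This gives
\[
\mathrm{ratio}\ \ge\ \Big(\tfrac{n-1}{2}\cdot\tfrac{2^p}{n}\Big)^{\frac1p}=2^{1-\frac1p}\Big(1-\tfrac1n\Big)^{\frac1p}.
\]
For even \(n\), using the \(\lceil n/2\rceil\)-th smallest value as the median, placing \(\frac n2\) agents at \(0\) and \(\frac n2\) at \(u\) makes the median of every projection equal to \(\min(0,\langle u,e_\theta\rangle)\), which is not always \(0\). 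It is therefore cleaner to use \(\frac n2+1\) agents at \(0\), or simply to state the bound for odd \(n\) and observe that it tends to \(2^{1-\frac1p}\) as \(n\to\infty\). For \(p=+\infty\) the factor \((1-\frac1n)^{1/p}\) equals \(1\), and the same instance gives ratio exactly \(2\), since \(\mathrm{ALG}=1\) and \(\mathrm{OPT}\le\frac12\).

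The main obstacle is this tie-breaking issue: verifying that \(y_\theta=0\) for all \(\theta\) requires a strict majority at the origin, which forces the slightly unbalanced split and produces the \((1-\frac1n)^{1/p}\) loss. The other ingredients are direct reuses of Lemma~\ref{lem:cm-large-p}. Asymptotic tightness then follows because \((1-\frac1n)^{1/p}\to1\) as \(n\to\infty\).
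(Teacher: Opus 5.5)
Your proposal is correct and follows the paper's proof essentially verbatim. The upper bound comes from applying Lemma~\ref{lem:cm-large-p} pointwise in $\theta$ via rotation invariance. The lower bound comes from the majority-at-the-origin instance with $k+1$ agents at the origin and $k$ at a fixed nonzero point (the paper uses $u=(1,1)$ rather than a unit vector), with $\mathrm{OPT}$ bounded at the midpoint. The paper's instance also has $n=2k+1$, so it too only certifies the $(1-\frac1n)^{1/p}$ factor for odd $n$. Your even-$n$ variant with $\frac n2+1$ agents at the origin gives only $2^{1-\frac1p}(1-\frac2n)^{1/p}$ under the same midpoint estimate, which still suffices for asymptotic tightness.
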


\begin{proof}
We first consider $2\le p\le +\infty$.
For any fixed angle \(\theta\), the realized deterministic mechanism is just
coordinate-wise median in the rotated coordinate system
\(\left(e_\theta,e_\theta^\perp\right)\). Since rotations preserve Euclidean
distances and the $L_p$-norm social cost, Lemma~\ref{lem:cm-large-p} implies
that, for every profile \(\mathbf{x}\) and every \(\theta\),
\[
    \mathrm{SC}_p(\mathbf{x},y_\theta)
    \le
    2^{1-\frac{1}{p}}\mathrm{OPT}_p(\mathbf{x}).
\]
Taking expectation over \(\Theta\) gives the same upper bound for
\(\mathrm{URCM}\).

For the lower bound, let \(u=(1,1)\), and consider a ``majority-at-the-origin" instance with \(k\)
agents at \(u\) and \(k+1\) agents at the origin. For every rotation angle
\(\theta\), the origin is a strict majority in both rotated coordinates.
Therefore, \(\mathrm{URCM}\) always outputs the origin. Hence
\(\mathrm{ALG}_p(\mathbf{x})=(k\|u\|^p)^{1/p}=k^{1/p}2^{1/2}\). At the midpoint
\(\frac{u}{2}\), every agent has distance
\(\frac{\|u\|}{2}=\frac{1}{\sqrt{2}}\), so
\(
    \mathrm{OPT}_p(\mathbf{x})
    \le
    ((2k+1)2^{-p/2})^{1/p}=n^{1/p}2^{-1/2}.
\)
Thus the approximation ratio is at least
\[
    \frac{k^{\frac1p}2^{\frac12}}{n^{\frac1p}2^{-\frac12}} = 2^{1-\frac{1}{p}}(1-\frac1n)^{\frac1p},
\]
which converges to \(2^{1-1/p}\) as \(n\to\infty\). 
%Therefore the approximation ratio is at least \(2^{1-\frac{1}{p}}(1-\frac1n)^{\frac1p}\). %asymptotically matching the upper bound.
\iffalse
Then we turn to $p=+\infty$. For each fixed angle \(\theta\), the realized mechanism is coordinate-wise
median in a rotated coordinate system. \redcomment{By Lemma~\ref{lem:cm-infinity}, }for
every profile \(\mathbf{x}\),
\[
    \mathrm{SC}_{+\infty}(\mathbf{x},y_\theta)
    \le
    2\,\mathrm{OPT}_{+\infty}(\mathbf{x}).
\]
Taking expectation over \(\Theta\) gives the upper bound for
\(\mathrm{URCM}\).

The tightness \bluecomment{is given by the same majority-at-the-origin instance}. %take the same instance as in Lemma~\ref{lem:urcm-large-p}: \(k\) agents are at \(u=(1,1)\), and \(k+1\) agents are at the origin. 
For
every rotation angle, the origin is a strict majority in both rotated
coordinates, so \(\mathrm{URCM}\) always outputs the origin. Therefore
\(\mathrm{ALG}_{+\infty}(\mathbf{x})=\sqrt{2}\). At the midpoint
\(\frac{u}{2}\), the maximum distance is \(\frac{1}{\sqrt{2}}\), and hence
\(\mathrm{OPT}_{+\infty}(\mathbf{x})\le\frac{1}{\sqrt{2}}\). The ratio is at
least \(2\), matching the upper bound.
\fi
\end{proof}

Also we show a lower bound of URCM by the following cases.

\begin{lemma}\label{lem:urcm-small-p-lower}
For every \(1\le p\le 2\),
\[
    \alpha_p(\mathrm{URCM})
    \ge
    2^{1-\frac{1}{p}}\frac{4}{\pi}
    \int_0^{\frac{\pi}{4}}
    \left(\cos^p t+\sin^p t\right)^{\frac{1}{p}}\,dt.
\]
In particular, this lower bound equals \(\frac{4}{\pi}\) at \(p=1\) and
\(\sqrt{2}\) at \(p=2\).
\end{lemma}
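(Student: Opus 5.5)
The plan is to exhibit one explicit profile on which URCM's expected cost, divided by the optimum, equals the claimed integral exactly. I take the symmetric two-cluster profile with \(n=2k\): \(k\) agents at \((1,0)\) and \(k\) agents at \((-1,0)\). Already \(k=1\) suffices. The proof then has three steps: compute \(\mathrm{OPT}_p\), compute the URCM output for every fixed angle, and average the resulting ratio over \(\Theta\).

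\emph{Optimum.} The map \(y\mapsto \mathrm{SC}_p(\mathbf{x},y)\) is convex, since it is a norm of a vector of convex functions. The profile is invariant under \(x\mapsto -x\), so \(\mathrm{SC}_p(\mathbf{x},y)=\mathrm{SC}_p(\mathbf{x},-y)\). By convexity,
\[
\mathrm{SC}_p(\mathbf{x},0)\le \tfrac12\left(\mathrm{SC}_p(\mathbf{x},y)+\mathrm{SC}_p(\mathbf{x},-y)\right)=\mathrm{SC}_p(\mathbf{x},y),
\]
so the origin is optimal. Every agent is at distance \(1\) from it, giving \(\mathrm{OPT}_p(\mathbf{x})^p=2k\).

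\emph{Output for a fixed angle.} Fix \(\theta\) and write \(c=\cos\theta\), \(s=\sin\theta\).
\begin{itemize}
\item The agents at \((1,0)\) have rotated coordinates \((c,-s)\), and those at \((-1,0)\) have \((-c,s)\).
\item The median is the \(\lceil n/2\rceil\)-th smallest value, i.e.\ the lower median, and each coordinate takes two values with multiplicity \(k\). Hence \(m_{\theta,1}=-|c|\) and \(m_{\theta,2}=-|s|\).
\item The rotated-coordinate differences are therefore \((c+|c|,\,|s|-s)\) for the first group and \((|c|-c,\,|s|+s)\) for the second.
\end{itemize}
Checking the four sign cases of \((c,s)\): in each case one group has its difference vector equal to \((2|c|,0)\) and the other to \((0,2|s|)\), up to swapping the roles of the groups. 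So one group is at distance \(2|c|\) and the other at distance \(2|s|\). Therefore
\[
\mathrm{SC}_p(\mathbf{x},y_\theta)=\bigl(k2^p(|c|^p+|s|^p)\bigr)^{1/p},
\qquad
\frac{\mathrm{SC}_p(\mathbf{x},y_\theta)}{\mathrm{OPT}_p(\mathbf{x})}=2^{1-\frac1p}\left(|\cos\theta|^p+|\sin\theta|^p\right)^{\frac1p}.
\]

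\emph{Averaging over \(\Theta\).} The function \(\theta\mapsto(|\cos\theta|^p+|\sin\theta|^p)^{1/p}\) has period \(\pi/2\) and is symmetric under \(\theta\mapsto \pi/2-\theta\). Hence its average over \([0,2\pi)\) equals \(\frac{4}{\pi}\int_0^{\pi/4}(\cos^p t+\sin^p t)^{1/p}\,dt\). Since \(\mathrm{ALG}_p\) is the expectation of \(\mathrm{SC}_p(\mathbf{x},y_\Theta)\), this yields exactly the claimed lower bound on \(\alpha_p(\mathrm{URCM})\). The argument never used \(p\le 2\), so the same bound holds for all \(p\ge1\).

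\emph{Endpoints.} At \(p=1\), \(\int_0^{\pi/4}(\cos t+\sin t)\,dt=1\), so the bound equals \(4/\pi\). At \(p=2\), the integrand is \(1\), so the integral is \(\pi/4\) and the bound equals \(2^{1/2}=\sqrt2\).

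The only delicate step is the case analysis of the lower-median tie-breaking. In particular, when \(c=0\) or \(s=0\) the two projected values coincide, but these angles have measure zero, so they do not affect the average.
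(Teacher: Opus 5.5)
There is a genuine gap, and it is in the tie-breaking step you flagged as delicate. Your medians $m_{\theta,1}=-|c|$ and $m_{\theta,2}=-|s|$ are correct, and so are the difference vectors $(c+|c|,\,|s|-s)$ and $(|c|-c,\,|s|+s)$. The four-case conclusion, however, holds only when $cs>0$.

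If $c>0>s$, the first group's difference is $(2c,2|s|)$ and the second group's is $(0,0)$. Indeed $(-|c|,-|s|)=(-c,s)$ is exactly the rotated image of $(-1,0)$, so URCM outputs that agent location; symmetrically, it outputs $(1,0)$ when $c<0<s$. These two quadrants are half the circle, not a null set. On them the ratio is only $2^{1-\frac1p}$. For $1\le p<2$ this is strictly below $2^{1-\frac1p}(|c|^p+|s|^p)^{1/p}$, since $(|c|^p+|s|^p)^{1/p}\ge 1$ there. The true expected ratio on your instance is therefore
\[
    2^{1-\frac1p}\left(\frac12+\frac{2}{\pi}\int_0^{\frac{\pi}{4}}\left(\cos^p t+\sin^p t\right)^{\frac1p}\,dt\right).
\]
At $p=1$ this equals $\frac12+\frac{2}{\pi}<\frac{4}{\pi}$. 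It falls strictly short of the claimed bound for every $1\le p<2$, and agrees with it only at $p=2$.

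The missing idea is a way to force the median to break ties toward a ``mixed corner'' (one coordinate taken from each cluster) in every quadrant. The paper does this with an odd profile and a far outlier: $k=M^3$ agents at $(1,0)$, $k$ at $(0,1)$, and one at $(-M,-M)$. The median is then the $(k+1)$-st order statistic. Outside an angle set $W_M$ of measure at most $\frac{2}{M}$, the outlier's first projection lies beyond both clusters' projections. This pushes the first-coordinate median to $-\sin\theta$ on $[0,\frac{\pi}{4}]$ and to $\cos\theta$ on $[\frac{\pi}{4},\frac{\pi}{2})$. Hence at all such angles the cluster distances are $\cos\theta+\sin\theta$ and $|\cos\theta-\sin\theta|$.

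The outlier adds only $O(M^p)=o(k)$ to $\mathrm{OPT}_p^p$ (this uses $p\le 2$). The bound is obtained in the limit $M\to\infty$, not on a single instance.

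In your normalization the same repair works by adding one agent at $(0,M)$, perpendicular to the segment joining the clusters. Its rotated coordinates are $(M\sin\theta,\,M\cos\theta)$, which make the output $\mathrm{sgn}(cs)\,(c,s)$. This gives cluster distances $2|s|$ and $2|c|$ for all but an $O(1/M)$-measure set of angles. After that, your averaging and endpoint computations go through unchanged.
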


\begin{proof}
Fix \(1\le p\le 2\). We construct a sequence of instances whose
approximation ratio converges to the claimed value. Let \(M\) be a positive
integer and let \(k=M^3\). Consider the profile with \(k\) agents at $A=(1,0)$,
\(k\) agents at $B=(0,1)$,
and one agent at $C=(-M,-M)$.
The number of agents is \(2k+1\), so in each coordinate the median is the
\((k+1)\)-st order statistic.

By the \(\frac{\pi}{2}\)-periodicity of the rotated coordinate-wise median
on odd-size profiles, it is enough to average over
\(\theta\sim\mathrm{Unif}[0,\frac{\pi}{2})\). Write
\(c=\cos\theta\) and \(s=\sin\theta\). After rotation by \(\theta\), the
three locations become
$A'=(c,s)$, $B'=(-s,c)$, and $C'=(-M(c-s),-M(c+s))$.
Let \(m_\theta\) be the coordinate-wise median of the rotated profile.

The only coordinate in which the outlier can affect the median, except for a
small set of angles, is the first coordinate. Define
\[
    W_M
    =
    \left\{
        \theta\in\left[0,\frac{\pi}{2}\right):
        -M(\cos\theta-\sin\theta)\in(-\sin\theta,\cos\theta)
    \right\}.
\]
We first note that
\[
    |W_M|\le \frac{2}{M}.
\]
Indeed, for \(\theta\in(0,\frac{\pi}{4}]\), the condition
\(-M(c-s)>-s\) is equivalent to
\[
    \tan\theta>\frac{M}{M+1}.
\]
Hence
\[
    W_M\cap\left(0,\frac{\pi}{4}\right]
    \subseteq
    \left(
        \arctan\frac{M}{M+1},
        \frac{\pi}{4}
    \right],
\]
whose length is at most \(\frac{1}{M}\). Similarly, for
\(\theta\in[\frac{\pi}{4},\frac{\pi}{2})\), the condition
\(-M(c-s)<c\) is equivalent to
\[
    \tan\theta<\frac{M+1}{M},
\]
and therefore
\[
    W_M\cap\left[\frac{\pi}{4},\frac{\pi}{2}\right)
    \subseteq
    \left[
        \frac{\pi}{4},
        \arctan\left(1+\frac{1}{M}\right)
    \right),
\]
whose length is also at most \(\frac{1}{M}\). Thus
\(|W_M|\le\frac{2}{M}\).

For every \(\theta\notin W_M\), the median has a simple form. If
\(\theta\in[0,\frac{\pi}{4}]\setminus W_M\), then $m_\theta=(-s,s)$; and
if \(\theta\in[\frac{\pi}{4},\frac{\pi}{2})\setminus W_M\), then $m_\theta=(c,c)$.
In both cases, the two cluster distances satisfy
\[
    \|A'-m_\theta\|=\cos\theta+\sin\theta,
    \qquad
    \|B'-m_\theta\|=|\cos\theta-\sin\theta|,
\]
up to interchanging the two distances. Hence, for every
\(\theta\notin W_M\),
\[
\begin{aligned}
    \mathrm{SC}_p(P',m_\theta)
    &\ge
    \left(
        k(\cos\theta+\sin\theta)^p
        +
        k|\cos\theta-\sin\theta|^p
    \right)^{\frac{1}{p}}  =
    k^{\frac{1}{p}}
    \left(
        (\cos\theta+\sin\theta)^p
        +
        |\cos\theta-\sin\theta|^p
    \right)^{\frac{1}{p}},
\end{aligned}
\]
where \(P'\) is the rotated profile. Since rotations preserve all Euclidean
distances, this is also a lower bound on the realized cost of
\(\mathrm{URCM}\) on the original profile.

Therefore
\[
\begin{aligned}
    \mathrm{ALG}_p(P)
    &=
    \mathbb E_\theta\left[\mathrm{SC}_p(P',m_\theta)\right] \ge
    \frac{2}{\pi}k^{\frac{1}{p}}
    \int_{\left[0,\frac{\pi}{2}\right)\setminus W_M}
    \left(
        (\cos\theta+\sin\theta)^p
        +
        |\cos\theta-\sin\theta|^p
    \right)^{\frac{1}{p}}\,d\theta.
\end{aligned}
\]
The integrand is bounded by an absolute constant for \(1\le p\le 2\), while
\(|W_M|\le\frac{2}{M}\). Hence
\[
\begin{aligned}
    \mathrm{ALG}_p(P)
    &\ge
    \frac{2}{\pi}k^{\frac{1}{p}}
    \int_0^{\frac{\pi}{2}}
    \left(
        (\cos\theta+\sin\theta)^p
        +
        |\cos\theta-\sin\theta|^p
    \right)^{\frac{1}{p}}\,d\theta
    -
    o\left(k^{\frac{1}{p}}\right).
\end{aligned}
\]
By symmetry and the substitution \(t=\frac{\pi}{4}-\theta\) on
\([0,\frac{\pi}{4}]\),
\[
\begin{aligned}
    &\int_0^{\frac{\pi}{2}}
    \left(
        (\cos\theta+\sin\theta)^p
        +
        |\cos\theta-\sin\theta|^p
    \right)^{\frac{1}{p}}\,d\theta  =
    2\sqrt{2}
    \int_0^{\frac{\pi}{4}}
    \left(\cos^p t+\sin^p t\right)^{\frac{1}{p}}\,dt.
\end{aligned}
\]
It follows that
\[
    \mathrm{ALG}_p(P)
    \ge
    \frac{4\sqrt{2}}{\pi}
    k^{\frac{1}{p}}
    \int_0^{\frac{\pi}{4}}
    \left(\cos^p t+\sin^p t\right)^{\frac{1}{p}}\,dt
    -
    o\left(k^{\frac{1}{p}}\right).
\]

It remains to upper bound the optimum. Consider the point
\(z=(\frac{1}{2},\frac{1}{2})\). The distance from \(z\) to both \(A\) and
\(B\) is \(\frac{1}{\sqrt{2}}\), while the distance from \(z\) to \(C\) is
\(\sqrt{2}(M+\frac{1}{2})\). Therefore
\[
\begin{aligned}
    \mathrm{OPT}_p(P)^p
    &\le
    2k\cdot 2^{-\frac{p}{2}}
    +
    \left(\sqrt{2}\left(M+\frac{1}{2}\right)\right)^p  =
    k\,2^{1-\frac{p}{2}}+O(M^p).
\end{aligned}
\]
Since \(k=M^3\) and \(p\le 2\), we have \(M^p=o(k)\). Hence
\[
    \mathrm{OPT}_p(P)
    \le
    k^{\frac{1}{p}}2^{\frac{1}{p}-\frac{1}{2}}(1+o(1)).
\]
Combining the lower bound on \(\mathrm{ALG}_p(P)\) with this upper bound on
\(\mathrm{OPT}_p(P)\), and then taking \(M\to+\infty\), gives
\[
\begin{aligned}
    \alpha_p(\mathrm{URCM})
    &\ge
    \frac{
        \frac{4\sqrt{2}}{\pi}
        \int_0^{\frac{\pi}{4}}
        \left(\cos^p t+\sin^p t\right)^{\frac{1}{p}}\,dt
    }{
        2^{\frac{1}{p}-\frac{1}{2}}
    }  =
    2^{1-\frac{1}{p}}\frac{4}{\pi}
    \int_0^{\frac{\pi}{4}}
    \left(\cos^p t+\sin^p t\right)^{\frac{1}{p}}\,dt.
\end{aligned}
\]

At \(p=1\),
\[
    \int_0^{\frac{\pi}{4}}(\cos t+\sin t)\,dt=1,
\]
so the lower bound is \(\frac{4}{\pi}\). At \(p=2\),
\[
    \int_0^{\frac{\pi}{4}}
    \left(\cos^2 t+\sin^2 t\right)^{\frac{1}{2}}\,dt
    =
    \frac{\pi}{4},
\]
so the lower bound is \(\sqrt{2}\).
\end{proof}

% \begin{lemma}\label{lem:urcm-infinity}
% For \(p=+\infty\), \(\mathrm{URCM}\) has tight approximation ratio \(2\). \redcomment{[combine with Lem~\ref{lem:urcm-large-p} together?] \bluecomment{I agree.}}
% \end{lemma}

% \begin{proof}
% For each fixed angle \(\theta\), the realized mechanism is coordinate-wise
% median in a rotated coordinate system. By Lemma~\ref{lem:cm-infinity}, for
% every profile \(\mathbf{x}\),
% \[
%     \mathrm{SC}_{+\infty}(\mathbf{x},y_\theta)
%     \le
%     2\,\mathrm{OPT}_{+\infty}(\mathbf{x}).
% \]
% Taking expectation over \(\Theta\) gives the upper bound for
% \(\mathrm{URCM}\).

% For tightness, take the same instance as in Lemma~\ref{lem:urcm-large-p}:
% \(k\) agents are at \(u=(1,1)\), and \(k+1\) agents are at the origin. For
% every rotation angle, the origin is a strict majority in both rotated
% coordinates, so \(\mathrm{URCM}\) always outputs the origin. Therefore
% \(\mathrm{ALG}_{+\infty}(\mathbf{x})=\sqrt{2}\). At the midpoint
% \(\frac{u}{2}\), the maximum distance is \(\frac{1}{\sqrt{2}}\), and hence
% \(\mathrm{OPT}_{+\infty}(\mathbf{x})\le\frac{1}{\sqrt{2}}\). The ratio is at
% least \(2\), matching the upper bound.
% \end{proof}

\begin{proof}[Proof of Theorem~\ref{thm:urcm}]
Strategyproofness in expectation follows immediately from the fact that, for
every fixed rotation angle \(\theta\), the realized mechanism is a
coordinate-wise median mechanism in a rotated coordinate system and is
therefore strategyproof. A distribution over deterministic strategyproof 
mechanisms is strategyproof in expectation.

The approximation ratio for \(1\le p\le 2\) follows from
Lemma~\ref{lem:urcm-small-p} and \ref{lem:urcm-small-p-lower}. The special values at \(p=1\) and \(p=2\), as
well as the strict improvement over \(\sqrt{2}\) for \(1\le p<2\), follow
from Lemma~\ref{lem:urcm-improvement}. The guarantees for
\(2\le p\le +\infty\) follow from
Lemma~\ref{lem:urcm-large-p}.
\end{proof}

Theorem~\ref{thm:urcm} %implements the random-rotation idea suggested by Goel and Hann-Caruthers \cite{GoelH23}. It
gives the first randomized strategyproof mechanism that improves the
\(\sqrt{2}\) guarantee of deterministic coordinate-wise median for the \(L_1\) social
cost, achieving a \(\frac{4}{\pi}\)-approximation, and more generally, the
improvement holds  for all \(1\le p<2\). For \(p\ge 2\), the
majority-at-the-origin instance shows that random rotation cannot improve
the deterministic guarantee \(2^{1-\frac{1}{p}}\).  %and the same phenomenon gives the tight factor \(2\) for \(p=+\infty\).

We now explain the reasons. When \(1 \le p < 2\), the \(p\)-th powers of the two
coordinate projections of a vector depend on the choice of coordinate axes.
A poorly chosen fixed coordinate system may overestimate the contribution of
some directions. URCM avoids committing to any one such coordinate system by
averaging uniformly over all rotations. Precisely, for a fixed vector \(v\), after a random rotation, its two projected lengths
are proportional to \(|\cos \Theta|\) and \(|\sin \Theta|\). Hence the
expected contribution of \(v\) is governed by the average value of
\(|\cos \Theta|^p\) over a uniformly random angle \(\Theta\). This average is obtained by evaluating the integral of \(|\cos \theta|^p\) over the
circle using the standard beta-gamma identity, and  this is where the
Gamma-function expression in the approximation ratio comes from.

The distinction between \(p<2\) and \(p=2\) is  reflected in the basic
norm inequality
\[
    \left(A^2+B^2\right)^{\frac{p}{2}}
    \le
    |A|^p+|B|^p,
\]
which strictly holds for \(1 \le p < 2\) and non-zero reals \(A\) and \(B\). At \(p=2\), it becomes an equality.
Equivalently, for a rotated vector whose projected lengths are proportional
to \(|\cos \Theta|\) and \(|\sin \Theta|\), we have
\(
    \cos^2 \Theta+\sin^2 \Theta=1.
\)
Thus, at \(p=2\), the sum of the squared projected lengths is exactly the
squared Euclidean length of the vector for every rotation. The averaging
advantage therefore disappears.
For \(p \ge 2\), every fixed rotated coordinate-wise median already satisfies
the deterministic bound \(2^{1-1/p}\). Consequently, averaging over rotations
cannot improve the worst-case guarantee in this regime.

\section{Centroid Random Dictatorship}\label{sec:crd}

Feldman and Wilf  \cite{feldman2013strategyproof} first introduced a simple randomized strategyproof mechanism on the real line that mixes the centroid and the random dictatorship, with half probability for each.
Later, Tang et al. \cite{tang2020characterization} applied it to multi-dimensional Euclidean spaces.
They proved the strategyproofness of this mechanism and showed that it achieves the tight ratio \(2-\frac{1}{n}\) for the max-cost objective. The mechanism was not given a specific name there; we call it
\emph{Centroid Random Dictatorship}, abbreviated as \(\mathrm{CRD}\).
Given a profile \(\mathbf{x}\), \(\mathrm{CRD}\) returns 
the centroid \(\bar x=\frac{1}{n}\sum_i x_i\)  with probability
\(\frac{1}{2}\), 
and returns each agent location \(x_j\) with probability
\(\frac{1}{2n}\).

% \begin{mechanism}[Centroid Random Dictatorship]\label{mec:crd}
% Given a profile \(\mathbf{x}=(x_1,\ldots,x_n)\in(\mathbb R^2)^n\), let
% \(c=\frac{1}{n}\sum_i x_i\) be the centroid. The centroid random dictatorship
% mechanism, denoted by \(\mathrm{CRD}\), returns \(c\) with probability
% \(\frac{1}{2}\), and returns each agent location \(x_j\) with probability
% \(\frac{1}{2n}\).
% \end{mechanism}

%\crdmechanism*

%The following theorem gives the approximation ratio of \(\mathrm{CRD}\) for all \(p\ge 1\).

\crdtheorem*

When $p=1$, the above upper bound is equal to $2-\frac2n$. As $p$ approaches $2$ from below, it decreases to  $\frac{1+\sqrt2}{2}\approx 1.2$; as $p$ approaches infinity it increases to $2-\frac1n$.  %For a large enough number of agents $n$, the approximation ratio $\alpha_p(\mathrm{CRD})$ is smaller than  $\alpha_p(\mathrm{URCM})$ when $p$ is greater than approximately 1.6.

% \begin{theorem}\label{thm:crd}
% The centroid random dictatorship mechanism is strategyproof in expectation.
% Moreover, under the $L_p$-norm social cost, it satisfies
% \[
%     \alpha_p(\mathrm{CRD})
%     \le
%     \begin{cases}
%         \displaystyle
%         \frac{1}{2}q_n^{\frac{2}{p}-1}
%         \left(
%             2^{\frac{2}{p}-1}+2^{\frac{1}{p}}
%         \right),
%         & 1\le p\le 2,\\[12pt]
%         \displaystyle
%         \frac{1}{2}
%         \left(
%             2^{1-\frac{2}{p}}q_n^{1-\frac{2}{p}}
%             +
%             2^{1-\frac{1}{p}}
%         \right),
%         & 2\le p<+\infty,\\[12pt]
%         \displaystyle
%         2-\frac{1}{n},
%         & p=+\infty.
%     \end{cases}
% \]
% In particular, at the endpoints,
% \[
%     \alpha_1(\mathrm{CRD})\le 2-\frac{2}{n},\qquad
%     \alpha_2(\mathrm{CRD})\le \frac{1+\sqrt{2}}{2},\qquad
%     \alpha_{+\infty}(\mathrm{CRD})\le 2-\frac{1}{n}.
% \]
% These endpoint bounds are tight for every \(n\). Moreover, for \(n=2\), the
% above bound is tight for every \(1\le p\le+\infty\).
% \end{theorem}

We prove the theorem through a norm-operator analysis. 
For a vector \(X=(x_1,\ldots,x_n)\in\left(\mathbb R^2\right)^n\), define the \emph{normalized \(L_p\)-norm} as
\[
    \|X\|_p
    =
    \left(
        \frac{1}{n}\sum_i \|x_i\|^p
    \right)^{\frac{1}{p}},
    \qquad
    \|X\|_\infty=\max_i\|x_i\|.
\]
Let \(\bar x=\frac{1}{n}\sum_i x_i\). Define two linear operators \(T:(\mathbb R^2)^n\to(\mathbb R^2)^n\) and \(S:(\mathbb R^2)^n\to(\mathbb R^2)^{n^2}\) by
\[
    (TX)_i=x_i-\bar x,
    \qquad
    (SX)_{i,j}=x_i-x_j.
\]
We measure \(TX\) using the same normalized \(L_p\)-norm defined above. For \(SX\), the normalized \(L_p\)-norm is
\[
    \|SX\|_p
    =
    \left(
        \frac{1}{n^2}\sum_{i,j}\|x_i-x_j\|^p
    \right)^{\frac{1}{p}},
    \qquad
    \|SX\|_\infty=\max_{i,j}\|x_i-x_j\|.
\]
Here, \(T\) is the centering operator, which is used to control the expected cost of the centroid part of CRD. The operator \(S\) is the pairwise-difference operator, which is used to control the expected cost of the random-dictatorship part of CRD.
%\bluecomment{The expected cost of CRD naturally decomposes into two terms: the cost of the centroid part and the cost of the random-dictatorship part. The first term is exactly controlled by , which maps each agent location to its deviation from the centroid. The second term is controlled by , which records all distances between pairs of agents}

\begin{lemma}\label{lem:crd-operator}
For \(1\le p\le 2\),
\[
    \|TX\|_p
    \le
    (2q_n)^{\frac{2}{p}-1}\|X\|_p,
    \qquad
    \|SX\|_p
    \le
    2^{\frac{1}{p}}q_n^{\frac{2}{p}-1}\|X\|_p.
\]
For \(2\le p\le+\infty\),
\[
    \|TX\|_p
    \le
    (2q_n)^{1-\frac{2}{p}}\|X\|_p,
    \qquad
    \|SX\|_p
    \le
    2^{1-\frac{1}{p}}\|X\|_p.
\]
% where the first inequality is interpreted as \(\|TX\|_\infty\le 2q_n\|X\|_\infty\) when \(p=+\infty\)\redcomment{[why need the last sentence?] \bluecomment{we do not need this, and can delete it}}.
\end{lemma}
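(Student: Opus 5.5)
We will prove the operator bounds at $p\in\{1,2,\infty\}$ and then interpolate with the Riesz--Thorin theorem. We treat $(\mathbb R^2)^n$ (or $(\mathbb R^d)^n$) as a mixed-norm space $L_p(\mu;\ell_2^d)$, where $\mu$ is the uniform probability measure on $\{1,\dots,n\}$ for $T$ and on $\{1,\dots,n\}^2$ for $S$. Complex Riesz--Thorin interpolation holds for vector-valued $L_p$ spaces with a fixed Hilbert-space fiber. Since $T$ and $S$ act componentwise and linearly, one may also complexify the coordinates without changing the norms. Interpolation between $p_0=1$ and $p_1=2$ with $\frac1p=\frac{1-t}{1}+\frac t2$ gives $t=2-\frac2p$ and bound $M_1^{\frac2p-1}M_2^{2-\frac2p}$. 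Between $2$ and $\infty$ it gives $t=1-\frac2p$ and bound $M_2^{\frac2p}M_\infty^{1-\frac2p}$. So it suffices to show
\[
M_1(T)\le 2q_n,\quad M_2(T)\le 1,\quad M_\infty(T)\le 2q_n,\qquad M_1(S)\le 2q_n,\quad M_2(S)\le \sqrt2,\quad M_\infty(S)\le 2.
\]
As a check: for $T$ on $[1,2]$ this gives $(2q_n)^{\frac2p-1}$, and on $[2,\infty]$ it gives $(2q_n)^{1-\frac2p}$. For $S$ on $[1,2]$ it gives $(2q_n)^{\frac2p-1}2^{1-\frac1p}=2^{\frac1p}q_n^{\frac2p-1}$, and on $[2,\infty]$ it gives $2^{\frac1p}2^{1-\frac2p}=2^{1-\frac1p}$. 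These match the claimed bounds.

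\emph{Endpoint estimates.} At $p=2$, use the Euclidean (Hilbert) structure. $T$ is the orthogonal projection onto mean-zero configurations, so $\|TX\|_2\le\|X\|_2$. For $S$, the identity $\frac1{n^2}\sum_{i,j}\|x_i-x_j\|^2=\frac2n\sum_i\|x_i\|^2-2\|\bar x\|^2\le 2\|X\|_2^2$ gives $\sqrt2$.

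At $p=\infty$, write $x_i-\bar x=(1-\frac1n)x_i-\frac1n\sum_{j\ne i}x_j$. The triangle inequality then gives $\|x_i-\bar x\|\le 2q_n\max\|x_j\|$, and trivially $\|x_i-x_j\|\le 2\max\|x_k\|$.

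At $p=1$, the same decomposition gives
\[
\sum_i\|x_i-\bar x\|\le \sum_i\Bigl(q_n\|x_i\|+\tfrac1n\sum_{j\ne i}\|x_j\|\Bigr)=2q_n\sum_i\|x_i\|.
\]
For $S$, the diagonal terms vanish, so $\frac1{n^2}\sum_{i\ne j}(\|x_i\|+\|x_j\|)=\frac{2(n-1)}{n^2}\sum_i\|x_i\|=2q_n\|X\|_1$. These endpoint bounds use only the triangle inequality, except at $p=2$, where the inner product is essential.

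\emph{Main obstacle.} The main step to justify is applying Riesz--Thorin to vector-valued (Euclidean-fiber) $L_p$ spaces over real scalars. I would handle it by identifying $\mathbb R^d$-valued functions with $d$-tuples of real functions. The norm $\bigl(\sum_i\|x_i\|^p\bigr)^{1/p}$ is then the mixed norm $L_p(\ell_2^d)$. The Riesz--Thorin (or complex interpolation) theorem for $L_p(H)$ with a Hilbert space $H$ is standard. Complexification preserves the operator norms because $T$ and $S$ have real scalar coefficients acting identically on each coordinate, so $\|T(X+iY)\|$ is controlled the same way; alternatively one can note that the endpoint arguments above work verbatim for $\mathbb C^d$ with the Hermitian norm. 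If that route proves delicate, the fallback is to cite the Riesz--Thorin theorem for Banach-valued $L_p$ with a fixed Hilbert fiber and accept the constant $1$ there, since Hilbert spaces are complex-interpolation-exact.
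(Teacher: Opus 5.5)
Your proposal is correct and follows the paper's proof essentially step for step. You use the same six endpoint bounds at $p\in\{1,2,\infty\}$, with only a cosmetic difference in splitting $x_i-\bar x$ as $q_n x_i-\frac1n\sum_{j\ne i}x_j$ instead of $\frac1n\sum_j(x_i-x_j)$, and then the same vector-valued Riesz--Thorin interpolation with the same exponents and constants. Your observation that the endpoint arguments hold verbatim over $\mathbb C^d$ with the Hermitian norm, so that complex interpolation applies and then restricts to real inputs, justifies the real-scalar use of Riesz--Thorin more carefully than the paper, which simply invokes the vector-valued theorem.
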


\begin{proof}
% Recall that \(X=(x_1,\ldots,x_n)\), \(\bar x=\frac{1}{n}\sum_i x_i\),
% \((TX)_i=x_i-\bar x\), and \((SX)_{i,j}=x_i-x_j\). Also recall that the
% norms are normalized:
% \[
%     \|X\|_p
%     =
%     \left(
%         \frac{1}{n}\sum_i\|x_i\|^p
%     \right)^{\frac{1}{p}},
%     \qquad
%     \|SX\|_p
%     =
%     \left(
%         \frac{1}{n^2}\sum_{i,j}\|x_i-x_j\|^p
%     \right)^{\frac{1}{p}}.
% \]
We first prove endpoint bounds for \(T\) and \(S\), and then interpolate
between these endpoints.

We begin with \(T\). For \(p=1\), using
\(x_i-\bar x=\frac{1}{n}\sum_j(x_i-x_j)\), the triangle inequality gives
\[
\begin{aligned}
    \|TX\|_1
    &=
    \frac{1}{n}\sum_i \|x_i-\bar x\|  =
    \frac{1}{n}\sum_i
    \|
        \frac{1}{n}\sum_j (x_i-x_j)
    \| \le
    \frac{1}{n^2}\sum_{i,j}\|x_i-x_j\|.
\end{aligned}
\]
The terms with \(i=j\) are zero. For \(i\ne j\), by the triangle inequality,
\(\|x_i-x_j\|\le \|x_i\|+\|x_j\|\). Hence
\[
\begin{aligned}
    \frac{1}{n^2}\sum_{i,j}\|x_i-x_j\|
    &=
    \frac{1}{n^2}\sum_{i\ne j}\|x_i-x_j\| \le
    \frac{1}{n^2}\sum_{i\ne j}\left(\|x_i\|+\|x_j\|\right).
\end{aligned}
\]
Each \(\|x_i\|\) appears \(n-1\) times in the first part and \(n-1\) times
in the second part, so
\[
    \frac{1}{n^2}\sum_{i\ne j}\left(\|x_i\|+\|x_j\|\right)
    =
    \frac{2(n-1)}{n^2}\sum_i\|x_i\|
    =
    2\left(1-\frac{1}{n}\right)\|X\|_1.
\]
Thus
\[
    \|TX\|_1\le 2q_n\|X\|_1.
\]

For \(p=2\), centering by \(\bar x\) is an orthogonal projection. More
explicitly,
\[
\begin{aligned}
    \sum_i\|x_i-\bar x\|^2
    &=
    \sum_i\left(\|x_i\|^2-2\langle x_i,\bar x\rangle+\|\bar x\|^2\right) =
    \sum_i\|x_i\|^2
    -2\left\langle \sum_i x_i,\bar x\right\rangle
    +n\|\bar x\|^2.
\end{aligned}
\]
Since \(\sum_i x_i=n\bar x\), this becomes
\begin{equation}\label{eq:252}
    \sum_i\|x_i-\bar x\|^2
    =
    \sum_i\|x_i\|^2-n\|\bar x\|^2
    \le
    \sum_i\|x_i\|^2.
\end{equation}
Dividing by \(n\), we get
\[
    \|TX\|_2\le \|X\|_2.
\]

For \(p=+\infty\), for each \(i\), we have
% \[
% \begin{aligned}
%     \|x_i-\bar x\| &= \frac{1}{n}\sum_{j}(x_i-x_j) = \frac{1}{n}\sum_{j\ne i}(x_i-x_j)
%     \le
%     \frac{1}{n}\sum_{j\ne i}\|x_i-x_j\|  \le
%     \frac{1}{n}\sum_{j\ne i}\left(\|x_i\|+\|x_j\|\right)  \\
%     &\le
%     \frac{1}{n}\sum_{j\ne i}2\|X\|_\infty
%     =
%     2q_n\|X\|_\infty.
% \end{aligned}
% \]
\[
\begin{aligned}
    \|x_i-\bar x\|
    &=
    \frac{1}{n}\|
        \sum_j(x_i-x_j)
    \|
    =
    \frac{1}{n}\|
        \sum_{j\ne i}(x_i-x_j)
    \|  \le
    \frac{1}{n}\sum_{j\ne i}\|x_i-x_j\| \\
    &\le \frac{1}{n}\sum_{j\ne i}\left(\|x_i\|+\|x_j\|\right)  \le
    \frac{1}{n}\sum_{j\ne i}2\|X\|_\infty
    =
    2q_n\|X\|_\infty.
\end{aligned}
\]
Taking the maximum over \(i\) gives
\[
    \|TX\|_\infty\le 2q_n\|X\|_\infty.
\]

We next prove the endpoint bounds for \(S\). For \(p=1\),
\[
\begin{aligned}
    \|SX\|_1
    &=
    \frac{1}{n^2}\sum_{i,j}\|x_i-x_j\|  =
    \frac{1}{n^2}\sum_{i\ne j}\|x_i-x_j\|  \le
    \frac{1}{n^2}\sum_{i\ne j}\left(\|x_i\|+\|x_j\|\right)=
    2q_n\|X\|_1.
\end{aligned}
\]

For \(p=2\), we use the standard pairwise-variance identity, and we get
\[
\begin{aligned}
    \sum_{i,j}\|x_i-x_j\|^2
    &=
    \sum_{i,j}\|x_i\|^2
    +
    \sum_{i,j}\|x_j\|^2
    -
    2\sum_{i,j}\langle x_i,x_j\rangle.
\end{aligned}
\]
The first two sums are easy to simplify. In
\(\sum_{i,j}\|x_i\|^2\), for each fixed \(i\), the term \(\|x_i\|^2\)
appears once for every \(j\), hence \(n\) times. Therefore
\[
    \sum_{i,j}\|x_i\|^2
    =
    n\sum_i\|x_i\|^2.
\]
Similarly,
\[
    \sum_{i,j}\|x_j\|^2
    =
    n\sum_j\|x_j\|^2
    =
    n\sum_i\|x_i\|^2.
\]
For the inner-product term, by bilinearity of the inner product,
\[
    \sum_{i,j}\langle x_i,x_j\rangle
    =
    \left\langle \sum_i x_i,\sum_j x_j\right\rangle
    =
    \left\|\sum_i x_i\right\|^2.
\]
Combining these three identities gives
\[
    \sum_{i,j}\|x_i-x_j\|^2
    =
    2n\sum_i\|x_i\|^2
    -
    2\left\|\sum_i x_i\right\|^2.
\]
Since \(\bar x=\frac{1}{n}\sum_i x_i\), we have
\(\sum_i x_i=n\bar x\), and hence
\[
    \sum_{i,j}\|x_i-x_j\|^2
    =
    2n\sum_i\|x_i\|^2
    -
    2n^2\|\bar x\|^2.
\]
Now rewrite the right-hand side in terms of the centered vectors \(x_i-\bar x\). 
% Expanding,
% \[
% \begin{aligned}
%     \sum_i\|x_i-\bar x\|^2
%     &=
%     \sum_i
%     \left(
%         \|x_i\|^2
%         -
%         2\langle x_i,\bar x\rangle
%         +
%         \|\bar x\|^2
%     \right)  =
%     \sum_i\|x_i\|^2
%     -
%     2\left\langle \sum_i x_i,\bar x\right\rangle
%     +
%     n\|\bar x\|^2.
% \end{aligned}
% \]
% Using \(\sum_i x_i=n\bar x\), the middle term becomes
% \[
%     2\left\langle \sum_i x_i,\bar x\right\rangle
%     =
%     2n\|\bar x\|^2.
% \]
% Therefore,
% \[
%     \sum_i\|x_i-\bar x\|^2
%     =
%     \sum_i\|x_i\|^2
%     -
%     n\|\bar x\|^2.
% \]
Multiplying both sides of the equation in \eqref{eq:252} by \(2n\), we obtain
\[
    2n\sum_i\|x_i-\bar x\|^2
    =
    2n\sum_i\|x_i\|^2
    -
    2n^2\|\bar x\|^2.
\]
Comparing this with the previous expression, we conclude that
\[
    \sum_{i,j}\|x_i-x_j\|^2
    = 2n\sum_i\|x_i\|^2
    -
    2n^2\|\bar x\|^2=
    2n\sum_i\|x_i-\bar x\|^2.
\]
Consequently,
\[
    \|SX\|_2^2
    =
    \frac{1}{n^2}\sum_{i,j}\|x_i-x_j\|^2
    =
    \frac{2}{n}\sum_i\|x_i-\bar x\|^2
    \le
    \frac{2}{n}\sum_i\|x_i\|^2
    =
    2\|X\|_2^2,
\]
where the inequality follows from \eqref{eq:252}.
%\(\sum_i\|x_i-\bar x\|^2=\sum_i\|x_i\|^2-n\|\bar x\|^2\le \sum_i\|x_i\|^2\). 
Hence
\[
    \|SX\|_2\le \sqrt{2}\|X\|_2.
\]

For \(p=+\infty\), for every pair \(i,j\),
\[
    \|x_i-x_j\|
    \le
    \|x_i\|+\|x_j\|
    \le
    2\|X\|_\infty.
\]
Taking the maximum over \(i,j\) gives
\[
    \|SX\|_\infty\le 2\|X\|_\infty.
\]

It remains to interpolate the endpoint bounds. 
%We use the standard Riesz--Thorin interpolation theorem in this finite-dimensional setting: 
We use the vector-valued version of the Riesz--Thorin interpolation theorem,
applied to the linear operators \(T\) and \(S\)% acting on\(\ell_p^n(\mathbb R^2)\) and \(\ell_p^{n^2}(\mathbb R^2)\)
:
if a linear operator \(A\) satisfies
\(\|AX\|_r\le C_r\|X\|_r\) and \(\|AX\|_s\le C_s\|X\|_s\), then for
\(p\) satisfying
\[
    \frac{1}{p}
    =
    \frac{1-\lambda}{r}+\frac{\lambda}{s}
    \qquad
    \text{for some } \lambda\in[0,1],
\]
we have
\[
    \|AX\|_p\le C_r^{1-\lambda}C_s^\lambda\|X\|_p.
\]

For \(T\) and \(1\le p\le 2\), interpolate between the \(L_1\) bound
\(C_1=2q_n\) and the \(L_2\) bound \(C_2=1\). Here
\[
    \frac{1}{p}
    =
    \frac{1-\lambda}{1}+\frac{\lambda}{2},
    \qquad\text{so}\qquad
    \lambda=2-\frac{2}{p}.
\]
Therefore,
\[
    \|TX\|_p
    \le
    (2q_n)^{1-\lambda}\|X\|_p
    =
    (2q_n)^{\frac{2}{p}-1}\|X\|_p.
\]

For \(S\) and \(1\le p\le 2\), interpolate between \(C_1=2q_n\) and
\(C_2=\sqrt{2}\), with the same
\(\lambda=2-\frac{2}{p}\). This gives
\[
\begin{aligned}
    \|SX\|_p
    &\le
    (2q_n)^{1-\lambda}(\sqrt{2})^\lambda\|X\|_p  =
    (2q_n)^{\frac{2}{p}-1}
    2^{\frac{1}{2}\left(2-\frac{2}{p}\right)}
    \|X\|_p  =
    2^{\frac{1}{p}}q_n^{\frac{2}{p}-1}\|X\|_p.
\end{aligned}
\]

For \(T\) and \(2\le p\le+\infty\), interpolate between the \(L_2\) bound
\(C_2=1\) and the \(L_\infty\) bound \(C_\infty=2q_n\). In this case,
\[
    \frac{1}{p}
    =
    \frac{1-\lambda}{2},
    \qquad\text{so}\qquad
    \lambda=1-\frac{2}{p}.
\]
Thus
\[
    \|TX\|_p
    \le
    (2q_n)^\lambda\|X\|_p
    =
    (2q_n)^{1-\frac{2}{p}}\|X\|_p.
\]

Finally, for \(S\) and \(2\le p\le+\infty\), interpolate between
\(C_2=\sqrt{2}\) and \(C_\infty=2\), again with
\(\lambda=1-\frac{2}{p}\). We get
\[
\begin{aligned}
    \|SX\|_p
    &\le
    (\sqrt{2})^{1-\lambda}2^\lambda\|X\|_p  =
    2^{\frac{1}{2}\left(\frac{2}{p}\right)}
    2^{1-\frac{2}{p}}\|X\|_p  =
    2^{1-\frac{1}{p}}\|X\|_p.
\end{aligned}
\]
This proves all the stated inequalities.
\end{proof}

Now we bound the approximation ratio. The proof idea is as follows. First, translate the instance so that the optimal facility is at the origin,  and represent the profile by the agents’ translated locations. Second, the expected cost of CRD naturally decomposes into two terms: the cost of the centroid part and the cost of the random-dictatorship part. The first term is exactly controlled by the centering operator $T$, and the second term is controlled by the pairwise-difference operator $S$.

\begin{lemma}\label{lem:crd-finite-p}
The approximation ratio of the CRD mechanism is bounded by
\[
    \alpha_p(\mathrm{CRD})
    \le
    \begin{cases}
        \displaystyle
        \frac{1}{2}q_n^{\frac{2}{p}-1}
        \left(
            2^{\frac{2}{p}-1}+2^{\frac{1}{p}}
        \right),
        & 1\le p\le 2,\\[12pt]
        \displaystyle
        \frac{1}{2}
        \left(
            2^{1-\frac{2}{p}}q_n^{1-\frac{2}{p}}
            +
            2^{1-\frac{1}{p}}
        \right),
        & 2\le p\le +\infty.\\[12pt]
    \end{cases}
\]
\end{lemma}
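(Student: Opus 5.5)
Translate the instance so that an optimal facility $o$ for $\mathrm{SC}_p$ sits at the origin, and set $X=(x_1,\ldots,x_n)$ with $x_i$ the translated locations. Then $\mathrm{OPT}_p(\mathbf{x})=n^{1/p}\|X\|_p$ in the normalized norm (for $p=+\infty$, $\mathrm{OPT}_\infty=\|X\|_\infty$). By the definition of CRD and linearity of expectation,
\[
\mathrm{ALG}_p(\mathbf{x})=\frac12\,\mathrm{SC}_p(\mathbf{x},\bar x)+\frac{1}{2n}\sum_{j}\mathrm{SC}_p(\mathbf{x},x_j).
\]
The first term is exactly $\frac12 n^{1/p}\|TX\|_p$. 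For the second term, I will apply Jensen/the power-mean inequality to the average over $j$ (concavity of $t\mapsto t^{1/p}$ for finite $p\ge1$):
\[
\frac1n\sum_j\Bigl(\sum_i\|x_i-x_j\|^p\Bigr)^{1/p}\le\Bigl(\frac1n\sum_j\sum_i\|x_i-x_j\|^p\Bigr)^{1/p}=n^{1/p}\|SX\|_p .
\]
For $p=+\infty$, I will use $\max_i\|x_i-x_j\|\le\max_{i,j}\|x_i-x_j\|=\|SX\|_\infty$ directly. Therefore
\[
\frac{\mathrm{ALG}_p(\mathbf{x})}{\mathrm{OPT}_p(\mathbf{x})}\le\frac12\cdot\frac{\|TX\|_p+\|SX\|_p}{\|X\|_p}.
\]

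Next I will plug in Lemma~\ref{lem:crd-operator}. For $1\le p\le2$ this gives
\[
\tfrac12\bigl((2q_n)^{\frac2p-1}+2^{\frac1p}q_n^{\frac2p-1}\bigr)=\tfrac12 q_n^{\frac2p-1}\bigl(2^{\frac2p-1}+2^{\frac1p}\bigr).
\]
For $2\le p\le+\infty$ it gives
\[
\tfrac12\bigl((2q_n)^{1-\frac2p}+2^{1-\frac1p}\bigr)=\tfrac12\bigl(2^{1-\frac2p}q_n^{1-\frac2p}+2^{1-\frac1p}\bigr).
\]
Both match the claimed expressions. At $p=+\infty$ the second expression reads $\frac12(2q_n+2)=2-\frac1n$, consistent with Tang et al.

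The only delicate step is the passage from the random-dictatorship term to $\|SX\|_p$. The expectation is of the $L_p$ cost, not its $p$-th power, so I need the averaging inequality in the correct direction. It is, since $\mathbb E[Y^{1/p}]\le(\mathbb E Y)^{1/p}$.

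I also need to handle $\mathrm{OPT}_p=0$, where all points coincide and both sides vanish. The edge case $p=\infty$ is handled separately as above. Everything else follows immediately from the operator lemma, which already did the interpolation work.
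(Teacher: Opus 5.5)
Your proposal is correct and follows the paper's proof essentially step for step. You translate so the optimum is at the origin, write the centroid term exactly as $\|TX\|_p$, bound the dictatorship term by $\|SX\|_p$ via Jensen, and plug in Lemma~\ref{lem:crd-operator}. The only minor differences are two. First, you derive the $p=+\infty$ case directly from the endpoint bounds $\|TX\|_\infty\le 2q_n\|X\|_\infty$ and $\|SX\|_\infty\le 2\|X\|_\infty$, whereas the paper cites Tang et al.\ for it. Second, you carry explicit $n^{1/p}$ factors instead of working with the normalized cost $\overline{\mathrm{SC}}_p$.
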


\begin{proof}
When $p=+\infty$, \cite{tang2020characterization} provides an approximation of $2-\frac{1}{n}$, and we focus on finite $p$.
It is convenient to use the normalized $L_p$-norm social cost
\[
    \overline{\mathrm{SC}}_p(\mathbf{x},y)
    =
    \left(
        \frac{1}{n}\sum_i \|x_i-y\|^p
    \right)^{\frac{1}{p}}.
\]
This normalization does not change approximation ratios, because
\(
    \overline{\mathrm{SC}}_p(\mathbf{x},y)
    =
    n^{-\frac{1}{p}}\mathrm{SC}_p(\mathbf{x},y)
\)
for every \(y\), and
\(\overline{\mathrm{OPT}}_p(\mathbf{x})=
n^{-\frac{1}{p}}\mathrm{OPT}_p(\mathbf{x})\).

Let \(o\in\mathbb R^2\) be an optimal facility location. We translate the profile $\mathbf x$ by defining the translated vector \(x_i'=x_i-o\) for each agent $i$, so that the optimal facility is at the origin. Let \(X=(x_1',\ldots,x_n')\) represent the profile of translated locations of agents.
 Then
\[
    \|X\|_p
    =
    \left(
        \frac{1}{n}\sum_i\|x_i'\|^p
    \right)^{\frac{1}{p}}
    =
    \overline{\mathrm{OPT}}_p(\mathbf{x}).
\]
Let \(\bar x'=\frac{1}{n}\sum_i x_i'\). Since the centroid of the original
profile is
\[
    \bar x
    =
    \frac{1}{n}\sum_i x_i
    =
    \frac{1}{n}\sum_i(o+x_i')
    =
    o+\bar x',
\]
the distance from agent \(i\) to the centroid is
\[
    \|x_i-\bar x\|
    =
    \|o+x_i'-(o+\bar x')\|
    =
    \|x_i'-\bar x'\|.
\]
Therefore the normalized cost of the centroid part is exactly
\[
\begin{aligned}
    \overline{\mathrm{SC}}_p(\mathbf{x},\bar x)
    &=
    \left(
        \frac{1}{n}\sum_i \|x_i'-\bar x'\|^p
    \right)^{\frac{1}{p}}  =
    \|TX\|_p,
\end{aligned}
\]
where \((TX)_i=x_i'-\bar x'\).

Now consider the random-dictatorship part. If the mechanism returns agent
\(j\)'s location \(x_j\), then the normalized cost is
\[
\begin{aligned}
    \overline{\mathrm{SC}}_p(\mathbf{x},x_j)
    &=
    \left(
        \frac{1}{n}\sum_i \|x_i-x_j\|^p
    \right)^{\frac{1}{p}}  =
    \left(
        \frac{1}{n}\sum_i \|x_i'-x_j'\|^p
    \right)^{\frac{1}{p}}.
\end{aligned}
\]
Thus, conditional on using the random-dictatorship part, the expected
normalized cost is
\[
    D
    :=
    \frac{1}{n}\sum_j
    \left(
        \frac{1}{n}\sum_i \|x_i'-x_j'\|^p
    \right)^{\frac{1}{p}}.
\]
We next upper bound \(D\) by \(\|SX\|_p\). For each \(j\), define
\[
    A_j
    :=
    \left(
        \frac{1}{n}\sum_i \|x_i'-x_j'\|^p
    \right)^{\frac{1}{p}}.
\]
Then \(D=\frac{1}{n}\sum_j A_j\). Since the function \(r\mapsto r^p\) is
convex for \(p\ge 1\), Jensen's inequality gives
\[
    \left(\frac{1}{n}\sum_j A_j\right)^p
    \le
    \frac{1}{n}\sum_j A_j^p.
\]
Taking the \(p\)-th root on both sides,
\[
\begin{aligned}
    D
    &=
    \frac{1}{n}\sum_j A_j
    \le
    \left(
        \frac{1}{n}\sum_j A_j^p
    \right)^{\frac{1}{p}} =
    \left(
        \frac{1}{n}\sum_j
        \frac{1}{n}\sum_i \|x_i'-x_j'\|^p
    \right)^{\frac{1}{p}}  \\
    &=
    \left(
        \frac{1}{n^2}\sum_{i,j}\|x_i'-x_j'\|^p
    \right)^{\frac{1}{p}}
    =
    \|SX\|_p,
\end{aligned}
\]
where \((SX)_{i,j}=x_i'-x_j'\).

The mechanism \(\mathrm{CRD}\) chooses the centroid $\bar x$ with probability
\(\frac{1}{2}\), and chooses the random-dictatorship part with probability
\(\frac{1}{2}\). Hence its expected normalized cost satisfies
\[
\begin{aligned}
    \mathbb E\left[
        \overline{\mathrm{SC}}_p(\mathbf{x},\mathrm{CRD}(\mathbf{x}))
    \right]
    &=
    \frac{1}{2}\overline{\mathrm{SC}}_p(\mathbf{x},\bar x)
    +
    \frac{1}{2}D  \le
    \frac{1}{2}\|TX\|_p+\frac{1}{2}\|SX\|_p.
\end{aligned}
\]

We now apply Lemma~\ref{lem:crd-operator}. If \(1\le p\le 2\), then
\[
    \|TX\|_p
    \le
    (2q_n)^{\frac{2}{p}-1}\|X\|_p
    =
    2^{\frac{2}{p}-1}q_n^{\frac{2}{p}-1}\|X\|_p,
\]
and
\[
    \|SX\|_p
    \le
    2^{\frac{1}{p}}q_n^{\frac{2}{p}-1}\|X\|_p.
\]
Therefore,
\[
\begin{aligned}
    \mathbb E\left[
        \overline{\mathrm{SC}}_p(\mathbf{x},\mathrm{CRD}(\mathbf{x}))
    \right]
    &\le
    \frac{1}{2}
    \left(
        2^{\frac{2}{p}-1}q_n^{\frac{2}{p}-1}
        +
        2^{\frac{1}{p}}q_n^{\frac{2}{p}-1}
    \right)\|X\|_p  \\
    &=
    \frac{1}{2}q_n^{\frac{2}{p}-1}
    \left(
        2^{\frac{2}{p}-1}+2^{\frac{1}{p}}
    \right)
    \overline{\mathrm{OPT}}_p(\mathbf{x}).
\end{aligned}
\]
This gives the claimed bound for \(1\le p\le 2\).

If \(2\le p<+\infty\), Lemma~\ref{lem:crd-operator} gives
\[
    \|TX\|_p
    \le
    (2q_n)^{1-\frac{2}{p}}\|X\|_p
    =
    2^{1-\frac{2}{p}}q_n^{1-\frac{2}{p}}\|X\|_p,
\]
and
\[
    \|SX\|_p
    \le
    2^{1-\frac{1}{p}}\|X\|_p.
\]
Thus,
\[
\begin{aligned}
    \mathbb E\left[
        \overline{\mathrm{SC}}_p(\mathbf{x},\mathrm{CRD}(\mathbf{x}))
    \right]
    &\le
    \frac{1}{2}
    \left(
        2^{1-\frac{2}{p}}q_n^{1-\frac{2}{p}}
        +
        2^{1-\frac{1}{p}}
    \right)\|X\|_p  \\
    &=
    \frac{1}{2}
    \left(
        2^{1-\frac{2}{p}}q_n^{1-\frac{2}{p}}
        +
        2^{1-\frac{1}{p}}
    \right)
    \overline{\mathrm{OPT}}_p(\mathbf{x}).
\end{aligned}
\]
This proves the claimed bound for \(2\le p<+\infty\).
%Finally, since normalized and unnormalized costs differ by the same factor \(n^{-\frac{1}{p}}\) for both the algorithmic cost and the optimum, the same approximation ratios hold for the original $L_p$-norm social cost \(\mathrm{SC}_p\).
\end{proof}

% \begin{lemma}\label{lem:crd-infinity}\cite{tang2020characterization}
% For \(p=+\infty\), \(\mathrm{CRD}\) is a tight
% \(\left(2-\frac{1}{n}\right)\)-approximation.\redcomment{[write together?] \bluecomment{ok}}
% \end{lemma}

% \begin{proof}
% Let \(o\) be an optimal facility for the maximum cost, and write
% \(x_i'=x_i-o\). Then \(\max_i\|x_i'\|=\mathrm{OPT}_{+\infty}(\mathbf{x})\).
% By Lemma~\ref{lem:crd-operator},
% \[
%     \|TX\|_\infty\le 2q_n\|X\|_\infty,
%     \qquad
%     \|SX\|_\infty\le 2\|X\|_\infty.
% \]
% The centroid part has maximum cost \(\|TX\|_\infty\), and the
% random-dictatorship part has expected maximum cost at most \(\|SX\|_\infty\).
% Hence
% \[
%     \mathrm{ALG}_{+\infty}(\mathbf{x})
%     \le
%     \frac{1}{2}\cdot 2q_n\,\mathrm{OPT}_{+\infty}(\mathbf{x})
%     +
%     \frac{1}{2}\cdot 2\,\mathrm{OPT}_{+\infty}(\mathbf{x})
%     =
%     \left(2-\frac{1}{n}\right)\mathrm{OPT}_{+\infty}(\mathbf{x}).
% \]

% For tightness, take \(n-1\) agents at \(0\) and one agent at \(1\) on a
% line. The optimal maximum cost is \(\frac{1}{2}\). The centroid is
% \(\frac{1}{n}\), whose maximum cost is \(1-\frac{1}{n}=q_n\). Returning any
% agent location gives maximum cost \(1\). Therefore
% \[
%     \mathrm{ALG}_{+\infty}(\mathbf{x})
%     =
%     \frac{1}{2}q_n+\frac{1}{2},
% \]
% and the ratio is
% \[
%     \frac{\frac{1}{2}q_n+\frac{1}{2}}{\frac{1}{2}}
%     =
%     1+q_n
%     =
%     2-\frac{1}{n}.
% \]
% \end{proof}

We now record the tight examples for endpoint cases.

\begin{lemma}\label{lem:crd-tight-endpoints}
The bounds for \(\mathrm{CRD}\) in
Lemma~\ref{lem:crd-finite-p} are tight for every \(n\) at \(p=1,+\infty\) and for every even $n$ at
\(p=2\). Moreover, for \(n=2\), the bounds are tight for all $p$. 
\end{lemma}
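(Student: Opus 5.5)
Since Lemma~\ref{lem:crd-finite-p} already supplies the upper bounds, I only need matching lower bounds. For each case I will exhibit a one-dimensional instance (embedded in $\mathbb R^d$) and compute the CRD cost exactly. For the optimum, an upper bound from a convenient candidate facility is enough. I will also plug $p$ and $n$ into the formula of Lemma~\ref{lem:crd-finite-p} to confirm that the two values agree.

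\emph{Endpoints $p=1$ and $p=+\infty$.} Take $n-1$ agents at $0$ and one agent at $1$.
\begin{itemize}
\item For $p=1$, the facility $0$ gives $\mathrm{OPT}_1\le 1$.
\begin{itemize}
\item The centroid $\frac1n$ has cost $(n-1)\cdot\frac1n+(1-\frac1n)=2q_n$.
\item Random dictatorship picks $0$ with probability $\frac{n-1}{n}$, at cost $1$, and picks $1$ with probability $\frac1n$, at cost $n-1$. Its expected cost is therefore $2q_n$.
\item Hence CRD has cost $2q_n=2-\frac2n$, which matches the $p=1$ bound $\frac12 q_n(2+2)=2q_n$.
\end{itemize}
\item For $p=+\infty$, the midpoint $\frac12$ gives $\mathrm{OPT}_{+\infty}\le\frac12$.
\begin{itemize}
\item The centroid has maximum cost $q_n$.
\item Every dictator outcome has maximum cost $1$.
\item The ratio is therefore at least $(\frac12 q_n+\frac12)/\frac12=2-\frac1n$.
\end{itemize}
\end{itemize}

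\emph{The case $p=2$, $n$ even.} I want the centroid term to be lossless and every dictator term to lose exactly $\sqrt2$. The centroid minimizes $\sum_i\|x_i-y\|^2$, so it is always optimal for $p=2$. For the dictator part, the pairwise-variance identity gives
\[
\sum_i\|x_i-x_j\|^2=\sum_i\|x_i-\bar x\|^2+n\|x_j-\bar x\|^2.
\]
This is exactly twice the optimum precisely when $\|x_j-\bar x\|^2=\frac1n\sum_i\|x_i-\bar x\|^2$ for every $j$. To achieve this I take $\frac n2$ agents at $-1$ and $\frac n2$ agents at $1$.
\begin{itemize}
\item Then $\mathrm{OPT}_2=\sqrt n$, attained at the centroid $0$.
\item Each dictator outcome costs $\sqrt{2n}$.
\item The ratio is $\frac{1+\sqrt2}{2}$, which matches the bound.
\end{itemize}

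\emph{The case $n=2$, all $p$.} Take agents at $-1$ and $1$.
\begin{itemize}
\item The point $0$ gives $\mathrm{OPT}_p\le 2^{1/p}$.
\item The centroid is $0$, with cost $2^{1/p}$.
\item Each dictator outcome costs $2$.
\item The ratio is therefore at least $\frac12(1+2^{1-\frac1p})$.
\end{itemize}
Substituting $q_2=\frac12$ into both branches of Lemma~\ref{lem:crd-finite-p} should give the same value:
\begin{itemize}
\item For $p\le2$, $\frac12 2^{1-\frac2p}(2^{\frac2p-1}+2^{\frac1p})=\frac12(1+2^{1-\frac1p})$.
\item For $p\ge2$, $\frac12(2^{1-\frac2p}2^{\frac2p-1}+2^{1-\frac1p})=\frac12(1+2^{1-\frac1p})$.
\end{itemize}
The $p=+\infty$ value $\frac32$ is consistent with this.

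There is no real obstacle here; the only step requiring an idea is finding the balanced instance for $p=2$. The variance identity above is what pins down that instance, and the rest is arithmetic checking that the interpolated constants collapse to the instance values.
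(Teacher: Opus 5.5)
Your proposal is correct and follows the paper's proof essentially verbatim. It uses the same instances: $n-1$ agents at $0$ and one at $1$ for $p\in\{1,+\infty\}$, two equal clusters for $p=2$ with $n$ even, and two agents for $n=2$; your $\pm1$ placements are just affine rescalings of the paper's $\{0,1\}$ instances. Your explicit substitution of $q_2=\frac12$ into both branches of Lemma~\ref{lem:crd-finite-p} spells out a check the paper only asserts, and the variance-identity motivation for the $p=2$ instance is a helpful addition rather than a different route.
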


\begin{proof}
For \(p=1\), take \(n-1\) agents at \(0\) and one agent at \(1\) on a line.
The optimal $L_1$ social cost is \(1\). The centroid is \(\frac{1}{n}\), whose
social cost is \(2-\frac{2}{n}\). The random-dictatorship part has expected
social cost
\[
    \frac{n-1}{n}\cdot 1+\frac{1}{n}\cdot (n-1)
    =
    2-\frac{2}{n}.
\]
Thus \(\mathrm{CRD}\) has ratio \(2-\frac{2}{n}\), matching the bound at
\(p=1\).

% For \(p=2\), place the agents at the vertices of a regular \(n\)-gon on the
% unit circle centered at the origin. The centroid is the origin, which is also
% the \(L_2\)-optimal facility. The optimal \(L_2\) social cost is \(\sqrt n\).
% If the mechanism returns any vertex, the \(L_2\) social cost is
% \(\sqrt{2n}\). Hence the ratio of \(\mathrm{CRD}\) is
% \[
%     \frac{1}{2}\cdot 1+\frac{1}{2}\cdot \sqrt{2}
%     =
%     \frac{1+\sqrt{2}}{2},
% \]
% matching the bound at \(p=2\).
For $p=2$ and any even $n$, consider the instance with $\frac n2$ agents at 0 and $\frac n2$ agents at 1. The centroid $\frac12$ is optimal and
\(\mathrm{OPT}_p=(\frac{n}{2^p})^{\frac1p}=\frac12n^{\frac1p}\); however, returning any agent location has social cost of \((\frac n2)^{\frac1p}\). Therefore
the ratio is
\[
    \frac{
        \frac{1}{2}\mathrm{OPT}_p+\frac{1}{2}(\frac n2)^{\frac1p}
    }{
        \mathrm{OPT}_p
    }
    =
    \frac{1+\sqrt 2}{2},
\]
matching the upper bound in Lemma~\ref{lem:crd-finite-p}.

For $p=+\infty$, we still take $n-1$ agents at 0 and one agent at 1 on a line with the centroid at $\frac{1}{n}$. The optimal maximum cost is $\frac{1}{2}$ and the mechanism's expected maximum cost is
\[
    \frac{1}{2}\cdot 1 + \frac{1}{2}\cdot \left(1-\frac{1}{n}\right) = \frac{1}{2}\left(2-\frac{1}{n}\right),
\]
achieving an approximation ratio of $2-\frac{1}{n}$.

Finally, we consider \(n=2\), with the two agents at distance \(1\). For every
finite \(p\), the midpoint/centroid is optimal and
\(\mathrm{OPT}_p=2^{\frac{1}{p}-1}\); however, returning either agent has social cost of \(1\). Therefore
the ratio is
\[
    \frac{
        \frac{1}{2}\mathrm{OPT}_p+\frac{1}{2}
    }{
        \mathrm{OPT}_p
    }
    =
    \frac{1+2^{1-\frac{1}{p}}}{2}.
\]
When \(n=2\), the upper bound in Lemma~\ref{lem:crd-finite-p} is exactly
\(\frac{1+2^{1-\frac{1}{p}}}{2}\), so it is tight.
\end{proof}

\begin{proof}[Proof of Theorem~\ref{thm:crd}]
Strategyproofness in expectation was proved by Tang et al.
\cite{tang2020characterization}. Intuitively, when an agent moves its report,
the possible decrease in its expected distance to the centroid is exactly
balanced by the additional probability that the mechanism returns the
agent's reported location.

The approximation guarantees follow from
Lemma~\ref{lem:crd-finite-p}. Tightness
follows from
Lemma~\ref{lem:crd-tight-endpoints}.
\end{proof}

\section{Symmetric Monotone Norm Objectives}\label{sec:general-norm}

In this section, we go beyond \(L_p\)-norm social costs and consider
objectives induced by symmetric monotone norms on the distance vector. Let
\(g:\mathbb R^n\to\mathbb R_+\) be a norm. We say that \(g\) is
\emph{monotone} if \(g(\mathbf v)\le g(\mathbf u)\) whenever
\(0\le \mathbf v\le \mathbf u\) coordinatewise. We say that \(g\) is
\emph{symmetric} if \(g(\mathbf v)=g(\mathbf u)\) whenever \(\mathbf u\) is
obtained from \(\mathbf v\) by permuting coordinates and changing signs.
Equivalently, \(g\) depends only on the multiset of absolute values of the
coordinates. Since the vectors considered below are distance vectors, they
are always nonnegative.

For a profile \(\mathbf{x}\) and a facility location \(y\), write
\[
    D(\mathbf{x},y)
    =
    \bigl(\|x_1-y\|,\ldots,\|x_n-y\|\bigr)
\]
for the vector of distances from \(y\) to the agents. The corresponding
symmetric monotone norm objective is
\[
    \SC_g(\mathbf{x},y)
    =
    g(D(\mathbf{x},y)),
    \qquad
    \OPT_g(\mathbf{x})
    =
    \min_{z} g(D(\mathbf{x},z)).
\]
This class contains the \(L_p\)-norm social costs as special cases, but also
includes many other objectives that depend on the ordered distance profile.

For \(\ell\in[n]\), let \(\Top_\ell:\mathbb R^n\to\mathbb R_+\) denote the
sum of the \(\ell\) largest absolute values of the coordinates. In particular,
when the input vector is nonnegative, \(\Top_\ell(\mathbf v)\) is simply the
sum of the \(\ell\) largest entries of \(\mathbf v\). The cases \(\ell=1\)
and \(\ell=n\) recover the \(L_\infty\) and \(L_1\) norms, respectively.
We use the following standard consequence of the Hardy--Littlewood--Pólya
majorization theorem.

\begin{lemma}[\cite{1934Inequalities}]\label{lem:hardy}
Let \(\mathbf v,\mathbf u\in\mathbb R_+^n\) and \(\alpha\ge0\). If
\[
    \Top_\ell(\mathbf v)\le \alpha\,\Top_\ell(\mathbf u)
    \qquad\text{for every } \ell\in[n],
\]
then \(g(\mathbf v)\le \alpha\,g(\mathbf u)\) for every symmetric monotone
norm \(g:\mathbb R^n\to\mathbb R_+\).
\end{lemma}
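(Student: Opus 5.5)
The plan is to reduce the statement to the case where \(\alpha\) is absorbed and both vectors are sorted, and then to prove that every symmetric monotone norm is a supremum of nonnegative combinations of the \(\Top_\ell\) functions.

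\emph{Reduction.} By symmetry of \(g\) we may assume both \(\mathbf v\) and \(\mathbf u\) are sorted in nonincreasing order. Set \(\mathbf w=\alpha\mathbf u\). Since \(\Top_\ell\) and \(g\) are both positively homogeneous, the hypothesis becomes \(\Top_\ell(\mathbf v)\le\Top_\ell(\mathbf w)\) for all \(\ell\), i.e.\ \(\mathbf v\) is weakly submajorized by \(\mathbf w\). The goal becomes \(g(\mathbf v)\le g(\mathbf w)\).

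\emph{Dual representation.} Let \(g^*\) be the dual norm, so that
\[
g(\mathbf v)=\sup_{g^*(\mathbf a)\le1}\langle \mathbf a,\mathbf v\rangle .
\]
The dual \(g^*\) is again symmetric. For nonnegative \(\mathbf v\), the supremum may be restricted to nonnegative \(\mathbf a\) sorted in nonincreasing order: flipping signs of \(\mathbf a\) to match \(\mathbf v\geq 0\) does not change \(g^*(\mathbf a)\) and does not decrease the pairing, and by the rearrangement inequality pairing sorted \(\mathbf a\) with sorted \(\mathbf v\) is optimal.

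\emph{Abel summation.} For such a sorted \(\mathbf a\ge0\), write \(a_{n+1}=0\). Then
\[
\langle \mathbf a,\mathbf v\rangle=\sum_{\ell=1}^n (a_\ell-a_{\ell+1})\,\Top_\ell(\mathbf v),
\]
and every coefficient \(a_\ell-a_{\ell+1}\) is nonnegative. The hypothesis therefore gives
\[
\langle \mathbf a,\mathbf v\rangle\le\sum_{\ell}(a_\ell-a_{\ell+1})\,\Top_\ell(\mathbf w)=\langle \mathbf a,\mathbf w\rangle\le g(\mathbf w).
\]
Taking the supremum over \(\mathbf a\) yields \(g(\mathbf v)\le g(\mathbf w)=\alpha\, g(\mathbf u)\).

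\emph{Obstacle.} The main point to handle carefully is the restriction step: replacing \(\mathbf a\) by the sorted absolute values \(\mathbf a^\downarrow=|\mathbf a|^\downarrow\) keeps \(g^*\) unchanged and satisfies \(\langle \mathbf a,\mathbf v\rangle\le\langle\mathbf a^\downarrow,\mathbf v\rangle\). This needs only the symmetry of \(g\), which transfers to \(g^*\). Monotonicity is not needed for this argument, since norms that are symmetric in the above sense, with sign changes allowed, are automatically monotone on the nonnegative orthant.
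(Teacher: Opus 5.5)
Your proof is correct and complete. Sorting is harmless because $g$ and every $\Top_\ell$ are permutation-invariant. Invariance of $g$ under signed permutations transfers to the dual norm $g^*$, so the rearrangement inequality lets you restrict to nonincreasing $\mathbf a\ge 0$. The Abel-summation identity, valid because both $\mathbf v$ and $\mathbf w=\alpha\mathbf u$ are sorted, then turns the hypothesis into $\langle\mathbf a,\mathbf v\rangle\le\langle\mathbf a,\mathbf w\rangle\le g^*(\mathbf a)\,g(\mathbf w)$.

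The paper gives no proof of this lemma; it only cites Hardy--Littlewood--P\'olya. The standard route there is different:
\begin{enumerate}
\item from weak submajorization, find $\mathbf v'\ge\mathbf v$ that is majorized by $\mathbf w$;
\item write $\mathbf v'=D\mathbf w$ with $D$ doubly stochastic;
\item decompose $D$ into permutation matrices via Birkhoff--von Neumann;
\item conclude by convexity, permutation invariance, and monotonicity of $g$.
\end{enumerate}

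Your duality argument (essentially Ky Fan's dominance proof) is shorter and self-contained. It needs only finite-dimensional norm duality and the rearrangement inequality, and it uses sign invariance in place of an explicit monotonicity step. The majorization route is heavier but more general: it gives $\phi(\mathbf v)\le\phi(\mathbf w)$ for every convex, permutation-invariant $\phi$ that is nondecreasing on $\mathbb R_+^n$, not only for norms.

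Your closing remark that sign-invariant norms are automatically monotone on the nonnegative orthant is true, but your argument does not need it.
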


\begin{lemma}\label{lem:cm-topell-domination}
Let \(\mathbf{x}\in(\mathbb R^2)^n\), let \(m=\CM(\mathbf{x})\), and let
distances be measured by the Euclidean norm. Then, for every comparison point
\(z\in\mathbb R^2\) and every \(\ell\in[n]\),
\[
    \Top_{\ell}\!\left(D(\mathbf{x},m)\right)
    \le
    2\,\Top_{\ell}\!\left(D(\mathbf{x},z)\right).
\]
\end{lemma}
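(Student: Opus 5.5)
We reuse the pairing from the proof of Lemma~\ref{lem:cm-large-p}. First translate so that $m=\CM(\mathbf{x})=(0,0)$. If $n$ is odd, duplicate the multiset. Duplication doubles every $\Top_{2\ell}$ of the doubled vectors, and $\Top_{2\ell}$ of the doubled vector equals $2\Top_\ell$ of the original. Hence proving the inequality for all $\ell'$ on the duplicated instance, in particular for $\ell'=2\ell$, recovers the claim. Assign artificial signs to zero coordinates and pair agents into pairs $(a,b)$ with $a_1b_1\le 0$ and $a_2b_2\le 0$, so that $a\cdot b\le 0$. As in \eqref{eq:31}--\eqref{eq:32}, for each pair and any $z$,
\[
\max\{\|a\|,\|b\|\}\le \|a-b\|\le \|a-z\|+\|b-z\|.
\]

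The key pairwise inequality I aim for is the following. For each pair, write $D$-entries $\alpha=\|a\|$, $\beta=\|b\|$ and $A=\|a-z\|$, $B=\|b-z\|$. I claim the two-vector domination $(\alpha,\beta)\preceq_w 2(A,B)$ in the weak-submajorization sense, i.e.\ $\max\{\alpha,\beta\}\le 2\max\{A,B\}$ and $\alpha+\beta\le 2(A+B)$. The first holds because $\max\{\alpha,\beta\}\le A+B\le 2\max\{A,B\}$. The second holds because $\alpha+\beta\le 2\max\{\alpha,\beta\}\le 2(A+B)$.

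Next, glue the pairs together. For a fixed $\ell$, let $I$ be the index set of the $\ell$ largest entries of $D(\mathbf{x},m)$. Grouping by pairs, $I$ meets each pair in $0$, $1$, or $2$ elements. For a pair contributing one element, bound that element by $\max\{\alpha,\beta\}\le A+B$. For a pair contributing two elements, bound their sum by $\alpha+\beta\le 2\max\{\alpha,\beta\}\le 2(A+B)$. This crude bound may waste budget, though: a single-element pair then uses both $A$ and $B$, i.e.\ two entries of $D(\mathbf{x},z)$, while $\Top_\ell(D(\mathbf{x},z))$ only permits $\ell$ entries with total multiplier $2$.

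Handling this correctly is the main obstacle. I would instead charge each single-element pair to $2\max\{A,B\}$, using one entry with coefficient $2$. Each two-element pair I would charge to $2(A+B)$, using two entries with coefficient $2$. The charged entries of $D(\mathbf{x},z)$ are then distinct (at most one per element of $I$), so there are at most $\ell$ of them. The total is therefore at most $2\Top_\ell(D(\mathbf{x},z))$.

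The resulting chain is
\[
\Top_\ell(D(\mathbf{x},m))=\sum_{i\in I}\|x_i\|\le 2\sum_{j\in J}\|x_j-z\|\le 2\Top_\ell(D(\mathbf{x},z)),
\]
where $J$ is the set of charged indices, $|J|\le\ell$.
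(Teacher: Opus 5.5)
Your proof is correct and follows essentially the same route as the paper. You translate to $m=0$, duplicate to get even $n$, pair agents so that $a\cdot b\le 0$, and charge each pair meeting the top-$\ell$ set either to its farther-from-$z$ agent (one element in the set) or to both agents (two elements), which is exactly the paper's construction of the charging set $B$ with $|B|=\ell$; your per-pair weak-submajorization phrasing just repackages the same two inequalities $\|a\|\le 2\max\{\|a-z\|,\|b-z\|\}$ and $\|a\|+\|b\|\le 2(\|a-z\|+\|b-z\|)$.
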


\begin{proof}
By translating the instance, assume without loss of generality that
\(m=\CM(\mathbf{x})=(0,0)\).

We first reduce to the case where \(n\) is even. If \(n\) is odd, duplicate
every agent and denote the resulting profile by \(\mathbf{x}^{(2)}\). The
coordinate-wise median remains \(m\). Moreover, for every facility location
\(y\) and every \(\ell\in[n]\),
\[
    \Top_{2\ell}\!\left(D(\mathbf{x}^{(2)},y)\right)
    =
    2\,\Top_{\ell}\!\left(D(\mathbf{x},y)\right).
\]
Thus the desired inequality for \((\mathbf{x},\ell)\) follows from the even
case applied to \((\mathbf{x}^{(2)},2\ell)\). Hence, in the rest of the proof,
assume that \(n\) is even.

Since \(m=(0,0)\) is the coordinate-wise median, for each coordinate at most
\(n/2\) agents lie strictly on either side of the origin. Assign artificial
signs \(+\) and \(-\) to zero coordinates so that, in each coordinate, exactly
\(n/2\) agents have sign \(+\) and exactly \(n/2\) agents have sign \(-\).
This partitions the agents into four signed classes
\[
    (++),\qquad (+-),\qquad (-+),\qquad (--).
\]
The number of \((++)\) agents equals the number of \((--)\) agents, and the
number of \((+-)\) agents equals the number of \((-+)\) agents. Therefore, we
can pair every \((++)\) agent with a \((--)\) agent, and every \((+-)\) agent
with a \((-+)\) agent.

For every resulting pair of locations \((a,b)\in(\mathbb R^2)^2\), we have
\(a_1b_1\le0\) and \(a_2b_2\le0\), and hence \(a\cdot b\le0\). Therefore,
\[
    \|a-b\|^2
    =
    \|a\|^2+\|b\|^2-2a\cdot b
    \ge
    \|a\|^2+\|b\|^2.
\]
In particular,
\[
    \max\{\|a\|,\|b\|\}\le \|a-b\|.
\]
By the triangle inequality, for every comparison point \(z\in\mathbb R^2\),
\[
    \|a-b\|
    =
    \|(a-z)-(b-z)\|
    \le
    \|a-z\|+\|b-z\|.
\]

Now fix \(\ell\in[n]\), and let \(A\subseteq N\) be a set of \(\ell\) agents
with the largest distances from \(m\). Since \(m=0\),
\[
    \Top_{\ell}\!\left(D(\mathbf{x},m)\right)
    =
    \sum_{i\in A}\|x_i\|.
\]
We construct a set \(B\subseteq N\) of \(\ell\) agents whose distances to
\(z\) will be used for charging. Consider each paired pair \(\{a,b\}\). If no
agent in the pair belongs to \(A\), choose no agent from this pair. If exactly
one agent in the pair belongs to \(A\), choose from this pair an agent with
larger distance to \(z\). If both agents in the pair belong to \(A\), choose
both agents. By construction, \(|B|=|A|=\ell\).

We claim that, for each pair, the contribution of agents in \(A\) from this
pair is at most twice the contribution of agents in \(B\) from this pair,
measured with respect to \(z\). If exactly one of \(a,b\) belongs to \(A\),
say \(a\), then
\[
    \|a\|
    \le
    \max\{\|a\|,\|b\|\}
    \le
    \|a-b\|
    \le
    \|a-z\|+\|b-z\|
    \le
    2\max\{\|a-z\|,\|b-z\|\}.
\]
The last term is twice the distance to \(z\) of the agent chosen in \(B\)
from this pair. If both \(a,b\) belong to \(A\), then both are chosen in
\(B\), and
\[
    \|a\|+\|b\|
    \le
    2\|a-b\|
    \le
    2\bigl(\|a-z\|+\|b-z\|\bigr).
\]
Summing over all pairs, we obtain
\[
    \Top_{\ell}\!\left(D(\mathbf{x},m)\right)
    =
    \sum_{i\in A}\|x_i\|
    \le
    2\sum_{i\in B}\|x_i-z\|.
\]
Since \(|B|=\ell\), the sum over \(B\) is at most the sum of the \(\ell\)
largest entries of \(D(\mathbf{x},z)\). Hence
\[
    \Top_{\ell}\!\left(D(\mathbf{x},m)\right)
    \le
    2\,\Top_{\ell}\!\left(D(\mathbf{x},z)\right).
\]
\end{proof}

The extension from Top-\(\ell\) objectives to arbitrary symmetric monotone
norm objectives follows from Lemma~\ref{lem:hardy}.

\begin{theorem}\label{thm:cm-symmetric-monotone}
In the Euclidean plane, the coordinate-wise median mechanism is a
\(2\)-approximation under every symmetric monotone norm objective. That is,
for every symmetric monotone norm \(g:\mathbb R^n\to\mathbb R_+\) and every
profile \(\mathbf{x}\in(\mathbb R^2)^n\),
\[
    g\!\left(D(\mathbf{x},\CM(\mathbf{x}))\right)
    \le
    2\min_{z\in\mathbb R^2} g\!\left(D(\mathbf{x},z)\right).
\]
Moreover, the factor \(2\) is tight as a uniform guarantee over all symmetric
monotone norm objectives.
\end{theorem}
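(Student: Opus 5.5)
The upper bound follows by combining Lemma~\ref{lem:cm-topell-domination} with Lemma~\ref{lem:hardy}. Fix a profile \(\mathbf{x}\) and let \(z^\ast\) minimize \(g(D(\mathbf{x},z))\) over \(z\in\mathbb R^2\). A minimizer exists because \(z\mapsto g(D(\mathbf{x},z))\) is continuous and coercive: \(g\) is a norm, hence continuous, and each coordinate of the distance vector tends to infinity as \(\|z\|\to\infty\).

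Set \(\mathbf v=D(\mathbf{x},\CM(\mathbf{x}))\) and \(\mathbf u=D(\mathbf{x},z^\ast)\). Both vectors lie in \(\mathbb R_+^n\). Lemma~\ref{lem:cm-topell-domination} with comparison point \(z=z^\ast\) gives \(\Top_\ell(\mathbf v)\le 2\,\Top_\ell(\mathbf u)\) for every \(\ell\in[n]\). Lemma~\ref{lem:hardy} with \(\alpha=2\) then yields \(g(\mathbf v)\le 2g(\mathbf u)=2\,\OPT_g(\mathbf{x})\), which is the claimed inequality. The substantive work is already contained in Lemma~\ref{lem:cm-topell-domination}, so this direction is just bookkeeping.

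For tightness as a uniform guarantee, I would take \(g=\|\cdot\|_\infty\), which is symmetric and monotone. I would reuse the instance from the proof of Lemma~\ref{lem:cm-large-p}: \(k\) agents at \(u=(1,1)\) and \(k\) agents at the origin.

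\begin{itemize}
\item With the lower-median convention, the coordinate-wise median is the origin, so \(\SC_g=\|u\|=\sqrt2\).
\item The midpoint \(u/2\) gives maximum distance \(1/\sqrt2\).
\item Hence the ratio is at least \(2\).
\end{itemize}

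So no constant smaller than \(2\) can hold for all symmetric monotone norms. The main care needed is to phrase tightness correctly: the claim is not that every \(g\) attains ratio \(2\), since for \(L_1\) the ratio is only \(\sqrt2\), but that some \(g\) in the class does. The \(L_\infty\) example settles this, and letting \(p\to\infty\) in \(2^{1-1/p}\) also shows that the factor \(2\) is approached by \(L_p\) objectives alone.
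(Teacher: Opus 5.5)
Your proposal is correct and follows essentially the paper's proof: the upper bound is exactly Lemma~\ref{lem:cm-topell-domination} combined with Lemma~\ref{lem:hardy}, and tightness uses the $L_\infty$ objective on the distance vector. The only differences are minor: your coercivity argument that the minimum is attained spells out a detail the paper leaves implicit, and your tight instance differs slightly, since the paper puts $n-1$ agents at the origin and one at $u$ (a strict majority for $n\ge 3$, so no tie-breaking convention is needed) while your balanced $k$/$k$ instance relies on the lower-median rule and covers only even $n$, which still suffices for the uniform claim.
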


\begin{proof}
Let \(m=\CM(\mathbf{x})\), and let \(z\in\mathbb R^2\) be arbitrary. By
Lemma~\ref{lem:cm-topell-domination}, for every \(\ell\in[n]\),
\[
    \Top_{\ell}\!\left(D(\mathbf{x},m)\right)
    \le
    2\,\Top_{\ell}\!\left(D(\mathbf{x},z)\right).
\]
Applying Lemma~\ref{lem:hardy}, we obtain
\[
    g\!\left(D(\mathbf{x},m)\right)
    \le
    2g\!\left(D(\mathbf{x},z)\right).
\]
Since this holds for every comparison point \(z\in\mathbb R^2\), taking \(z\)
to be an optimal facility location for the \(g\)-objective proves the upper
bound.

For tightness, consider the \(L_\infty\)-norm objective on the distance
vector, which is a symmetric monotone norm objective. Take a profile with
\(n-1\) agents at the origin and one agent at a nonzero point \(u\). For
\(n\ge3\), CM outputs the origin. The mechanism's cost is \(\|u\|\), whereas
placing the facility at \(u/2\) has maximum distance \(\|u\|/2\). Hence the
ratio is \(2\), so the uniform factor \(2\) cannot be improved.
\end{proof}

We next consider the uniformly rotated coordinate-wise median mechanism.

\begin{theorem}\label{thm:urcm-symmetric-monotone}
In the Euclidean plane, the uniformly rotated coordinate-wise median mechanism
is a \(2\)-approximation under every symmetric monotone norm objective. That
is, for every symmetric monotone norm \(g:\mathbb R^n\to\mathbb R_+\) and every
profile \(\mathbf{x}\in(\mathbb R^2)^n\),
\[
    \mathbb E\left[
        g\left(D(\mathbf{x},\mathrm{URCM}(\mathbf{x}))\right)
    \right]
    \le
    2\min_{z\in\mathbb R^2}g\left(D(\mathbf{x},z)\right).
\]
Moreover, the factor \(2\) is tight as a uniform guarantee over all symmetric
monotone norm objectives.
\end{theorem}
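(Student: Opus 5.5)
The plan is to reduce to the deterministic result, Theorem~\ref{thm:cm-symmetric-monotone}, one rotation at a time, and then average over the random angle. Fix an angle $\theta$ and let $y_\theta$ be the realized output of $\mathrm{URCM}$. Let $R_\theta$ be the rotation sending $e_\theta,e_\theta^\perp$ to the standard axes. Then $y_\theta=R_\theta^{-1}\CM(R_\theta\mathbf{x})$, where $R_\theta\mathbf{x}$ denotes the rotated profile. Since rotations are Euclidean isometries, $D(\mathbf{x},y_\theta)=D(R_\theta\mathbf{x},\CM(R_\theta\mathbf{x}))$ as vectors, coordinate by coordinate. Likewise, for every $z$ we have $D(\mathbf{x},z)=D(R_\theta\mathbf{x},R_\theta z)$, so $\min_z g(D(\mathbf{x},z))=\min_z g(D(R_\theta\mathbf{x},z))$.

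Applying Theorem~\ref{thm:cm-symmetric-monotone} to the rotated profile then gives, for every $\theta$,
\[
g\bigl(D(\mathbf{x},y_\theta)\bigr)\le 2\min_{z\in\mathbb R^2} g\bigl(D(\mathbf{x},z)\bigr).
\]
The right-hand side does not depend on $\theta$. Taking expectation over $\Theta\sim\mathrm{Unif}[0,2\pi)$ yields the claimed bound. One point needs care: $\theta\mapsto g(D(\mathbf{x},y_\theta))$ must be measurable. It is piecewise continuous, because the medians change order only at finitely many angles, so this holds. Alternatively, the expectation can be read as an upper integral, since the integrand is bounded uniformly.

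For tightness I would use the $L_\infty$ objective, which is a symmetric monotone norm, together with the majority-at-the-origin instance from Lemma~\ref{lem:urcm-large-p}. Put $k$ agents at $u=(1,1)$ and $k+1$ agents at the origin. For every rotation the origin is a strict majority in both rotated coordinates, so $\mathrm{URCM}$ always outputs $0$ and has cost $\|u\|=\sqrt2$. The midpoint $u/2$ has maximum distance $1/\sqrt2$, so the ratio is exactly $2$. Alternatively, with $n\ge3$, one agent at $u$ and $n-1$ agents at the origin also works, since the origin is again the rotated coordinate-wise median for every $\theta$. So no uniform factor below $2$ holds.

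I expect no real obstacle here. The only step that needs verification is that the realized mechanism is exactly $\CM$ conjugated by a rotation. It is also worth noting that the optimum is rotation-invariant, so the deterministic per-$\theta$ bound is against the same $\OPT_g$ for every $\theta$.
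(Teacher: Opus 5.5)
Your proposal is correct and follows the paper's proof essentially verbatim. You conjugate $\mathrm{CM}$ by $R_\theta$, apply Theorem~\ref{thm:cm-symmetric-monotone} for each fixed $\theta$ using rotation-invariance of distances and of the optimum, average over $\Theta$, and get tightness from the $L_\infty$ objective on a profile where URCM always outputs the origin (the paper uses your alternative instance with $n-1$ agents at the origin and one at $u$; your majority instance also works). Your measurability remark is an extra point the paper omits, and it can be strengthened: $\theta\mapsto y_\theta$ is in fact continuous, since order statistics are continuous functions of their inputs.
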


\begin{proof}
For an angle \(\theta\in[0,2\pi)\), let \(R_\theta:\mathbb R^2\to\mathbb R^2\)
be the orthogonal map that writes each point in the rotated basis
\(\{e_\theta,e_\theta^\perp\}\), i.e.,
\[
    R_\theta x
    =
    \left(\langle x,e_\theta\rangle,\langle x,e_\theta^\perp\rangle\right).
\]
For the realization \(\Theta=\theta\), the output of \(\mathrm{URCM}\) is
\(R_\theta^{-1}\mathrm{CM}(R_\theta\mathbf{x})\), where
\(R_\theta\mathbf{x}=(R_\theta x_1,\ldots,R_\theta x_n)\).

Fix an arbitrary comparison point \(z\in\mathbb R^2\). Since \(R_\theta\) is an
isometry for the Euclidean norm, we have
\[
    D\!\left(\mathbf{x},
    R_\theta^{-1}\mathrm{CM}(R_\theta\mathbf{x})\right)
    =
    D\!\left(R_\theta\mathbf{x},\mathrm{CM}(R_\theta\mathbf{x})\right)
\]
and
\[
    D(\mathbf{x},z)
    =
    D(R_\theta\mathbf{x},R_\theta z).
\]
Therefore, by Theorem~\ref{thm:cm-symmetric-monotone}, for every fixed
\(\theta\),
\[
    g\!\left(
        D\!\left(\mathbf{x},
        R_\theta^{-1}\mathrm{CM}(R_\theta\mathbf{x})\right)
    \right)
    \le
    2g\!\left(D(\mathbf{x},z)\right).
\]
Taking expectation over \(\Theta\) and then taking \(z\) to be an optimal
facility location for the \(g\)-objective gives the desired upper bound.

The tightness follows from the same \(L_\infty\)-objective instance as in the
proof of Theorem~\ref{thm:cm-symmetric-monotone}. Let \(u\in\mathbb R^2\) be
nonzero, and consider the profile with \(n-1\) agents at the origin and one
agent at \(u\). For every rotation angle \(\theta\), the rotated profile has
\(n-1\) agents at the origin and one agent at \(R_\theta u\), so for
\(n\ge3\) the coordinate-wise median in the rotated coordinate system is the
origin. Hence \(\mathrm{URCM}\) always outputs the origin. Under the
\(L_\infty\)-norm objective on the distance vector, the mechanism has cost
\(\|u\|\), whereas the optimal facility location \(u/2\) has cost
\(\|u\|/2\). Thus the ratio is \(2\).
\end{proof}

For the next result, we use a more general location space. Let \(V\) be a
real vector space. A function \(\|\cdot\|_V:V\to\mathbb R_+\) is a norm if
for all \(u,v\in V\) and all \(\lambda\in\mathbb R\), it satisfies
\[
    \|v\|_V=0 \text{ if and only if } v=0,\qquad
    \|\lambda v\|_V=|\lambda|\,\|v\|_V,\qquad
    \|u+v\|_V\le \|u\|_V+\|v\|_V .
\]
We call \((V,\|\cdot\|_V)\) a \emph{real normed vector space}. It induces the
metric \(d_V(x,y)=\|x-y\|_V\). No finite-dimensionality or inner-product
structure is assumed. For a profile \(\mathbf{x}\in V^n\) and a facility
location \(y\in V\), write
\[
    D_V(\mathbf{x},y)
    =
    \bigl(\|x_1-y\|_V,\ldots,\|x_n-y\|_V\bigr).
\]
For a symmetric monotone norm \(g:\mathbb R^n\to\mathbb R_+\), define
\[
    \SC_{g,V}(\mathbf{x},y)
    =
    g(D_V(\mathbf{x},y)),
    \qquad
    \OPT_{g,V}(\mathbf{x})
    =
    \inf_{z\in V} g(D_V(\mathbf{x},z)).
\]
We use an infimum rather than a minimum because an optimal facility location
need not exist in an arbitrary normed vector space.

The centroid random dictatorship mechanism is still well-defined in this
setting: given \(\mathbf{x}\in V^n\), it returns the arithmetic centroid
\(\bar x=\frac1n\sum_i x_i\) with probability \(\frac12\), and returns each
agent location \(x_j\) with probability \(\frac{1}{2n}\).

\begin{lemma}\label{lem:average-vector-domination}
Let \(\mathbf d=(d_1,\ldots,d_n)\in\mathbb R_+^n\), and let
\(S=\sum_i d_i\). Then, for every symmetric monotone norm
\(g:\mathbb R^n\to\mathbb R_+\),
\[
    \frac{S}{n}g(\mathbf 1)\le g(\mathbf d),
\]
where \(\mathbf 1=(1,\ldots,1)\).
\end{lemma}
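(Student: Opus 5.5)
The plan is a symmetrization argument that uses only convexity and permutation invariance of \(g\). Let \(\sigma\) range over the \(n\) cyclic shifts of the coordinates, so that \(\sigma^k\mathbf d=(d_{1+k},\ldots,d_{n+k})\) with indices taken mod \(n\). Since \(g\) is symmetric, \(g(\sigma^k\mathbf d)=g(\mathbf d)\) for every \(k\).

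The key observation is that the average of these cyclic shifts is constant in every coordinate. Indeed,
\[
    \frac1n\sum_{k=0}^{n-1}\sigma^k\mathbf d=\frac{S}{n}\mathbf 1,
\]
because each coordinate receives every \(d_i\) exactly once. (Averaging over all permutations would work equally well.)

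Next, apply the triangle inequality and positive homogeneity of the norm \(g\):
\[
    \frac{S}{n}\,g(\mathbf 1)
    =g\!\left(\frac1n\sum_{k}\sigma^k\mathbf d\right)
    \le\frac1n\sum_k g(\sigma^k\mathbf d)
    =g(\mathbf d).
\]
Here the left equality uses homogeneity together with \(S/n\ge0\). This chain is exactly the claimed inequality.

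Monotonicity is not actually needed for this step, and nonnegativity of \(\mathbf d\) is used only to write \(|S/n|=S/n\). There is no real obstacle. The only point requiring care is checking that symmetry as defined in the paper (invariance under coordinate permutations and sign changes) covers the cyclic shifts, which it does since shifts are permutations.
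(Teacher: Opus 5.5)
Your proof is correct, but it takes a different route from the paper. The paper checks the Top-$\ell$ inequalities $\Top_\ell\bigl(\frac{S}{n}\mathbf 1\bigr)=\frac{\ell S}{n}\le \Top_\ell(\mathbf d)$ for every $\ell\in[n]$, using that the $\ell$ largest entries are at least $\ell$ times the average. It then invokes the Hardy--Littlewood--P\'olya majorization result, Lemma~\ref{lem:hardy}, followed by homogeneity.

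You instead exhibit the majorization explicitly: $\frac{S}{n}\mathbf 1$ is the average of the $n$ cyclic shifts of $\mathbf d$. Equivalently, $\frac1n J$, with $J$ the all-ones matrix, is an average of permutation matrices. Subadditivity, homogeneity, and permutation invariance of $g$ then close the argument in one line. In effect you have inlined the proof of the HLP lemma in the special case where the dominated vector is the uniform one.

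Your version is fully self-contained. As you observe, it uses neither monotonicity nor sign invariance of $g$, so it holds for every permutation-invariant norm. The paper's version costs only one line because Lemma~\ref{lem:hardy} is already needed elsewhere in Section~\ref{sec:general-norm}. For example, in Theorem~\ref{thm:cm-symmetric-monotone} only Top-$\ell$ inequalities with a factor $2$ are available and there is no explicit averaging representation, so reusing the lemma keeps the section uniform.
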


\begin{proof}
For every \(\ell\in[n]\), the sum of the largest \(\ell\) entries of
\(\mathbf d\) is at least \(\ell\) times the average entry. Hence
\[
    \Top_\ell\left(\frac{S}{n}\mathbf 1\right)
    =
    \frac{\ell S}{n}
    \le
    \Top_\ell(\mathbf d).
\]
By Lemma~\ref{lem:hardy}, we have
\(g(\frac{S}{n}\mathbf 1)\le g(\mathbf d)\), which is exactly the desired
inequality by homogeneity of \(g\).
\end{proof}

\begin{theorem}\label{thm:crd-normed-symmetric-monotone}
Let \((V,\|\cdot\|_V)\) be any real normed vector space. The centroid random
dictatorship mechanism is strategyproof in \(V\). Moreover, for every
symmetric monotone norm \(g:\mathbb R^n\to\mathbb R_+\) and every profile
\(\mathbf{x}\in V^n\),
\[
    \mathbb E\left[
        g\left(D_V(\mathbf{x},\mathrm{CRD}(\mathbf{x}))\right)
    \right]
    \le
    \left(2-\frac1n\right)\OPT_{g,V}(\mathbf{x}).
\]
If \(V\) contains a nonzero vector, then the factor \(2-\frac1n\) is tight for
every \(n\ge2\).
\end{theorem}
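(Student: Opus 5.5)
Fix an arbitrary comparison point \(z\in V\) and translate so that \(z=0\); write \(\mathbf d=(\|x_1\|_V,\ldots,\|x_n\|_V)\), so the goal becomes \(\mathbb E[g(D_V(\mathbf x,\mathrm{CRD}(\mathbf x)))]\le(2-\frac1n)g(\mathbf d)\). Taking the infimum over \(z\) then gives the bound against \(\OPT_{g,V}\). By linearity of expectation the left side equals \(\frac12 g(D_V(\mathbf x,\bar x))+\frac1{2n}\sum_j g(D_V(\mathbf x,x_j))\), and I would bound the two parts separately using only the triangle inequality in \(V\), monotonicity and the triangle inequality of \(g\), and Lemma~\ref{lem:average-vector-domination}.

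For the dictatorship part, \(\|x_i-x_j\|_V\le \|x_i\|_V+\|x_j\|_V\), with the \(i=j\) entry equal to \(0\). By monotonicity and the triangle inequality of \(g\), \(g(D_V(\mathbf x,x_j))\le g(\mathbf d)+\|x_j\|_V\,g(\mathbf 1-e_j)\le g(\mathbf d)+\|x_j\|_V\,g(\mathbf 1)\). Averaging over \(j\) gives \(g(\mathbf d)+\frac{S}{n}g(\mathbf 1)\), where \(S=\sum_i d_i\). For the centroid part, \(\|x_i-\bar x\|_V\le \frac1n\sum_{j\ne i}(\|x_i\|_V+\|x_j\|_V)=\frac{n-2}{n}d_i+\frac{S}{n}\). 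Monotonicity and the triangle inequality of \(g\) then give \(g(D_V(\mathbf x,\bar x))\le \frac{n-2}{n}g(\mathbf d)+\frac Sn g(\mathbf 1)\). Summing with weights \(\frac12,\frac12\) yields \((1-\frac1n)g(\mathbf d)+\frac Sn g(\mathbf 1)\). Lemma~\ref{lem:average-vector-domination} gives \(\frac Sn g(\mathbf 1)\le g(\mathbf d)\), so the total is at most \((2-\frac1n)g(\mathbf d)\). The main point to check is that the \(i=j\) zero entries do not hurt; replacing \(\mathbf 1-e_j\) by \(\mathbf 1\) only loosens the bound, so this is fine.

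For strategyproofness I would reprove the argument of Tang et al.\ in a general normed space. Let agent \(i\) report \(x_i'\), and set \(c'=\bar x+\frac{1}{n}(x_i'-x_i)\). The expected cost of agent \(i\) is \(\frac12\|x_i-c'\|_V+\frac1{2n}\|x_i-x_i'\|_V+\frac1{2n}\sum_{j\ne i}\|x_i-x_j\|_V\). The last sum does not depend on the report. By the triangle inequality, \(\|x_i-\bar x\|_V\le \|x_i-c'\|_V+\frac1n\|x_i'-x_i\|_V\). Hence the sum of the first two terms under a misreport is at least its truthful value \(\frac12\|x_i-\bar x\|_V\), and truthful reporting is optimal.

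For tightness, take a nonzero \(u\in V\), with \(n-1\) agents at \(0\) and one agent at \(u\), and let \(g\) be the \(L_\infty\) norm. Then \(\OPT\le\|u\|_V/2\), attained at \(u/2\). The centroid \(u/n\) has cost \(\max(\frac1n,1-\frac1n)\|u\|_V=(1-\frac1n)\|u\|_V\) for \(n\ge2\). Every dictator choice has cost \(\|u\|_V\). The expected cost is therefore \(\frac12(2-\frac1n)\|u\|_V\), giving ratio \(2-\frac1n\), as in Lemma~\ref{lem:crd-tight-endpoints}.
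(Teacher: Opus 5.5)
Your proposal is correct and follows the paper's proof essentially step for step: the same triangle-inequality argument for strategyproofness, the same coordinatewise dominations $D_V(\mathbf{x},\bar x)\le\frac{n-2}{n}\mathbf d+\frac{S}{n}\mathbf 1$ and $D_V(\mathbf{x},x_j)\le\mathbf d+d_j\mathbf 1$ closed off by Lemma~\ref{lem:average-vector-domination}, and an $L_\infty$ tight instance that is the mirror image (under $x\mapsto u-x$) of the paper's. For tightness, your one-sided bound $\OPT_{g,V}(\mathbf{x})\le\frac12\|u\|_V$ already suffices, because together with the proven upper bound it forces the ratio to equal $2-\frac1n$; the paper's extra argument that the optimum equals $\frac12\|u\|_V$ is not needed.
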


\begin{proof}
The case \(n=1\) is trivial, so assume \(n\ge2\).

We first prove strategyproofness. Fix an agent \(i\), its true location
\(x_i\in V\), and the reports of all other agents. Let
\(\bar x=\frac1n\sum_j x_j\) be the truthful centroid. Suppose agent \(i\)
misreports \(x_i'=x_i+\delta\), where \(\delta\in V\). The new centroid is
\(\bar x'=\bar x+\frac1n\delta\). The only parts of the expected cost of
agent \(i\) affected by this misreport are the centroid outcome and the event
that agent \(i\)'s own reported location is selected as the dictator. Hence it
suffices to show
\[
    \frac12\|x_i-\bar x\|_V
    \le
    \frac12\|x_i-\bar x'\|_V
    +
    \frac{1}{2n}\|x_i-x_i'\|_V .
\]
Since \(x_i-\bar x=(x_i-\bar x')+\frac1n\delta\), the triangle inequality and
homogeneity of the norm give
\[
    \|x_i-\bar x\|_V
    \le
    \|x_i-\bar x'\|_V+\frac1n\|\delta\|_V
    =
    \|x_i-\bar x'\|_V+\frac1n\|x_i-x_i'\|_V.
\]
Multiplying by \(\frac12\) gives the desired inequality. Thus CRD is
strategyproof.

We now prove the approximation guarantee. Fix an arbitrary comparison point
\(z\in V\). Let
\[
    d_i=\|x_i-z\|_V,\qquad
    \mathbf d=(d_1,\ldots,d_n),\qquad
    S=\sum_i d_i.
\]
We separately bound the centroid part and the random-dictatorship part.

Let \(\bar x=\frac1n\sum_j x_j\) be the centroid. For every agent \(i\),
\[
    x_i-\bar x
    =
    x_i-\frac1n\sum_{j=1}^n x_j
    =
    \frac1n\sum_{j\ne i}(x_i-x_j).
\]
Hence, by the triangle inequality and homogeneity in \(V\),
\[
    \|x_i-\bar x\|_V
    \le
    \frac1n\sum_{j\ne i}\|x_i-x_j\|_V.
\]
For each \(j\ne i\), the triangle inequality with the comparison point \(z\)
gives
\[
    \|x_i-x_j\|_V
    \le
    \|x_i-z\|_V+\|x_j-z\|_V
    =
    d_i+d_j.
\]
Therefore,
\[
    \|x_i-\bar x\|_V
    \le
    \frac1n\sum_{j\ne i}(d_i+d_j)
    =
    \frac1n\left((n-1)d_i+\sum_{j\ne i}d_j\right).
\]
Since \(S=\sum_j d_j\), we have \(\sum_{j\ne i}d_j=S-d_i\). Thus
\[
    \|x_i-\bar x\|_V
    \le
    \frac1n\left((n-1)d_i+S-d_i\right)
    =
    \frac{n-2}{n}d_i+\frac{S}{n}.
\]
Thus, coordinatewise,
\[
    D_V(\mathbf{x},\bar x)
    \le
    \frac{n-2}{n}\mathbf d+\frac{S}{n}\mathbf 1.
\]
By monotonicity, the triangle inequality for \(g\), homogeneity of \(g\), and
Lemma~\ref{lem:average-vector-domination},
\[
    g(D_V(\mathbf{x},\bar x))
    \le
    \frac{n-2}{n}g(\mathbf d)+\frac{S}{n}g(\mathbf 1)
    \le
    \left(2-\frac2n\right)g(\mathbf d).
\]

Next consider the random-dictatorship part. If the dictator is agent \(j\),
then for every \(i\),
\[
    \|x_i-x_j\|_V
    \le
    \|x_i-z\|_V+\|x_j-z\|_V
    =
    d_i+d_j.
\]
Hence
\[
    D_V(\mathbf{x},x_j)\le \mathbf d+d_j\mathbf 1
\]
coordinatewise. Therefore,
\[
    g(D_V(\mathbf{x},x_j))
    \le
    g(\mathbf d)+d_jg(\mathbf 1).
\]
Averaging over the uniformly chosen dictator \(j\) and applying
Lemma~\ref{lem:average-vector-domination} again gives
\[
    \frac1n\sum_{j=1}^n g(D_V(\mathbf{x},x_j))
    \le
    g(\mathbf d)+\frac{S}{n}g(\mathbf 1)
    \le
    2g(\mathbf d).
\]

Since \(\mathrm{CRD}\) chooses the centroid with probability \(\frac12\) and
chooses each dictator with total probability \(\frac12\), we obtain
\[
    \mathbb E\left[
        g(D_V(\mathbf{x},\mathrm{CRD}(\mathbf{x})))
    \right]
    \le
    \frac12\left(2-\frac2n\right)g(\mathbf d)
    +
    \frac12\cdot 2g(\mathbf d)
    =
    \left(2-\frac1n\right)g(\mathbf d).
\]
Since \(z\in V\) was arbitrary, taking the infimum over \(z\) proves the
upper bound.

It remains to prove tightness. Assume that \(V\) contains a nonzero vector
\(u\). Consider the profile with one agent at \(0\in V\) and \(n-1\) agents at
\(u\). Let \(g\) be the \(L_\infty\)-norm on the distance vector. For any
facility location \(y\in V\), the triangle inequality gives
\[
    \|u\|_V
    =
    \|u-y+y\|_V
    \le
    \|u-y\|_V+\|y\|_V
    \le
    2\max\{\|y\|_V,\|u-y\|_V\}.
\]
Thus the optimal \(L_\infty\)-cost is at least \(\frac12\|u\|_V\). This value
is achieved at \(y=\frac12 u\), so \(\OPT_{g,V}(\mathbf{x})=\frac12\|u\|_V\).

The centroid is \(\bar x=\frac{n-1}{n}u\), whose maximum distance to the
agents is \(\frac{n-1}{n}\|u\|_V\). Every dictator location has maximum
distance \(\|u\|_V\). Hence the expected maximum cost of \(\mathrm{CRD}\) is
\[
    \frac12\cdot\frac{n-1}{n}\|u\|_V
    +
    \frac12\cdot \|u\|_V
    =
    \left(1-\frac{1}{2n}\right)\|u\|_V.
\]
Dividing by \(\frac12\|u\|_V\) gives \(2-\frac1n\). Since \(L_\infty\) is a
symmetric monotone norm, the factor is tight.
\end{proof}

\section{Conclusion}\label{sec:conclusion}

We study Euclidean facility location problems under the $L_p$-norm social cost from the perspective of approximate mechanism design without money. 
From this perspective, we investigate the optimal approximation ratios achievable by strategyproof deterministic and randomized mechanisms for general $p$. 
We begin our study with the deterministic coordinate-wise median mechanism, analyzed by Goel and Hann-Caruthers \cite{GoelH23}, who showed that its approximation ratio lies between \(2^{1-\frac{1}{p}}\) and \(2^{\frac{3}{2}-\frac{2}{p}}\) and conjectured that its approximation ratio is \(2^{1-\frac{1}{p}}\) for \(p\ge 2\). 
We address the conjecture and strengthen their results by providing a tight characterization of its approximation ratio for all 
\(p\ge 1\): it is \(\sqrt{2}\) for \(1\le p\le 2\), equals
\(2^{1-\frac{1}{p}}\) for \(2\le p<+\infty\), and becomes \(2\) for
\(p=+\infty\). 
%studied strategyproof facility location in the Euclidean plane under the $L_p$-norm social cost. 
%For the deterministic coordinate-wise median mechanism,
%we gave a tight characterization of its approximation ratio for all
%\(p\ge 1\): it is \(\sqrt{2}\) for \(1\le p\le 2\), equals
%\(2^{1-\frac{1}{p}}\) for \(2\le p<+\infty\), and becomes \(2\) for
%\(p=+\infty\). 
%This closes the gap left by Goel and Hann-Caruthers
%\cite{GoelH23} for \(p\ge 2\). 

We then analyze two randomized strategyproof mechanisms that achieve better approximation ratios than the CM mechanism for various values of $p$. 
We first consider the uniformly rotated coordinate-wise median mechanism, an idea proposed in \cite{GoelH23} and studied independently by \cite{barak2026facility} for $p=1$. 
We show that this mechanism improves upon the deterministic coordinate-wise median mechanism for \(1\le p<2\), yielding in particular a \(\frac{4}{\pi}\)-approximation for the standard total cost as shown in \cite{barak2026facility}. 
We then consider the centroid random dictatorship mechanism, examined by Feldman and Wilf \cite{feldman2013strategyproof} and Tang et al. \cite{tang2020characterization} for locating a facility in the line metric under the $L_2$ social cost and multi-dimensional normed vector spaces under the maximum cost, respectively. 
We show that this mechanism obtains a better approximation ratio for every finite \(p\gtrsim 1.6\) than the coordinate-wise median mechanisms with or without rotations. 
%of Tang
%et al. \cite{tang2020characterization}, originally studied for maximum cost,
%was extended to all \(p\ge 1\), and gives stronger guarantees near \(p=2\)
%and throughout the range \(p>2\). 
Together, these results demonstrate the extent to which the performance of deterministic and randomized strategyproof mechanisms changes/improves across the \(L_p\) spectrum for any $p$.

We further show that the robustness of these mechanisms is not limited to
\(L_p\)-norm social costs. For every symmetric monotone norm objective, both
CM and URCM admit a uniform \(2\)-approximation guarantee in the Euclidean
plane. Moreover, the CRD guarantee extends to arbitrary real normed vector
spaces, where it achieves the tight approximation ratio \(2-\frac1n\).

Several questions remain open. 
First, we currently do not have general lower bounds on the approximation ratios of %\greencomment{deterministic non-anonymous strategyproof mechanisms and} 
randomized strategyproof mechanisms under the \(L_p\)-norm social cost. 
Such lower bounds are needed to determine whether the guarantees of the examined strategyproof mechanisms are close to optimal or whether substantially better strategyproof mechanisms exist. 
Second, the upper bounds on the approximation ratios of the 
two randomized strategyproof mechanisms are not always known to be tight. 
It would be interesting either to improve these analyses or to identify tight instances, especially in the intermediate regime \(1<p<2\) for the uniformly rotated coordinate-wise median mechanism and for finite \(p\) in the centroid random dictatorship mechanism. 
Finally, our work focuses on the Euclidean plane. %space. 
An important direction is to study analogous questions in higher-dimensional or non-Euclidean spaces, such as facilities and agents embedded in $q$-normed spaces, and to understand how the geometry of the underlying metric affects both strategyproofness and approximation ratios.

\bibliographystyle{plain}
\bibliography{myreferences}

\end{document}